\newcommand{\eps}{\epsilon}
\newcommand{\lmi}{\mathsf{Linear \ Matroid \ Intersection}}
\newcommand{\emphdef}[1]{{\sf {#1}}}
\newcommand\restr[2]{{% we make the whole thing an ordinary symbol
  \left.\kern-\nulldelimiterspace % automatically resize the bar with \right
  #1 % the function
  \vphantom{\big|} % pretend it's a little taller at normal size
  \right|_{#2} % this is the delimiter
  }}
\newtheorem{theorem}{Theorem}[section]
\newtheorem{corollary}[theorem]{Corollary}
\newtheorem{lemma}[theorem]{Lemma}
\newtheorem{observation}[theorem]{Observation}
\newtheorem*{theorem*}{Theorem}
\newtheorem*{corollary*}{Corollary}
\newtheorem*{conjecture*}{Conjecture}
\newtheorem*{lemma*}{Lemma}
\newtheorem*{thm*}{Theorem}
\newtheorem*{prop*}{Proposition}
\newtheorem*{obs*}{Observation}
\newtheorem*{remark*}{Remark}
\newtheorem*{rec*}{Recommendation}
\newtheorem{definition}[theorem]{Definition}
\newtheorem*{definition*}{Definition}
\newenvironment{fminipage}%
  {\begin{Sbox}\begin{minipage}}%
  {\end{minipage}\end{Sbox}\fbox{\TheSbox}}
\newcommand{\poly}{\mathrm{poly}}
\DeclareMathOperator*{\argmin}{arg\,min}
\DeclareMathOperator*{\rank}{rank}
\newcommand{\newclass}[2]{\newcommand{#1}{{\text{\upshape\sffamily #2}}\xspace}}
\renewcommand{\P}{{\text{\upshape\sffamily P}}\xspace}
\newclass{\NP}{NP}
\newclass{\ZPP}{ZPP}
\newclass{\coNP}{coNP}
\newclass{\BPP}{BPP}
\newclass{\Logspace}{L}
\newclass{\NL}{NL}
\newclass{\coNL}{coNL}
\newclass{\UL}{UL}
\newclass{\coUL}{coUL}
\newclass{\BPL}{BPL}
\newclass{\PL}{PL}
\newclass{\prBPL}{prBPL}
\newclass{\PSPACE}{PSPACE}
\newclass{\EXP}{EXP}
\newclass{\EXPSPACE}{EXPSPACE}
\newclass{\TIME}{TIME}
\newclass{\SPACE}{SPACE}
\newclass{\NSPACE}{NSPACE}
\newclass{\SC}{SC}
\newclass{\coNSPACE}{coNSPACE}
\newclass{\BPSPACE}{BPSPACE}
\newclass{\TFNP}{TFNP}
\newclass{\NC}{NC}
\newclass{\NCo}{NC$^1$}
\newclass{\ACz}{AC$^0$}
\newclass{\ACo}{AC$^1$}
\newclass{\SACo}{SAC$^1$}
\newclass{\TC}{TC}
\newclass{\TCz}{TC$^0$}
\newclass{\TCo}{TC$^1$}
\newclass{\NCt}{NC$^2$}
\newclass{\RNC}{RNC}
\newclass{\PSDNC}{pseudo-deterministic NC}
\newclass{\NUSPL}{non-uniform SPL}
\newclass{\RNCt}{RNC$^2$}
\newclass{\RNCtt}{RNC$^3$}
\newclass{\RNCo}{RNC$^1$}
\newclass{\QNC}{Quasi-NC}
\newclass{\VP}{VP}
\newclass{\CL}{CL}
\newclass{\CLP}{CLP}
\newclass{\CSPACE}{CSPACE}
\newclass{\GapL}{GapL}
\newclass{\MATCH}{MATCH}
\newclass{\DET}{DET}
\newclass{\LOSSY}{LOSSY}
\newclass{\LOSSYNC}{LOSSY[NC]}
\newclass{\LOSSYC}{LOSSY[$\mathcal{C}$]}
\newclass{\ZPC}{ZP-$\mathcal{C}$}
\newclass{\ZPNC}{ZPNC}
\newclass{\Comp}{Comp}
\newclass{\Decomp}{Decomp}
\newcommand{\I}{\mathbb{I}}
\newcommand{\Ind}[1]{\I\left[#1\right]}
\newif\ifblind
\title{Linear Matroid Intersection is in Catalytic Logspace}
\author{}
\author{Aryan Agarwala\\Max-Planck-Institut f\"{u}r Informatik \\ \texttt{aryan@agarwalas.in} \and Yaroslav Alekseev\thanks{Supported by ISF grant 507/24.} \\ Technion Israel Institute of Technology \\ \texttt{tolstreg@gmail.com}  \and
Antoine Vinciguerra\footnotemark[1] \\ Technion Israel Institute of Technology \\ \texttt{antoine.v@campus.technion.ac.il}}
\date{}
\begin{document}

\setlength{\abovedisplayskip}{5pt}
\setlength{\belowdisplayskip}{5pt}

\pagenumbering{gobble}

\maketitle

\begin{abstract}
\noindent
Linear matroid intersection is an important problem in combinatorial optimization. Given two linear matroids over the same ground set, the linear matroid intersection problem asks you to find a common independent set of maximum size. The deep interest in linear matroid intersection is due to the fact that it generalises many classical problems in theoretical computer science, such as bipartite matching, edge disjoint spanning trees, rainbow spanning tree, and many more. \\

\noindent
We study this problem in the model of catalytic computation: space-bounded machines are granted access to \textit{catalytic space},
which is additional working memory that is full with arbitrary data that
must be preserved at the end of its computation. \\

\noindent
Although linear matroid intersection has had a polynomial time algorithm for over 50 years, it remains an important open problem to show that linear matroid intersection belongs to any well studied subclass of $\P$. We address this problem for the class catalytic logspace ($\CL$) with a polynomial time bound ($\CLP$). \\

\noindent
Recently, Agarwala and Mertz (2025) showed that bipartite maximum matching can be computed in the class $\CLP\subseteq \P$. This was the first subclass of $\P$ shown to contain bipartite matching, and additionally the first problem outside $\TC^1$ shown to be contained in $\CL$. We significantly improve the result of Agarwala and Mertz by showing that linear matroid intersection can be computed in $\CLP$.
\end{abstract}

% \input{todo}

%\pagebreak
%\tableofcontents
\pagebreak

\pagenumbering{arabic}

%% Your real content!
%%\input{main_content/intro}
%%\input{main_content/prelims}
%%\input{main_content/ians-secret-reorganization-draft}
%%\input{main_content/compression}
%%\input{main_content/final_algorithm}
%%\input{main_content/related_problems}
\section{Introduction}
\subsection{Catalytic Computing}

Catalytic computation was introduced by Buhrman et al. ~\cite{BuhrmanCleveKouckyLoffSpeelman14} in order to study the power of used space. In this model, a space-bounded Turing machine is augmented with an additional read-write tape, known as the \textit{catalytic tape}. The catalytic tape is initialized adversarially with some arbitrary content $\tau$. The Turing machine may use this tape freely, with the requirement that upon termination the catalytic tape must be reset to its original state $\tau$.\\

\noindent
$\CL$ is the class of problems that can be solved by a catalytic machine with a work tape of size $O(\log n)$ and a catalytic tape of size $\poly(n)$. $\CLP$ is the class formed by the additional restriction that the machine must run in polynomial time. \\

\noindent
Although it was earlier informally conjectured~\cite{CookMckenzieWehrBravermanSanthanam12} that used space could not provide additional computational power, Buhrman et al.~\cite{BuhrmanCleveKouckyLoffSpeelman14} showed the surprising result that $\CLP$ is likely much stronger than $\Logspace$: \\
$$\Logspace \subseteq \NL \subseteq \TCo \subseteq \CLP \subseteq \CL \subseteq \mathsf{LOSSY} \subseteq  \ZPP$$

\noindent
Following the work of~\cite{BuhrmanCleveKouckyLoffSpeelman14}, catalytic computation has been a subject of growing interest, and many variants of the model have been studied, including non-deterministic and randomized ~\cite{BuhrmanKouckyLoffSpeelman18,DattaGuptaJainSharmaTewari20,CookLiMertzPyne25,KouckyMertzPyneSami25}, non-uniform~\cite{Potechin17,RobereZuiddam21,CookMertz22,CookMertz24}, error-prone~\cite{GuptaJainSharmaTewari24,FolkertsmaMertzSpeelmanTupker25}, communication~\cite{PyneSheffieldWang25}, and many more~\cite{GuptaJainSharmaTewari19,BisoyiDineshSarma22,BisoyiDineshRaiSarma24, BuhrmanFolkertsmaMertzSpeelmanStrelchukSubramanianTupker2025} (see surveys by Kouck\'{y}~\cite{Koucky16} and Mertz~\cite{Mertz23}). This interest in catalytic computation culminated in space efficient tree evaluation algorithms by Cook and Mertz~\cite{CookMertz20, CookMertz21, CookMertz22, CookMertz24}, which recently led to the breakthrough result $\TIME(t)\subseteq\SPACE(\sqrt{t\log{t}})$ by Ryan Williams~\cite{Williams25}.\\

\noindent
Despite this long line of work, however, the exact strength of catalytic computation remains unclear. Of particular interest is the relationship between $\CL$ and the $\NC$ hierarchy. Buhrman et al.~\cite{BuhrmanCleveKouckyLoffSpeelman14} showed that $\TCo \subseteq \CL$, so the two natural questions which follow are:
\begin{enumerate}
    \item Is $\NC^2 \subseteq \CL$? That is, can the $\TCo$ inclusion of \cite{BuhrmanCleveKouckyLoffSpeelman14} be strengthened?
    \item Is $\CL \subseteq \NC$? That is, can $\CL$ shown to be contained in the $\NC$ (or equivalently $\TC$) hierarchy? 
\end{enumerate}
On the first problem, Alekseev et al.~\cite{AlekseevFilmusMertzSmalVinciguerra2025} recently made progress by showing that $\mathsf{SAC^2}$ can be solved with $O(\log^2n/\log\log n)$ free space and $2^{O(\log^{1 + \eps}n)}$ catalytic space. On the second problem, Agarwala and Mertz~\cite{AgarwalaMertz25} recently presented a barrier by showing that bipartite matching, which is currently incomparable to the $\NC$ hierarchy, is contained in $\CLP$. This was the first new problem shown to lie in $\CL$ since the decade old result $\TCo \subseteq \CLP$~\cite{BuhrmanCleveKouckyLoffSpeelman14}. A natural open problem posed in~\cite{AgarwalaMertz25} is to extend their framework to solve harder problems in $\CLP$. One such problem is linear matroid intersection. 

%In~\cite{AgarwalaMertz25}, Agarwala and Mertz proved that $\CL$ contains even greater classes containing Maximal Bipartite Matching.  One of them is the Matroid Intersection Problem, the central problem we study in our work.

\subsection{Linear Matroid Intersection}
A \emph{matroid}, defined by Whitney~\cite{Whitney35}, is a set-independence structure which naturally arises in many combinatorial optimization problems. Formally, a matroid is a pair $M=(S,\mathcal{I})$, where $S$ is some finite set and $\mathcal{I}\subseteq 2^S$ is a collection of subsets of $S$ called independent sets. The independent sets are required to satisfy three properties: the empty set is independent, the independent sets are downward closed, and the augmentation property. See the preliminaries for a formal definition. \\

\noindent
In the \emph{matroid intersection} problem, one is given two matroids over the same ground set, say $M_1 = (S, \mathcal{I}_1)$ and $M_2 = (S, \mathcal{I}_2)$. The goal is to find $I \in \mathcal{I}_1 \cap \mathcal{I}_2$ of maximum size. This problem is inherently challenging because while $M_1$ and $M_2$ are matroids, their intersection $(S, \mathcal{I}_1 \cap \mathcal{I}_2)$ may not be. In this paper we work exclusively with a well studied class of matroids known as \emph{linear matroids}. \\

\noindent
A linear matroid $M = (S, \mathcal{I})$ is one where the elements are a subset of a vector space, i.e, $S \subseteq \mathbb{F}^m$, and the independent sets $\mathcal{I}$ are exactly those subsets of $S$ which are linearly independent over $\mathbb{F}^m$. The $\lmi$ problem is matroid intersection where the input matroids $M_1$ and $M_2$ are linear matroids.\\

\noindent
Many important problems in combinatorial optimization are special cases of linear matroid intersection. For example:
\begin{itemize}
    \item 
    \textbf{Bipartite maximum matching:} Given a bipartite graph $G$, output a matching of maximum size.
    \item
    \textbf{Rainbow spanning tree:} Given an edge coloured graph $G$, output a spanning tree consisting of distinctly coloured edges. 
    \item 
    \textbf{Edge disjoint spanning trees:} Given a graph $G$, output two edge disjoint spanning trees.
\end{itemize}

\noindent
Bipartite matching, in particular, is closely related to linear matroid intersection. As mentioned above, bipartite matching is a special case of linear matroid intersection. On the other hand, algorithms for bipartite matching tend to influence algorithms for linear matroid intersection. The augmenting paths framework for bipartite matching~\cite{Kuhn55} led to polynomial time algorithms for linear matroid intersection~\cite{AignerDowling1971, Lawler1975}, the isolation lemma framework for bipartite matching~\cite{MulmuleyVaziraniVazirani87} led to an $\RNC$ algorithm for linear matroid intersection~\cite{NarayananSaranVazirani94}, and the $\QNC$ algorithm for bipartite matching~\cite{FennerGurjarThierauf16} led to a $\QNC$ algorithm for linear matroid intersection~\cite{GurjarThierauf17}. \\

\noindent
Recently, Agarwala and Mertz showed that bipartite maximum matching is in $\CLP$~\cite{AgarwalaMertz25}. It is then a natural question to ask whether their techniques can be extended to work for linear matroid intersection.

\subsection{Isolation Lemma}
\noindent
An key part of this paper is the celebrated \emph{isolation lemma}. Let $S$ be a ground set and $\mathcal{I} \subseteq 2^S$ be \textit{any} collection of subsets of $S$. The isolation lemma states that if one assigns polynomially-bounded integer weights uniformly and independently at random to each element of $S$, then the minimum weight set $I \in \mathcal{I}$ will be unique with high probability.\\

\noindent
Introduced by Mulmuley, Vazirani, and Vazirani in~\cite{MulmuleyVaziraniVazirani87}, the isolation lemma has since become a key tool in the design of randomised algorithms for many classical problems~\cite{OrlinStein93,LingasPersson15,NarayananSaranVazirani94,GurjarThierauf17,KlivansSpielman01,AllenderReinhardt00,BourkeTewariVinodchandran09,KalampallyTewari16,MelkebeekPrakriya19, ArvindMukhopadhyay08, AgarwalaMertz25, AnariVazirani19}. Thus, derandomizing the isolation lemma has in itself become an important problem~\cite{ChariRohatiSrinivasan93,ArvindMukhopadhyay08,AgarwalGurjarThierauf20,GurjarThieraufVishnoi21}.\\

\noindent
The work of Narayanan et al.~\cite{NarayananSaranVazirani94} used the isolation lemma to obtain an $\RNC$ algorithm for linear matroid intersection. In particular, given input matroids $M_1 = (S, \mathcal{I}_1)$ and $M_2 = (S, \mathcal{I}_2)$, they need polynomially bounded weights $w:S \rightarrow \mathbb{Z}$ such that the minimum weight maximum sized common independent set $I \in \mathcal{I}_1 \cap \mathcal{I}_2$ is unique. Gurjar and Thierauf~\cite{GurjarThierauf17} partially derandomized the isolation lemma for linear matroid intersection in $\QNC$, but obtain a weight assignment which has large quasi-polynomially (instead of polynomially) bounded weights. It is a big open problem to fully derandomize the isolation lemma for linear matroid intersection in $\NC$. \\

\noindent
Agarwala and Mertz~\cite{AgarwalaMertz25} made progress on this problem by providing a derandomization for the case of bipartite matching in $\CL$. It is then a natural goal to obtain a similar $\CL$ derandomization of the isolation lemma for linear matroid intersection. 

\subsection{Our Results}
In this paper we prove the following:
\begin{theorem}
\label{thm:lmi-in-clp}
$$\lmi \in \CLP$$
\end{theorem}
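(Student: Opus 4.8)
The plan is to follow the template of Agarwala and Mertz~\cite{AgarwalaMertz25}: reduce $\lmi$ to computing a single algebraic quantity (a determinant/rank of a matrix whose entries are polynomials in a weight-indexing variable), derandomize the isolation lemma inside $\CL$ so that this quantity actually identifies a unique optimal common independent set, and then invoke the known fact $\TCo\subseteq\CLP$ (together with the fact that iterated matrix product and determinant over suitable rings are in $\TCo$, or at least in a class known to sit inside $\CLP$) to carry out the arithmetic catalytically. Concretely, I would start from the Narayanan--Saran--Vazirani~\cite{NarayananSaranVazirani94} algebraic formulation: given linear matroids $M_1=(S,\mathcal I_1)$, $M_2=(S,\mathcal I_2)$ with representing matrices $A_1,A_2$ over $\mathbb F$, form the matrix $\sum_{e\in S} y^{w(e)} x_e\, (\text{column of }A_1)_e (\text{column of }A_2)_e^{\trp}$ (or the Cauchy--Binet-style matrix used in their paper), whose relevant minor, as a polynomial in $y$, has lowest-degree term corresponding to the minimum-weight common independent set of maximum size; if the weight function $w$ isolates, that term is a monomial and one reads off the set bit-by-bit by leave-one-out queries. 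The substitution $x_e\mapsto$ a fixed field element (Schwartz--Zippel over a large enough extension, or the explicit small-witness set) kills the $x_e$'s while preserving nonvanishing with high probability, and this randomness can itself be absorbed since $\CLP\subseteq\ZPP$ — but to stay in $\CLP$ deterministically we instead need the isolation to be deterministic.

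The heart of the paper — and the step I expect to be the main obstacle — is the $\CL$ derandomization of the isolation lemma for linear matroid intersection, generalizing the bipartite-matching derandomization of~\cite{AgarwalaMertz25}. For bipartite matching the isolating weight assignment is constructed by the ``cycle-elimination'' / min-weight-perfect-matching polytope argument: one iteratively modifies weights so that every cycle in the relevant exchange structure has nonzero circulation, and the number of constraints to kill is polynomial. For linear matroid intersection the analogous object is the \emph{exchange graph} (the common independent set exchange graph of Lawler/Brualdi), and the combinatorial fact one needs is that two distinct minimum-weight maximum common independent sets differ by an alternating ``cycle'' in this graph with zero total weight; so it suffices to choose $w$ so that every such alternating cycle has nonzero weight. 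The subtlety versus the matching case is that these exchange cycles are more intricate (they live in the common independent set structure, not a simple bipartite graph), and bounding the number of weight bits needed — and, crucially, performing the iterative weight-fixing within $O(\log n)$ free space plus polynomial catalytic space — requires the catalytic ``register machine''/bootstrapping technique of~\cite{AgarwalaMertz25,BuhrmanCleveKouckyLoffSpeelman14}: one stores partial sums in catalytic memory, runs the nested arithmetic, and restores. I would prove a clean lemma of the form ``given black-box $\TCo$-access to the isolation-test (does weight $w$ isolate?), a polynomially-bounded isolating $w$ can be found in $\CLP$,'' reusing the fact that the test itself reduces to a rank/determinant computation and hence to $\TCo\subseteq\CLP$, and that searching over a polynomially-structured family of candidate weight assignments (à la Gurjar--Thierauf, but kept polynomial by the matroid-intersection-specific structure) is a polynomial-time $\TCo$-uniform search that $\CLP$ can simulate.

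Having isolated, the rest is routine: with a deterministic isolating $w$ in hand, the minimum-weight maximum common independent set is unique, its size and membership are extracted from the low-degree coefficient of the appropriate minor (computable as an iterated product of matrices over $\mathbb F[y]/(y^{N})$ for $N=\poly(n)$, hence in $\TCo$ hence in $\CLP$), and each element-membership query is another such determinant with one column deleted. I would assemble these pieces: (i) the algebraic encoding of $\lmi$ reducing a single instance to $\poly(n)$ determinant computations over a polynomial ring of polynomial degree; (ii) the $\CL$-computability of those determinants via $\TCo\subseteq\CLP$ and composition of catalytic procedures (noting $\CLP$ is closed under the needed $\TCo$-Turing reductions, which follows from the catalytic subroutine-composition lemma of~\cite{BuhrmanCleveKouckyLoffSpeelman14}); and (iii) the isolation derandomization lemma above. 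The one place I anticipate real technical work is making sure the weight-search and the nested determinant evaluations compose \emph{within a single polynomial catalytic tape} without the free-space blowing past $O(\log n)$ — this is exactly the bookkeeping that~\cite{AgarwalaMertz25} had to do for matching, and extending it to the larger exchange structure of matroid intersection, while keeping everything polynomial-time, is the crux.
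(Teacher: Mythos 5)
Your proposal has a genuine gap at exactly the step you identify as the heart of the matter. You propose a lemma saying that, given a $\TCo$-computable isolation test, one can \emph{search a polynomially-structured family of polynomially bounded candidate weight assignments} (``\`a la Gurjar--Thierauf, but kept polynomial by the matroid-intersection-specific structure'') and find an isolating one in $\CLP$. No such family is known: Gurjar--Thierauf's construction inherently produces quasi-polynomially bounded weights, and producing a small explicit family of polynomially bounded isolating weights for linear matroid intersection (or even for bipartite matching) is precisely the open derandomization problem the paper explicitly does \emph{not} solve. You also mischaracterize how Agarwala--Mertz handled bipartite matching: their derandomization is not a cycle-elimination/polytope weight-fixing argument (that is the Fenner--Gurjar--Thierauf quasi-$\NC$ route); it is the compress-or-random framework, in which no isolating assignment is ever constructed from scratch.

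The paper's actual route avoids your missing lemma entirely. The adversarial catalytic tape itself is read as the weight assignment $W$. The algorithm iterates $k$, using the $\CL$ implementation of Narayanan--Saran--Vazirani to extract the isolated size-$k$ set, and the exchange graph to test maximality and to test whether $W$ also isolates at size $k+1$. If isolation fails at some $k+1$, a \emph{threshold element} $s$ exists, and the crucial structural fact (proved via inclusion and exclusion matroids, $\min_k(M_1\restriction_s,M_2\restriction_s)$ and $\min_{k+1}(M_1-\{s\},M_2-\{s\})$, computed in $\CL$ through minimum-weight cycles and paths in the exchange graph) is that $w(s)$ is then \emph{redundant}: it can be reconstructed from the remaining weights together with $(s,k,\Ind{s\in I_k})$. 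This lets the algorithm overwrite $w(s)$ and free $\Theta(\log n)$ bits per failure; after polynomially many failures it has polynomial free space and simply runs Edmonds' polynomial-time algorithm there, then decompresses to restore the tape. Finally, the paper only argues $\lmi\in\CL$ and invokes $\CLP=\CL\cap\P$ rather than tracking polynomial time directly. Without the compression mechanism (or a genuine solution to the isolation-derandomization open problem), your outline does not yield $\lmi\in\CLP$; the algebraic extraction and $\TCo\subseteq\CLP$ ingredients you list are correct but are the routine part.
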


\noindent
This result is interesting for two reasons:
\begin{enumerate}
    \item $\lmi$ is now the hardest problem known to be solvable in $\CL$. The previous strongest inclusion was bipartite matching~\cite{AgarwalaMertz25}, which is a special case of linear matroid intersection. Thus, our result constitutes a stronger barrier against the $\CL \subseteq \NC$ conjecture~\cite{Koucky16, Mertz23}. 
    
    \item This is the first algorithm for $\lmi$ which uses sublinear free space and polynomial time with access to any additional resources, such as randomness, non-determinism, or, in our case, catalytic space. As far as we are aware, the only other sublinear space algorithm runs in $O(\log^2n)$ space but not polynomial time, as a corollary of the fact that $\lmi \in \QNC^2$~\cite{GurjarThierauf17}. 
\end{enumerate}

\noindent
Moreover, our algorithm, which is a natural extension of the algorithm of~\cite{AgarwalaMertz25}, derandomizes the isolation lemma for linear matroid intersection in $\CLP$.

\subsection{Proof Overview}

\noindent
We present here a high-level overview of our proof. Our proof structure is largely inspired by the techniques used to prove that maximum bipartite matching is in $\CLP$ \cite{AgarwalaMertz25}, with extra machinery needed to handle the more complicated structure of matroid intersection.\\

\noindent
The core idea of the proof is to construct an isolating weight assignment on the catalytic tape, and then apply the algorithm of of~\cite{NarayananSaranVazirani94} to compute a maximum size common independent set. Let $M_1 = (S, \mathcal{I}_1)$ and $M_2 = (S, \mathcal{I}_2)$ be the input linear matroids. We start by dividing the catalytic tape into three sections:
\begin{enumerate}
    \item 
    A weight assignment $W:S \rightarrow \mathbb{Z}$,
    \item 
    A set of 'reserve weights' used to modify $W$, and
    \item 
    A section used as catalytic space for the computation of catalytic subroutines.
\end{enumerate}

\noindent
Our algorithm proceeds iteratively with a counter $k$, starting with $k = 0$. At each step, we maintain the invariant that the current weight assignment $W$ induces a unique minimum weight (or \textit{isolates} a) size $k$ common independent set $I_k \in \mathcal{I}_1 \cap \mathcal{I}_2$. We then perform the following steps: 
\begin{enumerate}
    \item 
    First, we check if $I_k$ is a common independent set of maximum size. If it is, we output $I_k$ as our solution.
    \item 
    If $I_k$ is not a maximum size common independent, we check if $W$ also isolates a size $k+1$ common independent set $I_{k+1}$.  If it does, we increment $k$ by one and repeat the process from step $1$.
    \item 
    If $W$ does not isolate a size $k+1$ common independent set, we must start again with a new weight assignment. This is the crucial step. We swap the weight of a special element $s$, in the first section, with an arbitrary weight from the reserve section. We then use a novel compression-decompression algorithm to compress the reserve section. We reset our counter $k$ to $0$ and begin the process again with the modified weights.
\end{enumerate}

\noindent
This iterative process stops when either:
\begin{enumerate}
    \item Step 1 is successful and we find a maximum sized common independent set through a good weight assignment $W'$ on the catalytic tape, or
    \item We visit step 3 $\poly(n)$ many times, at which point we have freed up $\poly(n)$ space on the catalytic tape through compression, and we may use this space to run a standard polynomial-time algorithm for linear matroid intersection.
\end{enumerate}

\noindent
We can then use our decompression algorithm to revert the catalytic tape to its original state. This is an implementation of the compress-or-random framework introduced by Cook et al.~\cite{CookLiMertzPyne25}.\\

\noindent
Our main contribution is a novel compression and decompression scheme. Note that if our weight assignment $W$ does not isolate a size $k+1$ common independent set, then there must exist at least two size $k+1$ common independent set $I$ and $I'$. We can thus find an element $s \in I \setminus I'$. We call $s$ a \textit{threshold element}. \footnote{The concept of threshold elements was introduced by Mulmuley, Vazirani, and Vazirani~\cite{MulmuleyVaziraniVazirani87} in their proof of the isolation lemma.} Agarwala and Mertz~\cite{AgarwalaMertz25} observe that, in the context of bipartite matching, the weight of a threshold element can be deleted and later reconstructed given $s$ and $k$. However, linear matroid intersection differs from bipartite matching in two ways:
\begin{enumerate}
    \item In the case of bipartite matching, one can always ensure that $s \notin I_k$ ($s$ is not in the isolated size $k$ matching). In the case of linear matroids, this is not always possible - all threshold elements may be in $I_k$. 

    \item Agarwala and Mertz~\cite{AgarwalaMertz25} use a structure known as the \textit{residual graph} in their compression and decompression procedure. The key property they use is that shortest paths in the residual graph are in bijection with minimum weight size $k+1$ matchings of the original graph. We use a similar structure, known as the \textit{exchange graph}, for matroids. However, this bijection property no longer holds.
\end{enumerate}

\noindent
We handle both of these issues using \textit{inclusion} and \textit{exclusion} matroids. In particular, in order to execute compression and decompression using a threshold element $s$, we need to answer two questions:
\begin{enumerate}
    \item What is the minimum weight of a size $k+1$ common independent set containing $s$ (excluding the weight of $s$ itself)? These independent sets are characterised by the inclusion matroid.
    \item What is the minimum weight of a $k+1$ common independent set forbidden from containing $s$? These independent sets are characterised by the exclusion matroid. 
\end{enumerate}

\noindent
Agarwala and Mertz~\cite{AgarwalaMertz25} showed that, given a threshold element $s$, and both of the aforementioned values, one can recover the weight of $s$ as the difference of the first and second value. Thus, our main contribution is a $\CLP$ algorithm which computes both of these values.

\subsection{Organization of the Paper}

Our paper is divided into five main sections:.\\

\noindent
In \Cref{prelims} we formally introduce catalytic classes and matroids, and describe some generic catalytic subroutines that we will use later. In \Cref{isolation_lemma}, we present a $\CLP$ algorithm which, given access to a weight function $W$ which isolates a size $k$ common independent set, constructs and outputs the isolated set. This is largely based on the algorithm in~\cite{NarayananSaranVazirani94}. In \Cref{maximal_independent_set}, we present a $\CLP$ algorithm which decides whether a common independent set $I_k$ is of maximum size. In \Cref{section_intersecting_set_from_k_to_k_+_1} we present a $\CLP$ algorithm which either certifies that $W$ isolates a size $k+1$ common independent set, or identifies a threshold element. We need these algorithms in the case where $I_k$ is not of maximum size. In \Cref{final_algorithm}, we present our compression and decompression procedures, and describe the final $\CLP$ algorithm for $\lmi$.

\section{Preliminaries}
\label{prelims}
We use notation $\mathbb{Z}^{\leq c}$ to denote the non-negative integers of value at most $c$. For $n \in \mathbb{N}$, we use $[n]$ to denote the set $\{1, \dots, n\}$. 

\noindent
Let $G(V,E)$ be a graph, for any walk $P$ of $G$, we define the hop-length of $P$ to be the number of edges in $P$. This is in order to distinguish from the weight of a walk when we work with weighted graphs.
\subsection{Catalytic Computation}
Our main computational model in this paper is the catalytic space model:
\begin{definition}[Catalytic machines]
    Let $s := s(n)$ and $c := c(n)$. A \emphdef{catalytic Turing machine} with
    space $s$ and \emphdef{catalytic space} $c$ is a Turing machine $M$ with a
    read-only input tape of length $n$, a write-only output tape, a read-write work tape of length $s$,
    and a second read-write work tape of length $c$ called the \emphdef{catalytic tape},
    which will be initialized to an adversarial string $\tau$. \\

    \noindent
    We say that $M$ computes a function $f$ if for every $x \in \{0,1\}^n$ and
    $\tau \in \{0,1\}^c$, the result of executing $M$ on input $x$ with initial
    catalytic tape $\tau$ fulfils two properties: 1) $M$ halts with $f(x)$ written on the output tape; and
    2) $M$ halts with the catalytic tape in state $\tau$.
\end{definition}

\noindent
Such machines naturally give rise to complexity classes of interest:

\begin{definition}[Catalytic classes]
    We define $\CSPACE[s,c]$ to be the family of functions computable by
    catalytic Turing machines with space $s$ and catalytic space $c$.
    We also define \emphdef{catalytic logspace} as
    $$\CL := \bigcup_{d \in \mathbb{N}} \CSPACE[d \log n, n^d]$$
    Furthermore we define $\CLP$ as
    the family of functions computable by $\CL$ machines that are
    additionally restricted to run in polynomial time for every
    initial catalytic tape $\tau$.
\end{definition}

\noindent
Important to this work will be the fact, due to Buhrman et al.~\cite{BuhrmanCleveKouckyLoffSpeelman14},
that $\CLP$ can simulate log-depth threshold circuits:

\begin{theorem}[\cite{BuhrmanCleveKouckyLoffSpeelman14}]
\label{thm:bckls}
    $$\TCo \subseteq \CLP$$
\end{theorem}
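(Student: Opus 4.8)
The plan is to show that a catalytic Turing machine with $O(\log n)$ work space, $\poly(n)$ catalytic space, and polynomial running time can evaluate any logspace-uniform family of polynomial-size, logarithmic-depth threshold circuits $\{C_n\}$ (the class $\mathsf{TC}^1$). The central device is the theory of \emph{transparent register programs} (in the style of Coppersmith--Grossman and Ben-Or--Cleve): a finite sequence of instructions over a finite ring $R$, each of the form $r_i \leftarrow r_i \pm (r_{j_1} \cdots r_{j_a})$ or $r_i \leftarrow r_i \pm c\, r_j$, where the $r$'s are registers and the indices and constant may depend on the input. Two properties make this exactly the right abstraction. First, every instruction is invertible — swap $+$ for $-$ — so running the program followed by its literal reversal returns each register to its starting value; laying the registers on the catalytic tape, this is precisely the restoration requirement. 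Second, one designs the program to be \emph{transparent}: the designated output register, regardless of its adversarial initial content $\tau_{\mathrm{out}}$, ends holding $\tau_{\mathrm{out}} + C_n(x)$, so the machine recovers $C_n(x)$ as the difference of the final and initial values, having copied the $O(\log n)$-bit value $\tau_{\mathrm{out}}$ onto its work tape at the outset. Working over $R = \mathbb{Z}_m$ for an $O(\log n)$-bit modulus $m$ — or, if one modulus cannot represent the counts that arise, over each of $\poly(n)$ many small primes and recombining via the Chinese Remainder Theorem — each register occupies $O(\log n)$ bits, so a program with $\poly(n)$ registers and $\poly(n)$ instructions stays within the $\CLP$ budget, the work tape needing only an instruction pointer and $O(\log n)$ bits of arithmetic scratch.

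What remains is to construct, logspace-uniformly, such a polynomial-length transparent register program that evaluates $C_n$. I would process the circuit layer by layer, devoting one register to each gate so that a gate's value is computed once and then available to every later gate that uses it. A threshold gate $g = \mathrm{Th}_t(g_1, \dots, g_k)$ is handled by accumulating the values of its inputs into a register — the partial count never exceeds the fan-in, so it is faithful mod $m$ — after which $\val(g) = [\,\text{count} \ge t\,]$ is obtained by a single comparison; $\wedge$, $\vee$, and $\neg$ gates are the obvious special cases (or are arithmetized as $xy$, $x+y-xy$, and $1-x$). Since there are only $O(\log n)$ layers and each layer contributes program length proportional to its size, the total length is polynomial, which is also what makes the running time polynomial; uniformity of $\{C_n\}$ lets the machine reconstruct gate types and wiring on the fly.

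The step I expect to be the real obstacle is reconciling this reuse of stored gate values with the catalytic restoration requirement. The registers all begin with adversarial garbage, so a register that ``holds gate $g$'' in fact holds $\tau_g + \val(g)$, and summing such registers to drive a higher threshold mixes in the unknown quantity $\sum_j \tau_{g_j}$. Arranging the computation to be simultaneously \emph{transparent} (output register ends at $\tau_{\mathrm{out}} + C_n(x)$ for all initial contents) and cleanly \emph{reversible} (one backward pass restores every register), while keeping the program polynomial in length and the work tape $O(\log n)$ bits, is the technical heart of the argument — this is exactly where the Ben-Or--Cleve / Coppersmith--Grossman register-program identities, in their catalytic form, are needed. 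One must also pick the modulus $m$ to be $\poly(n)$-bounded, so that its arithmetic is itself logspace, yet large enough (via several small moduli and the Chinese Remainder Theorem if one will not do) that no intermediate threshold count wraps around. Granting this machinery, the remaining ingredients — traversing the circuit, executing and reversing instructions, and bounding the time — are routine.
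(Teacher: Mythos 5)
The paper does not actually prove this statement; it imports it by citation from Buhrman, Cleve, Kouck\'y, Loff, and Speelman, and your sketch is in effect a reconstruction of their argument. You have the right skeleton --- reversible register programs over a small ring laid out on the catalytic tape, a transparent output register ending at $\tau_{\mathrm{out}} + C_n(x)$, restoration by running the instruction sequence in reverse, and CRT over $O(\log n)$-bit moduli --- and that is indeed the framework of the original proof.

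However, as written the proposal has a genuine gap, and you have located it yourself: everything you defer to ``the Ben-Or--Cleve / Coppersmith--Grossman identities, in their catalytic form'' is the theorem, not a routine appeal. Two concrete failures in the layer-by-layer plan: first, a gate register never holds $\val(g)$ but $\tau_g + \val(g)$, so naively accumulating fan-in values injects the unknown quantity $\sum_j \tau_{g_j}$, and cancelling such offsets when a stored value is consumed by many later gates requires the specific clean-composition machinery (each use of a value must be paired with instructions that re-derive and subtract the same garbage-free contribution), with a careful accounting that keeps the program length polynomial across $O(\log n)$ levels of composition --- this blow-up control is exactly why the result is stated for log depth. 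Second, ``$\val(g) = [\text{count} \ge t]$ is obtained by a single comparison'' is not an operation available in the register-program model: the machine cannot read off the count (it is masked by adversarial garbage, and the $\poly(n)$ initial register contents cannot be remembered in $O(\log n)$ work space), and a comparison is not one of the invertible arithmetic instructions. The original proof instead works over a prime field $\mathbb{F}_p$ with $p$ exceeding the fan-in and expresses the threshold as a degree-$(p-1)$ polynomial in the sum via Lagrange interpolation, reducing the problem to clean computation of powers; that interpolation step, together with the composition bookkeeping, is the technical heart that your sketch acknowledges but does not supply. For the purposes of this paper the statement may of course simply be cited, as the authors do.
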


\noindent
The algorithm we present in this paper is in $\CLP$. We do not expicitly argue this due to the following theorem from Cook et al.~\cite{CookLiMertzPyne25}, that any problem solvable independently in $\CL$ and in $\P$, can be solved in $\CLP$. 
\begin{theorem}[\cite{CookLiMertzPyne25}]
\label{thm:clmp}
  $$\CLP= \CL\cap \P$$
\end{theorem}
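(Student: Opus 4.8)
The inclusion $\CLP \subseteq \CL \cap \P$ is immediate from the definitions: a $\CLP$ machine is a $\CL$ machine that also runs in polynomial time for every initial catalytic tape. So the content is in showing $\CL \cap \P \subseteq \CLP$. Fix $L \in \CL \cap \P$, witnessed by a catalytic machine $M$ with $O(\log n)$ work space and $c = c(n) = \poly(n)$ catalytic space — but \emph{a priori} only a $2^{O(c)}$ time bound — together with an ordinary algorithm $A$ deciding $L$ in $\poly(n)$ time and space. The plan is to build a $\CLP$ machine $M'$ for $L$ by a compress-or-random argument.

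The starting point is the structural fact underlying $\CL \subseteq \ZPP$ \cite{BuhrmanCleveKouckyLoffSpeelman14}: for a fixed input $x$, view the configurations of $M$ as a functional graph under its one-step transition. Since $M$ always halts and restores its catalytic tape, this graph is a forest of in-trees rooted at halting configurations, and the restore property forces each tree to contain at most one \emph{initial} configuration (an initial configuration with catalytic content $\tau$ can only flow to a halting configuration whose catalytic content is $\tau$). Hence $\sum_{\tau \in \{0,1\}^{c}} \mathrm{time}_M(x,\tau) \le \#\{\text{configurations of } M\} = 2^{c}\cdot\poly(n)$, where we assume WLOG a binary catalytic alphabet. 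By Markov's inequality, setting $T := p_0(n)\cdot\Gamma$ for the appropriate polynomial $p_0$, at most a $1/\Gamma$ fraction of catalytic tapes $\tau$ make $M$ run more than $T$ steps on $x$; call such $\tau$ \emph{$T$-bad}.

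Next I would amplify and assemble $M'$. Give $M'$ a catalytic tape of length $c' := r\cdot c$ with $r := \Theta(p(n))$ (still $\poly(n)$), parsed into blocks $\tau = (\tau^{(1)},\dots,\tau^{(r)})$. On input $x$, $M'$ simulates $M$ on $(x,\tau^{(i)})$ for $T$ steps, for $i = 1, 2, \dots$; if some simulation halts in time, $M$ has written $L(x)$ and restored block $i$, so $M'$ outputs $L(x)$ and halts in $\poly(n)$ time with the tape restored. Otherwise every block is $T$-bad, i.e. $\tau$ lies in a set $W' \subseteq \{0,1\}^{c'}$ of density at most $\Gamma^{-r} \le 2^{-p(n)}$. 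In that case $M'$ uses the sparsity of $W'$ to injectively re-encode $\tau$ into $c' - p(n)$ bits, freeing a scratch block of $p(n)$ bits; it runs $A(x)$ in that block, writes $L(x)$, clears the block, inverts the re-encoding to restore $\tau$, and halts. This is the compress-or-random template of Cook et al.~\cite{CookLiMertzPyne25}.

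The crux — and the only step that is not routine bookkeeping — is the compression of $W'$: both the encoding and, crucially, the decoding of elements of $W'$ must run in $\poly(n)$ time and $O(\log n)$ work space. The naive ``index within $W'$'' encoding is information-theoretically optimal but its decoder would have to enumerate $\approx 2^{c}$ strings to count $T$-bad blocks, which is far too slow, and a priori nothing makes $M$'s $T$-step map reversible in small space. Making the compression simultaneously sparse, efficiently computable, and efficiently invertible is exactly what the lossy-coding/compress-or-random machinery of \cite{CookLiMertzPyne25} provides, and it is where essentially all the difficulty of the theorem resides; the averaging bound and the block amplification around it are comparatively straightforward.
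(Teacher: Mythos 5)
The paper offers no proof of this statement at all: it is imported wholesale from~\cite{CookLiMertzPyne25} (and used only to reduce \Cref{thm:lmi-in-clp} to \Cref{thm:lmi-in-cl}), so there is no in-paper argument to compare against; your proposal has to stand on its own. Its skeleton is the right one — the easy inclusion, the observation that computation paths of a catalytic machine from distinct initial tapes cannot merge (else the restored contents would coincide), hence $\sum_{\tau}\mathrm{time}_M(x,\tau)\le 2^{c}\cdot\poly(n)$, Markov, and block amplification to drive the density of all-bad tapes below $2^{-p(n)}$ — and that counting step is correct. But as a proof it has a genuine gap, and in fact two. The one you acknowledge is fatal for a standalone argument: the efficiently computable \emph{and} efficiently invertible compression of the sparse bad set is not a routine appeal to ``lossy coding''; it is essentially the entire content of the theorem, and deferring it to the machinery of~\cite{CookLiMertzPyne25} reduces the statement to itself. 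A generic sparse set whose membership is only poly-time testable does not admit logspace/poly-time encode--decode (the natural ``index within $W'$'' map is not computable in your resource bounds, as you note), so one must exploit the specific structure of ``$M$ runs long on this block,'' and no mechanism for that is given.

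The second gap is unacknowledged and breaks the algorithm as written. When the simulation of $M$ on block $\tau^{(i)}$ is cut off after $T$ steps, that block is left in an intermediate configuration, not in state $\tau^{(i)}$: a deterministic machine cannot be run backwards, and running it forward to completion (the only generic way to restore a catalytic block) may take time $2^{\Theta(c)}$. Consequently, in the ``some block halts in time'' case your $M'$ does \emph{not} halt with the tape restored — all earlier, timed-out blocks are mangled — and in the ``every block is $T$-bad'' case the tape no longer contains $\tau$, so the object you propose to re-encode is not even available. Repairing this needs an additional idea (for instance, replacing $M$ by a reversible catalytic simulation so that a truncated run can be undone step by step, or testing blocks in a way that never leaves them half-run), and that repair is intertwined with the compression mechanism you black-boxed. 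So the proposal correctly identifies the shape of the compress-or-random argument but does not constitute a proof of $\CL\cap\P\subseteq\CLP$.
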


\noindent
Due to the fact that linear matroid intersection is known to be in $\P$~\cite{Edmonds2009, Edmonds2003, Edmonds1979}, the main goal of this paper is to prove that there is a $\CL$ algorithm for Linear Matroid Intersection:
\begin{theorem}
\label{thm:lmi-in-cl}
    $$\lmi \in \CL$$
\end{theorem}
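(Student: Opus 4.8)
The plan is to implement the compress-or-random framework of Cook et al.~\cite{CookLiMertzPyne25}. We partition the catalytic tape into three regions: one holding a candidate weight function $W\colon S\to\mathbb{Z}^{\leq c}$, one holding a block of ``reserve'' weights, and one serving as scratch catalytic space for the subroutines invoked below. The outer loop carries a counter $k$ and maintains the invariant that $W$ isolates a unique minimum-weight common independent set $I_k$ of size $k$; we start at $k=0$, where the invariant is vacuous.

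First I would build the three catalytic subroutines the loop needs, each of which reduces to rank and determinant computations over the matrices presenting $M_1,M_2$ and therefore lies in $\TCo\subseteq\CLP$ by \Cref{thm:bckls}: (i) given that $W$ isolates a size-$k$ set, reconstruct $I_k$ element by element via the weighted-determinant approach of Narayanan, Saran and Vazirani~\cite{NarayananSaranVazirani94}; (ii) decide whether $I_k$ is a maximum-size common independent set; and (iii) either certify that $W$ also isolates a size-$(k{+}1)$ common independent set, or output a \emph{threshold element} $s$, i.e.\ an element contained in one but not another minimum-weight size-$(k{+}1)$ common independent set. If (ii) reports that $I_k$ is maximum we halt and output it; if (iii) produces $I_{k+1}$ we increment $k$ and loop; otherwise we hold a threshold element and pass to the core step.

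The core step, which I expect to be the main obstacle, is the compression/decompression pair. Following \cite{AgarwalaMertz25}, the weight $W(s)$ of the threshold element is recoverable from $s$, $k$, and two numbers: the minimum weight of a size-$(k{+}1)$ common independent set that contains $s$ with $W(s)$ subtracted out, and the minimum weight of a size-$(k{+}1)$ common independent set that avoids $s$; the defining property of a threshold element forces these to coincide, so $W(s)$ is their difference. To evaluate the first I would contract $s$ in both $M_1$ and $M_2$ to form the \emph{inclusion matroid} pair, and to evaluate the second I would delete $s$ to form the \emph{exclusion matroid} pair; in each case the desired number is the minimum weight of a maximum common independent set, computed by re-invoking the isolation-type subroutine on the reduced instance. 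This is exactly where matroid intersection departs from bipartite matching, on two counts: one cannot assume $s\notin I_k$, and the exchange graph of matroid intersection lacks the clean bijection between shortest paths and minimum-weight augmentations enjoyed by the residual graph; handling $s\in I_k$ and arguing that $W$ restricted to the reduced ground set still isolates the relevant set, with its threshold structure intact, is the delicate technical heart of the proof. Granting those facts, we delete $W(s)$ from the $W$-region, write a short descriptor of $(s,k)$ in its place, use the $\Theta(\log c)$ bits thereby freed to compress one reserve entry, move $W(s)$ into the reserve region in exchange, reset $k$ to $0$, and repeat; decompression runs this backwards, recovering $W(s)$ from the formula above and restoring the tape bit for bit.

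Finally, for termination and the polynomial-time fallback: every visit to the core step permanently frees a block of the reserve region, so after $\poly(n)$ visits we have $\poly(n)$ free catalytic cells, at which point we stop the loop, run a standard polynomial-time linear matroid intersection algorithm~\cite{Edmonds1979} inside the freed space, copy the answer to the output tape, and replay the decompression passes in reverse order to restore the adversarial string $\tau$ before halting. Since the procedure uses $O(\log n)$ work space and $\poly(n)$ catalytic space, this establishes \Cref{thm:lmi-in-cl}; together with $\lmi\in\P$ and \Cref{thm:clmp} it also yields \Cref{thm:lmi-in-clp}.
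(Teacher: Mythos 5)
Your proposal is correct and follows essentially the same route as the paper: the compress-or-random framework with a weight/reserve/scratch tape partition, the NSV-based reconstruction of the isolated $I_k$, threshold elements characterized via inclusion and exclusion matroids, recovery of $W(s)$ as the difference of the two minima, and the Edmonds fallback once enough reserve blocks are compressed. The "delicate technical heart" you grant is exactly what the paper supplies via the exchange-graph machinery (\Cref{cycle-shifting-lemma}, \Cref{computing-min-weight-cycle-in-cl}, \Cref{size-k-restriction-and-inclusion-compression-1}), together with the small but necessary bookkeeping detail that the compressed descriptor must also record $b=\Ind{s \in I_k}$ so that $I_k$ can be reconstructed without knowing $W(s)$.
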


\noindent
We thus obtain \Cref{thm:lmi-in-clp} as a corollary of \Cref{thm:lmi-in-cl} and \Cref{thm:clmp}.

\subsection{Matroids}
We denote by $S\subseteq [n]$ a finite set, where $n\in \mathbb{N}$.

\begin{definition}[Matroid]
A matroid $M$ is a tuple $(S,\mathcal{I})$, where $S$ is called the ground set, and $\mathcal{I}\subseteq 2^S$ is a collection of subsets of $S$, known as 'independent sets'. The following properties must hold for $(S, \mathcal{I})$ to be a matroid:
\begin{itemize}
    \item 
    $\emptyset\in \mathcal{I}$. The empty set is independent.
    \item
    If $A\in \mathcal{I}$ and $B\subseteq A$, then $B\in \mathcal{I}$. The independent sets are downward closed. 
    \item 
    If $A,B\in \mathcal{I}$ where $\lvert A\rvert> \lvert B\rvert$, then there exists $x\in A \setminus B$ such that $B\cup \{x\}\in I$.
 \end{itemize}  
The inclusion-wise maximal sets of $\mathcal{I}$ are referred to as the `bases' of $M$. The rank of $M$ is defined to be the size of the largest independent set in $\mathcal{I}$:
\begin{equation*}
    \rank(M)\colon = \max_{I\in\mathcal{I}} \lvert I\rvert.
\end{equation*}
\end{definition}
\ \\
\noindent
In the rest of the paper, we will consider \textit{weighted} matroids.\\

\noindent
A weighted matroid is a matroid $M=(S,\mathcal{I})$, where the elements of $S$ are given integer weights $W\colon S \rightarrow \mathbb{Z}$. The weight of an independent set $I \subseteq S$ is defined to be $W(I) = \sum_{s \in I} W(s)$.\\

\begin{definition}[Common Independent Set]
Let $M_1 = (S, \mathcal{I}_1)$ and $M_2 = (S, \mathcal{I}_2)$ be weighted matroids with weights $W\colon S \rightarrow \mathbb{Z}$. \\

\noindent
$I \subseteq S$ is defined to be a `common independent set' of $M_1$ and $M_2$ if $I \in \mathcal{I}_1 \cap \mathcal{I}_2$. \\

\noindent
Additionally, we define $\min_k(M_1, M_2) = \min \{w(I) \ | \ I \in \mathcal{I}_1 \cap \mathcal{I}_2, |I| = k\}$ to be the minimum weight of a common independent set of size $k$, and  $\mathcal{I}^k_{\min}(M_1, M_2) = \{I \in \mathcal{I}_1 \cap \mathcal{I}_2 \ | \ w(I) = \min_k(M_1, M_2) \}$ to be the set of minimum weight size $k$ common independent sets.
\end{definition}

\noindent
Naively, a matroid may have an exponential (in $|S|$) number of independent sets. A fundamental challenge of formalizing computational tasks on matroids is in finding a succinct description of the independent sets. In this work, we study a large class of matroids known as \textit{linear matroids}, which can be represented succinctly by matrices. 

\begin{definition}[Linear Matroids]
\noindent
Let $M = (S = \{s_1, \dots, s_n\}, \mathcal{I})$ be a matroid, and $A$ be a matrix of dimensions $m \times n$ over a field $\mathbb{F}$. For $i \in [n]$, let $A_i \in \mathbb{F}^m$ refer to the $i^{th}$ column of the matrix $A$. $A$ is a linear representation of the matroid $M$ if:
\[\forall I \subseteq [n], \ \{s_i\mid i \in I\} \in \mathcal{I} \iff \{A_i \mid i \in I\} \text{ is linearly independent over }\mathbb{F}^m \]

\noindent
A matroid $M$ is defined to be a linear matroid if it can be linearly represented by a matrix $A$. For notational convenience, when no confusion arises, we will often refer to the matroid $M$ by its corresponding matrix.
\end{definition}

\begin{definition}[$\lmi$]
The $\lmi$ problem takes as input two linear matroids $M_1 = (S, \mathcal{I}_1)$ and $M_2 = (S, \mathcal{I}_2)$ in the form of their linear representations, and outputs a maximum sized common independent set $I$ of $M_1$ and $M_2$.
\end{definition}

\noindent
Let us now define for every matroid $M = (S, \mathcal{I})$, and every element $s \in S$, two associated matroids.
\begin{definition}[Exclusion and Inclusion Matroids]
Let $M = (S, \mathcal{I}), \ w\colon S \rightarrow \mathbb{Z}$ be a weighted matroid, and let $s \in S$.\\

\noindent
The exclusion matroid $M - s$ is defined as $M - s=(S \setminus \{s\}, \{I \in \mathcal{I} \ | \ s \notin I)$ with weights $w \restriction_{S \setminus \{s\}}$. \\

\noindent
The inclusion matroid $M\restriction_s$ is defined as $M\restriction_s=(S \setminus \{s\}, \ \{I \setminus \{s\} \mid I \in \mathcal{I}, \ s \in I\})$ with weights $w \restriction_{S \setminus \{s\}}$. \footnote{The exclusion matroid is always a matroid by definition. The inclusion matroid is a matroid if and only if $s$ is not a loop. That is, $\{s\} \in \mathcal{I}$. We will assume, without loss of generality, that this is always the case.}
\end{definition}

\begin{definition}[Membership Oracle]
\noindent
For a matroid $M = (S, \mathcal{I})$, a membership oracle $\mathcal{O}$ takes as input $I \subseteq S$ and accepts if and only if $I \in \mathcal{I}$.
\end{definition}

\noindent
A key property of linear matroids, and thus inclusion/exclusion matroids built from linear matroids, is that membership can be decided in $\CL$.
\begin{lemma}
\label{linear-matroid-membership-testing-is-in-cl}
    Given a linear matroid $M = (S, \mathcal{I})$ and $I \subseteq S$, there exists a $\CL$ algorithm which tests whether $I \in \mathcal{I}$.
\end{lemma}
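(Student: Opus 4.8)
The plan is to reduce linear-matroid membership testing to checking the rank of a submatrix, and then to observe that rank computation over a field is exactly the kind of linear-algebraic task captured by $\TCo$, whence \Cref{thm:bckls} gives a $\CL$ (indeed $\CLP$) algorithm. Concretely, let $A$ be the $m\times n$ linear representation of $M$ over the field $\mathbb{F}$ supplied in the input, and let $A_I$ denote the submatrix of $A$ consisting of the columns indexed by $I$. By the definition of a linear representation, $\{s_i : i\in I\}\in\mathcal{I}$ if and only if the columns $\{A_i : i\in I\}$ are linearly independent over $\mathbb{F}^m$, which holds if and only if $\rank(A_I)=|I|$. So the algorithm extracts the columns indexed by $I$, computes $\rank(A_I)$, and accepts iff this rank equals $|I|$.

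First I would make precise the field in which the input matroid is represented. For $\lmi$ the inputs are linear matroids given by explicit matrices; I will take $\mathbb{F}$ to be either $\mathbb{Q}$ (equivalently, integer entries after clearing denominators) or a finite field $\fp{p}$ with $p$ of polynomial size, which are the cases that arise for the combinatorial problems of interest and which keep all intermediate quantities polynomially bounded in bit-length. Over such a field, Gaussian elimination on an $m\times n$ matrix can be carried out by a uniform family of threshold circuits of depth $O(\log n)$ and polynomial size — this is the classical fact that matrix rank over $\mathbb{Q}$ or $\fp{p}$ is in $\TCo$ (via iterated matrix product / division, determinants of minors, and threshold gates to detect the largest nonvanishing minor). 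I would cite this $\TCo$ upper bound for rank, then invoke \Cref{thm:bckls} to conclude the computation can be performed by a catalytic logspace machine.

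The key steps, in order: (1) given $M$ and $I$, form $A_I$ by selecting the relevant columns — a purely syntactic operation computable in logspace; (2) compute $r=\rank(A_I)$ using the $\TCo$ algorithm for rank over $\mathbb{F}$; (3) accept iff $r=|I|$. Correctness is immediate from the definition of a linear representation. The membership-in-$\CL$ conclusion then follows by composing step (2) with \Cref{thm:bckls}, and the composition is clean because $\CL$ (and $\CLP$) is closed under composition with logspace pre- and post-processing.

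The main obstacle — really the only subtlety — is bounding the bit-complexity of the rank computation so that it genuinely lands in $\TCo$ rather than merely in $\P$. Over $\fp{p}$ with $p=\poly(n)$ this is automatic. Over $\mathbb{Q}$ one must argue that the entries of $A$ have polynomial bit-length (which is part of the input convention for $\lmi$) and that rank can be certified by the nonvanishing of some $k\times k$ minor, each of which is a determinant of a matrix with polynomial-bit entries and hence computable in $\TCo$; one then takes an $\ACz$-style "OR over all $k$" to find the largest such $k$. I would state this as the standard fact that rank over these fields is in $\TCo$ and defer the circuit-complexity details, since the real content of the paper lies elsewhere. A minor remark worth including is that the same argument, applied to the representations produced for inclusion and exclusion matroids, shows membership for those derived matroids is also in $\CL$ — this is exactly what is needed downstream.
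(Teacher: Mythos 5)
Your proposal is correct and matches the paper's proof: both reduce membership testing to checking that the column submatrix $A_I$ has rank $|I|$, and then invoke the standard $\TCo$ upper bound for rank together with $\TCo \subseteq \CL$ (\Cref{thm:bckls}). Your extra remarks on field/bit-length conventions are fine but not a departure from the paper's argument.
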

\begin{proof}
    Testing whether $I \in \mathcal{I}$ involves only testing whether the linear representation of $M$, when restriction to the columns of $I$, has rank $|I|$. This can be done in $\TCo$ and thus $\CL$~\cite{AllenderBealsOgihara96, BuhrmanCleveKouckyLoffSpeelman14}.
\end{proof}

\begin{lemma}
\label{exclusion-inclusion-membership-testing-is-in-cl}
Given a linear matroid $M = (S, \mathcal{I})$, $s \in S$, and $I \subseteq S \setminus \{s\}$, there exist $\CL$ algorithms to test whether $I$ is independent for both the exclusion matroid $M - s$ and the inclusion matroid $M\restriction_s$.
\end{lemma}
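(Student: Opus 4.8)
The plan is to reduce both membership tests to a single membership query on the original matroid $M$ and then invoke \Cref{linear-matroid-membership-testing-is-in-cl}, since the inclusion and exclusion matroids were defined purely in terms of the independent sets of $M$.

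First, for the exclusion matroid $M - s$: by definition its independent sets are exactly the sets $I \in \mathcal{I}$ with $s \notin I$. Since the lemma hypothesis gives us $I \subseteq S \setminus \{s\}$, the condition $s \notin I$ holds automatically, so $I$ is independent in $M - s$ if and only if $I \in \mathcal{I}$. Thus we simply run the $\CL$ algorithm of \Cref{linear-matroid-membership-testing-is-in-cl} on the pair $(M, I)$ and return its answer.

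Second, for the inclusion matroid $M\restriction_s$: by definition its independent sets are exactly the sets of the form $I' \setminus \{s\}$ with $I' \in \mathcal{I}$ and $s \in I'$. I would observe that, for $I \subseteq S \setminus \{s\}$, this is equivalent to $I \cup \{s\} \in \mathcal{I}$: if $I \cup \{s\} \in \mathcal{I}$ then $I' := I \cup \{s\}$ witnesses independence of $I$ in $M\restriction_s$, and conversely if $I = I' \setminus \{s\}$ for some $I' \in \mathcal{I}$ containing $s$ then $I \cup \{s\} = I' \in \mathcal{I}$. (This uses that $s$ is not a loop, i.e. $\{s\} \in \mathcal{I}$, which the paper assumes without loss of generality; otherwise $M\restriction_s$ has no independent sets and we reject outright.) Forming the set $I \cup \{s\}$ from $I$ and $s$ costs only $O(\log n)$ additional work-tape space, so we again appeal to \Cref{linear-matroid-membership-testing-is-in-cl}, this time on $(M, I \cup \{s\})$. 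Composing an $O(\log n)$-space reduction with a $\CL$ procedure stays in $\CL$.

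There is essentially no genuine obstacle here: all of the work is already contained in \Cref{linear-matroid-membership-testing-is-in-cl} (equivalently, in the fact that rank computation over a field is in $\TCo \subseteq \CL$), and the present statement is an immediate corollary via the two trivial reductions above. The only points needing care are verifying the equivalences characterizing independence in $M\restriction_s$ and $M - s$, and noting the standing assumption that $s$ is not a loop; both follow directly from the definitions of the inclusion and exclusion matroids. (One could alternatively observe, as in the removed lemma, that $M\restriction_s$ and $M - s$ are themselves linear matroids and apply \Cref{linear-matroid-membership-testing-is-in-cl} to them directly, but the reduction to a query on $M$ is more immediate.)
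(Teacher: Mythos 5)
Your proof is correct and follows essentially the same route as the paper: reduce exclusion-matroid membership to testing $I \in \mathcal{I}$ and inclusion-matroid membership to testing $I \cup \{s\} \in \mathcal{I}$, then invoke \Cref{linear-matroid-membership-testing-is-in-cl}. The extra remarks about the non-loop assumption and logspace composition are fine but not needed beyond what the paper states.
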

\begin{proof}
For the exclusion matroid, $I$ is independent if and only if $I$ is independent in $M$. For the inclusion matroid, $I$ is independent if and only if $I \cup \{s\}$ is independent in $M$. Both can be tested in $\CL$ using \Cref{linear-matroid-membership-testing-is-in-cl}.
\end{proof}

\noindent
From now on, unless explicitly stated otherwise, all matroids in this paper will be either linear matroids, or inclusion/exclusion matroids built from linear matroids.

\subsection{Graph Algorithms}
 We first show two weighted reachability problems that are in $\NL$.
\begin{lemma}
\label{shortest-walk-in-nl}
There exists an $\NL$ algorithm which, given a directed graph $G = (V, E)$, vertex weights $l: V \rightarrow \mathbb{Z}^{\leq \poly(n)}$, sets $X_1, X_2 \subseteq V$, and an integer $L \in \mathbb{Z}^{\leq \poly(n)}$, decides whether there exists in $G$ an $X_1 - X_2$ walk of weight at most $L$ and hop-length at most $|V|$.
\end{lemma}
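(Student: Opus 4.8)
The plan is to give a direct nondeterministic logspace procedure that guesses the walk one vertex at a time while carrying along only a constant number of $O(\log n)$-bit counters. First the machine nondeterministically picks a starting vertex $v_0 \in X_1$ (rejecting the branch if the guess is not in $X_1$), initializes a hop counter $h \gets 0$ and a running weight $c \gets l(v_0)$, and immediately accepts if $v_0 \in X_2$ and $c \le L$. It then repeats the following step: nondeterministically either (a) halt, accepting iff the current vertex lies in $X_2$ and $c \le L$; or (b) guess a next vertex $v$, verify from the input that $(v_{\mathrm{cur}},v) \in E$ (rejecting the branch otherwise), and update $v_{\mathrm{cur}} \gets v$, $h \gets h+1$, $c \gets c + l(v)$. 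If at any point $h > |V|$, the branch rejects. Here the weight of a walk is taken to be the sum of the weights of the vertices it passes through, counted with multiplicity, which is exactly what $c$ accumulates.

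Correctness is immediate in both directions. Any accepting computation branch has traced an honest walk in $G$ starting in $X_1$, ending in $X_2$, of hop-length at most $|V|$, whose accumulated weight $c \le L$; conversely, given any such walk, there is a sequence of nondeterministic choices that follows it vertex by vertex and accepts. For the space bound: the current vertex index needs $O(\log n)$ bits, the hop counter needs $O(\log |V|) = O(\log n)$ bits, and the weight accumulator stays small because after at most $|V|$ edges it is a sum of at most $|V|+1$ values from $\mathbb{Z}^{\le \poly(n)}$, hence bounded by $(|V|+1)\cdot\poly(n) = \poly(n)$, so it fits in $O(\log n)$ bits as well. Testing membership of a guessed vertex in $X_1$ or $X_2$, and testing whether a given pair is an edge of $G$, are all logspace computations on the input. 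Thus the whole procedure runs in nondeterministic logspace, and since the question asks for a decision we invoke $\NL = \coNL$ so that the associated language lies in $\NL$.

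I do not expect a genuine obstacle here; this is a routine "guess-and-check" argument. The one point that requires a moment of care is exactly why the configuration stays of logarithmic size: it is the hop-length restriction of $|V|$ that keeps the weight accumulator polynomially bounded (even if one allowed polynomially bounded negative weights, the magnitude after $\le |V|$ steps would still be $\poly(n)$). Everything else — the edge checks, the set-membership checks, and the counter arithmetic — is standard and fits comfortably within $O(\log n)$ space.
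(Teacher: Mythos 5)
Your proof is correct and follows essentially the same guess-and-check argument as the paper: nondeterministically trace the walk vertex by vertex, maintaining the current vertex, a hop counter bounded by $|V|$, and a weight accumulator that stays $\poly(n)$-bounded precisely because of the hop-length cap. The only superfluous step is the appeal to $\NL = \coNL$ at the end — the problem is an existential one, so your nondeterministic procedure already places it in $\NL$ directly.
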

\begin{proof}
    We non-deterministically explore the graph with a walk. First, we non-deterministically pick the first vertex $v_1$ of the walk. If $v_1 \notin X_1$, we reject. Assume that at stage $j$ the algorithm has explored the walk $\{v_1, v_2, \dots, v_j\}$. We store the last vertex $v_j$, the weight of the walk $cw = \sum_{i = 1}^j l(v_i)$, and the hop-length $j-1$. There are now three cases:
    \begin{enumerate}
        \item If $v_j \in X_2$ and $cw \leq L$, we accept.
        \item Else if $j \geq |V|$, we reject. 
        \item Else, we non-deterministically pick a vertex $v_{j+1}$. If $(v_j, v_{j+1}) \notin E$, we reject. Otherwise we continue to the next stage.
    \end{enumerate}

    \noindent
    This procedure always terminates eventually because, if case $1$ is never reached, then $j$ grows by $1$ at each stage and eventually exceeds $|V|$, at which point case $2$ is reached. It accepts if and only if it finds a walk whose weight is at most $L$ and hop-length at most $|V|$.
\end{proof}

\begin{corollary}
\label{shortest-path-in-nl}
There exists an $\NL$ algorithm which, Given a directed graph $G = (V, E)$, vertex weights $l: V \rightarrow \mathbb{Z}^{\leq \poly(n)}$, sets $X_1, X_2 \subseteq V$, such that $G$ does not have any negative weight cycles with respect to $l$, decides whether there exists an $X_1 - X_2$ simple path of weight at most $L$.
\end{corollary}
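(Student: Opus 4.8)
The plan is to reduce the corollary directly to \Cref{shortest-walk-in-nl}. The key observation is that, under the hypothesis that $G$ has no negative weight cycle, deciding the existence of a light simple path is equivalent to deciding the existence of a light walk of bounded hop-length, and the latter is exactly what \Cref{shortest-walk-in-nl} handles (with the bound $L$ taken as part of the input, as in that lemma).

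Concretely, I would first prove the equivalence: there is an $X_1$–$X_2$ simple path of weight at most $L$ if and only if there is an $X_1$–$X_2$ walk of weight at most $L$ and hop-length at most $|V|$. The forward direction is immediate, since a simple path visits each vertex at most once and is therefore a walk of hop-length at most $|V|-1 \le |V|$. For the converse, take any $X_1$–$X_2$ walk $P$ of weight at most $L$; as long as some vertex repeats along $P$, choose the earliest repeated occurrence and note that the portion of $P$ between the two occurrences of that vertex is a simple cycle, which has weight at least $0$ by the no-negative-cycle hypothesis. Excising this cycle yields a strictly shorter $X_1$–$X_2$ walk whose weight is still at most $L$. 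Iterating, we arrive at an $X_1$–$X_2$ walk with no repeated vertex, i.e.\ a simple path, of weight at most $L$.

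Given this equivalence, the $\NL$ algorithm for the corollary is simply to run the algorithm of \Cref{shortest-walk-in-nl} on the same inputs $G$, $l$, $X_1$, $X_2$, $L$ and return its answer; the space bound is inherited verbatim. I do not anticipate any real obstacle here: the only step requiring a line of care is the cycle-excision argument—specifically, verifying that the sub-walk picked at the first repeated vertex is a genuine simple cycle and hence has non-negative weight—while everything else is routine bookkeeping.
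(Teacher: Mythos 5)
Your proposal is correct and follows essentially the same route as the paper: reduce to \Cref{shortest-walk-in-nl} via the equivalence between light simple paths and light walks of hop-length at most $|V|$, which the paper states as "easy to see" and you justify by the standard cycle-excision argument using the no-negative-cycle hypothesis.
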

\begin{proof}
    If all cycles in the graph have non-negative weight, then it is easy to see that there exists a simple path of weight $\leq L$ if and only if there exists a walk of weight $\leq L$ and hop-length $\leq |V|$. Thus, we can simply apply \Cref{shortest-walk-in-nl} to solve this in $\NL$. 
\end{proof}

\noindent
We deduce the following catalytic algorithm.
\begin{lemma}
\label{shortest-path-in-cl}
    There exists a $\CL$ algorithm which, given a directed graph $G = (V, E)$ with vertex weights $l: V \rightarrow \mathbb{Z}^{\leq \poly(n)}$ such that $G$ does not have any negative weight cycles with respect to $l$, along with sets $X_1, X_2 \subseteq V$, computes the minimum weight of a simple $X_1$-$X_2$ path, or concludes that such a path does not exist.
\end{lemma}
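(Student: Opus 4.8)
The plan is to reduce the optimization problem to the $\NL$ \emph{decision} procedure of \Cref{shortest-path-in-nl} by binary search on the target weight, and then invoke the inclusion $\NL\subseteq\CL$ guaranteed by the chain $\NL\subseteq\TCo\subseteq\CL$ (\Cref{thm:bckls}).

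First I would observe that the vertex weights are non-negative integers bounded by some $M=\poly(n)$, so the weight of any simple path, which visits at most $|V|$ vertices, lies in $\{0,1,\dots,nM\}$, a polynomially bounded set of integers. In particular a simple $X_1$-$X_2$ path of weight at most $nM$ exists if and only if \emph{any} simple $X_1$-$X_2$ path exists; so one call to the \Cref{shortest-path-in-nl} procedure with $L=nM$ (which is legitimate since, by hypothesis, $G$ has no negative-weight cycles) decides whether the correct output is ``no such path exists.''

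Assuming a path does exist, I would binary search for the least $L^\star\in\{0,\dots,nM\}$ for which the \Cref{shortest-path-in-nl} procedure accepts on input $(G,l,X_1,X_2,L^\star)$. Since that procedure accepts exactly when there is a simple $X_1$-$X_2$ path of weight at most $L$, the value $L^\star$ is precisely the minimum weight of a simple $X_1$-$X_2$ path, which we output. The binary search maintains an interval $[\mathit{lo},\mathit{hi}]\subseteq\{0,\dots,nM\}$, which takes $O(\log(nM))=O(\log n)$ bits of work space, and it halts after $O(\log n)$ rounds, each making a single call to the $\NL$ subroutine.

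The only thing that needs checking is that this composition stays within the catalytic-logspace budget, and this is where essentially all the (modest) content lies. The outer binary-search loop is a deterministic logspace computation making $O(\log n)$ adaptive queries to an $\NL$ oracle; since $\NL$ is closed under complement by Immerman--Szelepcs\'enyi, $\Logspace^{\NL}=\NL$, so the whole algorithm lies in $\NL\subseteq\CL$. (Alternatively one can argue directly: an $\NL$ machine never writes to the catalytic tape, so invoking the \Cref{shortest-path-in-nl} subroutine repeatedly from inside a logspace loop trivially preserves the catalytic tape and the $O(\log n)$ work-space bound.) I do not expect a real obstacle here: the two points to get right are that the weight range of simple paths is polynomial, so binary search terminates in $O(\log n)$ rounds, and that logspace with $\NL$ queries collapses back to $\NL$, hence to $\CL$.
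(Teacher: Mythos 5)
Your proposal is correct and follows essentially the same route as the paper: reduce to the $\NL$ decision procedure of \Cref{shortest-path-in-nl} over the polynomially bounded range of possible path weights and invoke $\NL\subseteq\CL$; the paper simply scans $L=0,1,\dots$ in ascending order up to the total weight rather than binary searching, which is an inessential difference.
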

\begin{proof}
    Since $\NL \subseteq \CL$~\cite{BuhrmanCleveKouckyLoffSpeelman14}, the $\CL$ algorithm can simply iterate through all possible $L$, starting from $L = 0$, in ascending order, and pick the first one such that \Cref{shortest-path-in-nl} returns $\mathsf{true}$. If \Cref{shortest-path-in-nl} does not return true for any $L \leq \sum_{v \in V} l(v) = \poly(n)$, then our $\CL$ algorithm can conclude that there does not exist any $X_1$-$X_2$ path in $G$. 
 \end{proof}

\noindent
Finally, we present a catalytic subroutine to find the minimum weight and minimum weight hop-length simple cycle in a graph. This algorithm proceeds by a folklore reduction to minimum weight bipartite perfect matching, and then uses the algorithm of~\cite{AgarwalaMertz25} as a black-box. For succinctness, we postpone the proof to the Appendix.
\begin{lemma}
\label{computing-min-weight-cycle-in-cl}
There exists a $\CL$ algorithm which, given a directed graph $G = (V, E)$ with vertex weights $l:V \rightarrow \mathbb{Z}^{\leq \poly(n)}$, such that $G$ does not have any negative weight cycles with respect to $l$, along with a special vertex $c \in V$, computes a minimum weight minimum hop-length simple cycle of $G$ containing $c$.
\end{lemma}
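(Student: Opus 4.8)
The plan is to reduce, via a folklore construction, to minimum weight bipartite perfect matching, and then invoke the $\CL$ algorithm of Agarwala and Mertz~\cite{AgarwalaMertz25} for minimum weight bipartite perfect matching with polynomially bounded weights as a black box. The idea is that perfect matchings in the bipartite ``double cover'' of a digraph are exactly its cycle covers --- vertex-disjoint unions of simple cycles spanning $V$, where a single ``idle'' vertex counts as a trivial cycle --- so if we delete the idle option at $c$ we force $c$ onto a genuine cycle, and a careful choice of edge weights makes the minimum weight perfect matching encode precisely a minimum weight, and then minimum hop-length, cycle through $c$.

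Concretely, I would build a bipartite graph $H$ on vertex classes $V^{+}=\{v^{+}:v\in V\}$ and $V^{-}=\{v^{-}:v\in V\}$: for each arc $(u,v)\in E$ include an edge $\{u^{+},v^{-}\}$ of weight $(|V|+1)\,l(v)+1$, and for each $v\in V\setminus\{c\}$ include an idle edge $\{v^{+},v^{-}\}$ of weight $0$ (retaining the lighter copy if a loop of $G$ creates the same pair). A perfect matching of $H$ is the same as a permutation $\pi$ of $V$ with $(u,\pi(u))\in E$ whenever $\pi(u)\neq u$; its non-trivial cycles are vertex-disjoint simple cycles of $G$, and since $c$ has no idle edge it lies on one of them. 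For a simple cycle $C$ with $k$ vertices the $H$-weight of its $k$ edges is $(|V|+1)\sum_{v\in C}l(v)+k=(|V|+1)\,w(C)+h(C)$, writing $w(C)$ for the weight of $C$ in $G$ and $h(C)$ for its hop-length. Since $l$ is non-negative (so $G$ has no negative cycles, matching the hypothesis), idling a vertex $u\neq c$ costs $0$, whereas putting $u$ on a non-trivial cycle costs at least $1$; hence the minimum weight of a perfect matching of $H$ is achieved by choosing a minimum weight, minimum hop-length cycle through $c$ and idling everything else, and this minimum equals $(|V|+1)\,w^{*}+h^{*}$, where $w^{*}$ is the minimum weight of a cycle through $c$ in $G$ and $h^{*}$ is the minimum hop-length among such cycles. (We use $0\le h(C)\le|V|<|V|+1$: this makes $(|V|+1)w+h$ minimized by minimizing $w$ first and $h$ second, and lets us recover $w^{*}$ and $h^{*}$ from the optimal value by division with remainder. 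The same non-negativity shows that in any optimal matching the cycle through $c$ is itself a minimum weight, minimum hop-length cycle.) In particular $H$ has a perfect matching iff $c$ lies on some cycle, and following $c\mapsto\pi(c)\mapsto\pi(\pi(c))\mapsto\cdots$ back to $c$ in an optimal matching yields such a cycle explicitly, and it is simple since it is a cycle of the permutation $\pi$.

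All of this fits in $\CL$: the graph $H$ and its polynomially bounded weights are computable from $G$, $l$, $c$ in logspace, so we may run the matching algorithm of~\cite{AgarwalaMertz25} on it. If it reports no perfect matching we output that $c$ lies on no cycle; otherwise it produces an optimal matching $\mathcal{M}$, from which we stream out the cycle through $c$ by tracing $\pi$. To stay within the space budget while outputting $\mathcal{M}$ and tracing the cycle, we use the standard compute/use/uncompute discipline on a reserved block of the catalytic tape --- write $\mathcal{M}$ down (XOR-ed against the adversarial content), read it off to produce the output, then re-run the matching computation to erase it and restore the tape --- or, alternatively, we never store $\mathcal{M}$ and instead reconstruct the cycle one edge at a time, each step being one more black-box call to the matching algorithm on $H$ with an additional edge forced and the target weight decremented. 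Either way the total cost is $O(\log n)$ free space plus $\poly(n)$ catalytic space, as required, and the catalytic tape is restored.

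Given the matching black box, the reduction has no deep obstacle; the only points that need care are the two accounting tricks --- scaling the genuine weights by $|V|+1$ and charging $1$ per edge so that a single optimization over perfect matchings minimizes weight and then hop-length, and removing the idle edge at $c$ so the optimum encodes a single cycle through $c$ together with fixed points --- together with the catalytic bookkeeping needed to emit the cycle itself rather than just its weight.
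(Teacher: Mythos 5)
Your reduction is exactly the one the paper uses: the bipartite double cover with an arc edge $\{u^{+},v^{-}\}$ of weight $(|V|+1)\,l(v)+1$, idle edges of weight $0$ at every vertex except $c$, a black-box call to the minimum weight bipartite perfect matching algorithm of~\cite{AgarwalaMertz25}, extraction of the permutation cycle through $c$, and the observation that scaling by $|V|+1$ and charging $1$ per real edge makes a single optimization minimize weight first and hop-length second. The catalytic bookkeeping you describe (a reserved block of the catalytic tape so that repeated/consistent calls to the matching routine can be used to emit the cycle and then restore the tape) is also the paper's fix for the fact that the matching subroutine's output may depend on the initial tape contents.

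There is, however, one genuine flaw in your justification: you assert that $l$ is non-negative, ``matching the hypothesis.'' The hypothesis is only that $G$ has no negative-weight \emph{cycles}; individual vertex weights may well be negative, and in the application inside the paper (the exchange graph $\mathcal{E}_{M_1,M_2,I_k}$) they genuinely are, since vertices in $I_k$ carry weight $-W(s)$. Consequently your per-vertex accounting --- ``idling $u$ costs $0$, putting $u$ on a non-trivial cycle costs at least $1$'' --- is not valid: a single vertex on a cycle can contribute negatively, so you cannot argue vertex by vertex that idling is never worse. The argument must be run per cycle, as the paper does: any perfect matching of $H$ decomposes into the cycle through $c$ plus further vertex-disjoint simple cycles $C_2,\dots,C_m$ of $G$, and each extra cycle contributes $(|V|+1)\,w(C_i)+h(C_i)\ge h(C_i)\ge 1>0$ to the matching weight precisely because the no-negative-cycle hypothesis gives $w(C_i)\ge 0$. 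With that replacement your conclusion (the optimum consists of a single minimum weight, minimum hop-length cycle through $c$ with all other vertices idle, and both $w^{*}$ and $h^{*}$ are recoverable by division with remainder since $h\le |V|<|V|+1$) is correct, and the rest of your proof goes through as in the paper.
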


\section{Isolation Lemma and Minimum Weight \texorpdfstring{$k$}{k} Linear Matroid Intersection}
\label{isolation_lemma}
Recent advancements in the complexity of linear matroid intersection rely on the isolation lemma (see~\cite{MulmuleyVaziraniVazirani87,NarayananSaranVazirani94, GurjarThierauf17}). This lemma is crucial because it allows these problems to be solved in parallel by demonstrating that a random weight assignment will, with high probability, yield a unique optimal solution.

\begin{definition}[Isolation]
    Let $U$ be a set and $W\colon U\mapsto \mathbb{Z}$. Let $\mathcal{F}$ be a family of subsets of $U$. We say that $W$ isolates a set $S\in \mathcal{F}$ from $\mathcal{F}$ if for all sets $T\in \mathcal{F}, T\neq S$, we have $W(T)>W(S)$.
\end{definition}

\noindent
It was proven in \cite{MulmuleyVaziraniVazirani87}, that given a bipartite graph, an edge $e$, and an edge weight assignment $W$ with the promise that $W$ isolates a perfect matching $M$ of $G$, there is an $\Logspace^{\operatorname{DET}}$ machine (where $\operatorname{DET}$ is the matrix determinant problem) which decides if $e \in M$. A similar statement was proven for linear matroid intersection in \cite{NarayananSaranVazirani94}. \\

\noindent
Observing that $\mathsf{DET} \in \TCo \subseteq \CL$, it was proven in~\cite{AgarwalaMertz25} that given a weight assignment $W$ which isolates a size $k$ matching $M_k$ of $G$, there exists a $\CL$ algorithm which outputs this matching. In this section, we will prove the same statement for linear matroid intersection. \\

\begin{theorem}[\cite{NarayananSaranVazirani94}]
\label{computing-perfect-common-independent-set-is-in-cl}
There exists a $\CL$ algorithm which:
\begin{enumerate}
    \item Takes as input weighted linear matroids $M_1 = (S, \mathcal{I}_1), \ M_2 = (S, \mathcal{I}_2), \ W\colon S \rightarrow \mathbb{Z}^{\leq \poly(n)}$ with linear representations $L_1$ and $L_2$ respectively, both with dimensions $m \times n$.
    \item If the minimum weight perfect common independent set $I$ (i.e. $\lvert I\rvert =m$) of $M_1$ and $M_2$ is unique with respect to $W$, it produces as output the set $I$.
\end{enumerate}
\end{theorem}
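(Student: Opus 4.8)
The plan is to port the Narayanan--Saran--Vazirani algorithm \cite{NarayananSaranVazirani94} into the catalytic model, using the matrix (Cauchy--Binet) formulation of linear matroid intersection together with the fact recorded above that $\mathsf{DET} \in \CL$.

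\emph{Setup.} Given the $m\times n$ representations $L_1, L_2$ and the weight function $W$, I would form the $m\times m$ matrix of polynomials
\[
Q(z) \defeq L_1\, D(z)\, L_2^{\trp}, \qquad D(z) \defeq \diag\!\big(z^{W(s_1)},\dots,z^{W(s_n)}\big),
\]
whose entries lie in $\mathbb{F}[z]$ and have degree at most $\max_i W(s_i) = \poly(n)$. By the Cauchy--Binet formula,
\[
\det Q(z) \;=\; \sum_{\substack{J\subseteq[n]\\ |J|=m}} \det\!\big((L_1)_J\big)\,\det\!\big((L_2)_J\big)\; z^{W(J)},
\]
where $(L_i)_J$ denotes the square submatrix of $L_i$ on the columns in $J$; the $J$-term is nonzero precisely when $J$ indexes a perfect common independent set. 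Hence $\det Q(z)\equiv 0$ iff no perfect common independent set exists (this also covers the case that some $M_i$ has rank $<m$), which the algorithm detects and reports; otherwise let $W^\star$ be the smallest exponent appearing with nonzero coefficient in $\det Q(z)$.

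\emph{Reading off the solution.} I would first argue that $W^\star = W(I)$ for the unique minimum-weight perfect common independent set $I$: the coefficient of $z^{W^\star}$ is $\sum_{|J|=m,\,W(J)=W^\star}\det((L_1)_J)\det((L_2)_J)$, and by the uniqueness hypothesis at most the single term $J=I$ survives, so this coefficient equals $\det((L_1)_I)\det((L_2)_I)\neq 0$ and $W^\star = W(I)$. Then, for each $i\in[n]$, I would delete column $i$ from $L_1$, $L_2$, and $D$ to obtain an $m\times m$ matrix $Q_i(z)$ with $\det Q_i(z)=\sum_{|J|=m,\,i\notin J}\det((L_1)_J)\det((L_2)_J)z^{W(J)}$, and establish the criterion: $s_i\in I$ if and only if $\det Q_i(z)\equiv 0$ or the smallest exponent occurring in $\det Q_i(z)$ is strictly greater than $W^\star$. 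For the forward direction, a perfect common independent set avoiding $s_i$ of weight $\le W^\star$ would have weight exactly $W^\star$ by minimality, hence equal $I$ by uniqueness, contradicting $s_i\notin I$; so $\det Q_i$ is either $\equiv 0$ or has its lowest exponent $>W^\star$. For the backward direction, if $s_i\notin I$ then $I$ itself avoids $s_i$, and by the same uniqueness argument the coefficient of $z^{W^\star}$ in $\det Q_i$ equals $\det((L_1)_I)\det((L_2)_I)\neq 0$, so its lowest exponent is exactly $W^\star$. The algorithm outputs $\{\, s_i : s_i\in I \,\}$.

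\emph{Catalytic implementation.} Forming the entries of $Q(z)$ and of each $Q_i(z)$ is $\poly(n)$ arithmetic on the inputs; extracting the lowest nonzero coefficient of a polynomial of degree $\poly(n)$ and comparing the integers involved is done in logspace. The only substantive ingredient is computing the determinant of an $m\times m$ matrix over $\mathbb{F}[z]$ with polynomially bounded degree: this reduces, by evaluating at $\poly(n)$ points of $\mathbb{F}$ (or of a small extension field) and interpolating, to $\mathsf{DET}$ over $\mathbb{F}$, which is in $\CL$ as noted above and as used in \cite{AgarwalaMertz25}. The outer loop over $i\in[n]$ is a logspace loop invoking $\poly(n)$ many $\CL$ subroutines, each reusing the same $O(\log n)$ work tape and the catalytic tape; since $\CL$ is closed under such composition, the whole procedure is in $\CL$.

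\emph{Main obstacle.} The delicate part will be the correctness of the membership criterion: it rests entirely on the uniqueness hypothesis ruling out cancellations at the lowest order, both in $\det Q$ and in every $\det Q_i$, together with the combinatorial fact that no cheap perfect common independent set can avoid an element of $I$. A secondary subtlety is that $\mathbb{F}$ may be a very small finite field, so throughout the correct notion is ``the coefficient is nonzero'' rather than anything about magnitude, and one must make sure the polynomial-matrix determinant over a small field still stays inside $\CL$ — which it does, since all polynomial degrees, and hence the interpolation and any field extension needed, are of polynomial size.
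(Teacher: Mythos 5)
Your proposal is correct and takes essentially the same route as the paper: the paper simply invokes the algorithm of \cite{NarayananSaranVazirani94} and checks that its only heavy subroutines (matrix products, determinants of $\poly(n)$-size matrices, and interpolation of a $\poly(n)$-degree polynomial) lie in $\TCo\subseteq\CL$, while you re-derive that same algorithm via Cauchy--Binet on $L_1 D(z) L_2^{\trp}$ and the column-deletion membership test. Two harmless caveats: your claim that $\det Q(z)\equiv 0$ iff no perfect common independent set exists can fail because of cancellation among equal-weight terms, but this does not matter since the theorem only promises correct output under the uniqueness hypothesis (where the coefficient of $z^{W^\star}$ provably survives); and the paper additionally notes a weight shift to handle negative weights, which your monomial substitution $z^{W(s)}$ would likewise need in the settings where this theorem is later invoked.
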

\begin{proof}
The only computationally heavy subroutines used in the algorithm of \cite{NarayananSaranVazirani94} are:
\begin{enumerate}
    \item Computing the product of $\poly(n) \times \poly(n)$ dimension matrices with $\poly(n)$ bit entries. This is in $\Logspace \subseteq \CL$. 
    \item Computing the determinant of a $\poly(n) \times \poly(n)$ dimension matrix with $\poly(n)$ bit entries. This is in $\mathsf{GapL} \subseteq \TCo \subseteq \CL$ \cite{MahajanVinay97}.  
    \item Interpolating the $\poly(n)$-bit coefficients of a $\CL$ computable univariate polynomial of degree $\poly(n)$. It is known that this reduces to multiplying the inverse of a $\poly(n) \times \poly(n)$ dimension Vandermonde matrix having $\poly(n)$ bit entries, with a $\poly(n)$ sized vector consisting of $\poly(n)$ bit entries. Computing the inverse of a matrix reduces to computing its determinant and its cofactors, so this is in $\TCo \subseteq \CL$ by combining the previous two facts. 
\end{enumerate}

\noindent
These facts combine to show that the algorithm described in Theorem 4.1 of \cite{NarayananSaranVazirani94} can be implemented in $\CL$. We note a slight technicality: the algorithm in \cite{NarayananSaranVazirani94} requires the weights $W$ to be non-negative. This, however, is easy to handle: we can simply define $W':S \rightarrow Z^{+, \ \leq \poly(n)}$ as $W'(s) = W(s) + |\min_{s'\in S} W(s')| + 1$. The weights $W'$ are strictly positive, and the minimum weight perfect common independent set with respect to $W'$ is unique and equal to $I$.
\end{proof}

\noindent
We now reduce the computation of an isolated size $k$ common independent set to the computation of an isolated perfect common independent set. 

\begin{lemma}
\label{size-k-common-independent-set-reduces-to-perfect}
There exists a $\Logspace$ machine which:
\begin{enumerate}
    \item Takes as input weighted linear matroids $M_1 = (S, \mathcal{I}_1),\ M_2 = (S, \mathcal{I}_2),\ W\colon S \rightarrow Z^{\leq \poly(n)}$ with linear representations $L_1$ and $L_2$.
    \item Computes weighted linear matroids $M_1' = (S \cup S', \mathcal{I}'_1),\ M_2' = (S \cup S', \mathcal{I}'_2),\ W':S \cup S' \rightarrow Z^{\leq \poly(n)}$ such that if $W$ isolates a size $k$ common independent set $I_k$ of $M_1$ and $M_2$, then $W'$ isolates a perfect common independent set $I$ of $M_1'$ and $M_2'$ which satisfies $I \cap S = I_k$.
\end{enumerate}
\end{lemma}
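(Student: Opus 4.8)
The plan is to pad the ground set with $m-k$ new "free" elements and give them huge weights, so that any perfect common independent set of the padded matroids is forced to use exactly those new elements plus a size-$k$ common independent set of the original matroids; isolation of the size-$k$ set then transfers to isolation of the perfect set. Concretely, let $m$ be the number of rows of $L_1$ and $L_2$ (the common ambient dimension, so $\rank(M_i) \le m$), and fix $S' = \{s'_1, \dots, s'_{m-k}\}$ a set of $m-k$ fresh elements. I would represent $M_1'$ by the $2m \times (n+m-k)$ matrix that stacks $L_1$ on top of a zero block for its columns, and places $e_j$ (the $j$-th standard basis vector of the new $m$-dimensional block, for $j \in [m-k]$) in the column for $s'_j$; padding with zeros in the bottom block for the old columns. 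Do the analogous construction for $M_2'$ using $L_2$. This is clearly computable in $\Logspace$ (it is just copying the input and writing down an identity block and zeros). Since the new columns $\{e_j\}$ are linearly independent from each other and from everything in the top block, a set $J \cup J'$ with $J \subseteq S$, $J' \subseteq S'$ is independent in $M_i'$ iff $J$ is independent in $M_i$ and $J'$ is any subset of $S'$; in particular the rank of $M_i'$ is $\rank(M_i) + (m-k)$.

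Next I would arrange weights so that a minimum-weight perfect common independent set is forced to contain all of $S'$. Set $W'(s) = W(s)$ for $s \in S$, and $W'(s'_j) = 0$ for all $j$ — actually it is cleaner to shift the old weights: let $B = 1 + \sum_{s \in S} |W(s)|$, define $W'(s) = W(s) + B$ for $s \in S$ and $W'(s'_j) = 0$ for $s'_j \in S'$. With the shift, every old element has strictly positive weight, and any common independent set of $M_1', M_2'$ that omits some $s'_j$ but is perfect must compensate by including an extra old element, which strictly increases the weight (each old element costs at least $1$, and $S'$ elements cost $0$). Hence every minimum-weight perfect common independent set contains all of $S'$, and its intersection with $S$ is a common independent set of $M_1, M_2$ of size exactly $\rank(M_i') - (m-k)$. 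I still need to know this size is $k$: a perfect common independent set of $M_1', M_2'$ has size equal to the maximum size of a common independent set, which is at least $k + (m-k) = m$ whenever $M_1, M_2$ have a common independent set of size $k$ (namely $I_k$ together with all of $S'$), so perfect common independent sets have size $m$ and their trace on $S$ has size $k$.

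Finally I would verify the isolation transfer. Suppose $W$ isolates the size-$k$ common independent set $I_k$. The candidate $I := I_k \cup S'$ is a perfect common independent set of $M_1', M_2'$ with $W'(I) = W(I_k) + k B$. Take any other perfect common independent set $I^\star$; by the weight argument above $S' \subseteq I^\star$, so $I^\star = J \cup S'$ with $J \subseteq S$ a common independent set of $M_1, M_2$ of size $k$, and $J \ne I_k$. Then $W'(I^\star) = W(J) + kB > W(I_k) + kB = W'(I)$ using isolation of $I_k$. Thus $W'$ isolates $I$, and $I \cap S = I_k$, as required. The main thing to be careful about — the one place the argument could slip — is the forcing step: one must check that no perfect common independent set can avoid an element of $S'$, i.e. that the weight penalty for substituting an old element for a new one genuinely dominates, which is exactly what the shift by $B$ guarantees (any swap replaces a weight-$0$ element by a weight-$\ge 1$ element while not decreasing the rest). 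Everything else is bookkeeping that $\Logspace$ can handle, since it amounts to writing down $L_1, L_2$ with an appended identity block and arithmetic on $\poly(n)$-bit weights.
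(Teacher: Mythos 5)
There is a genuine gap: your construction implicitly assumes that $k$ is the \emph{maximum} size of a common independent set of $M_1$ and $M_2$, but the lemma must hold for any $k$ at which $W$ isolates a size-$k$ set, and in the final algorithm the reduction is invoked precisely at values of $k$ strictly below the maximum. In your padded matroids a set $J\cup J'$ with $J\subseteq S$, $J'\subseteq S'$ is common independent iff $J$ is common independent in $(M_1,M_2)$ and $J'\subseteq S'$ is arbitrary, so the maximum common independent sets of $(M_1',M_2')$ are exactly the maximum common independent sets of $(M_1,M_2)$ with all of $S'$ adjoined. The sentence ``so perfect common independent sets have size $m$'' is where the argument breaks: you only establish that the maximum is \emph{at least} $m$, but if the true maximum size $r$ of a common independent set of $(M_1,M_2)$ exceeds $k$, then the maximum sets of the padded pair have size $r+(m-k)>m$, their trace on $S$ has size $r\neq k$, and isolation of $I_k$ gives no uniqueness (or any control) over them, so the isolation transfer fails. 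There is also a definitional mismatch: in \Cref{computing-perfect-common-independent-set-is-in-cl} ``perfect'' means $|I|$ equals the number of rows of the representation; your matrices have $2m$ rows while no common independent set can exceed size $2m-k$, so for $k\ge 1$ your instance has no perfect common independent set at all and cannot be fed to that subroutine. Trimming the bottom block to $m-k$ rows does not rescue this, since then perfection would force $|I\cap S|=m$, not $k$.

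The paper's proof differs exactly where yours is missing a mechanism: it caps the contribution of $S$ at $k$ structurally. It adjoins two families $A,B$ of elements indexed by pairs $(i,j)\in[m]\times[m-k]$, where an $A$-element appears as the unit vector $e^{(i)}$ inside the original $m$ rows of $L_1'$ but as the unit vector $f^{(j)}$ in $m-k$ fresh rows of $L_2'$, and symmetrically for $B$. Because the $e^{(i)}$-parts share rows with $L_1,L_2$, at most $m-l$ new elements per side can be used when $|I\cap S|=l$, while the $f^{(j)}$-parts allow at most $m-k$ per side; hence the achievable size is $\min\{2m-l,\;2m-2k+l\}$, which is uniquely maximized at $l=k$, where the set is perfect (size $2m-k$, the number of rows). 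Uniqueness of the padding then needs its own argument, which the paper obtains from the weights $W'(s_{ij})=ij$, the lexicographically least extension, and the matching-uniqueness claim of Agarwala--Mertz. Your single zero-weight identity block avoids that padding-uniqueness issue only because it also fails to pin $|I\cap S|$ down to $k$; a repair along your lines would need both such a capping device and a canonically isolated choice of padding elements.
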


\noindent
The proof of this lemma is technical but not particularly insightful. Thus, we defer its proof to the Appendix. 
\begin{lemma}
\label{computing-size-k-isolated-common-independent-set-is-in-cl}
There exists a $\CL$ algorithm which, given as input linear matroids $M_1 = (S, \mathcal{I}_1), \ M_2 = (S, \mathcal{I}_2), \ W\colon S \rightarrow \mathbb{Z}^{\leq \poly(n)}$, and $k \in [|S|]$ such that $W$ isolates a size $k$ common independent set $I$ of $M_1$ and $M_2$, computes $I$. 
\end{lemma}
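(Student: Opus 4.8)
The plan is to compose the $\Logspace$ reduction of \Cref{size-k-common-independent-set-reduces-to-perfect} with the $\CL$ algorithm of \Cref{computing-perfect-common-independent-set-is-in-cl}.

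First I would apply \Cref{size-k-common-independent-set-reduces-to-perfect} to the input $M_1 = (S, \mathcal{I}_1)$, $M_2 = (S, \mathcal{I}_2)$, $W$, and $k$. Since by hypothesis $W$ isolates a size $k$ common independent set $I$ of $M_1$ and $M_2$, the reduction produces weighted linear matroids $M_1' = (S \cup S', \mathcal{I}_1')$, $M_2' = (S \cup S', \mathcal{I}_2')$, and weights $W'\colon S \cup S' \rightarrow \mathbb{Z}^{\leq \poly(n)}$ such that $W'$ isolates a \emph{perfect} common independent set $I'$ of $M_1'$ and $M_2'$ with $I' \cap S = I$. One should also record that the two output linear representations can be brought to a common dimension $m' \times n'$ by padding with zero rows, which is a logspace operation, so that the format required by \Cref{computing-perfect-common-independent-set-is-in-cl} is met.

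Next I would run the $\CL$ algorithm of \Cref{computing-perfect-common-independent-set-is-in-cl} on $M_1'$, $M_2'$, $W'$. Its hypothesis — uniqueness of the minimum weight perfect common independent set with respect to $W'$ — is exactly the statement that $W'$ isolates $I'$, which we just established, so the algorithm outputs $I'$. Finally, output $I' \cap S$, which equals $I$ by the property guaranteed by the reduction.

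For the complexity bound I would invoke the standard composition principle: a logspace-computable transformation followed by a $\CL$ algorithm is again a $\CL$ algorithm. Concretely, the $\CL$ stage never stores $M_1', M_2', W'$ explicitly; whenever it queries a bit of this intermediate instance, that bit is recomputed on the fly from $(M_1, M_2, W, k)$ using $O(\log n)$ extra work space via \Cref{size-k-common-independent-set-reduces-to-perfect}, while the catalytic tape is used (and restored) only by the $\CL$ stage. The total work space is $O(\log n)$ and the catalytic tape is returned to its initial state, so the whole procedure is in $\CL$. I do not expect a genuine obstacle here; the only points requiring care are confirming that the isolation hypothesis transfers cleanly through the reduction (this is precisely the conclusion of \Cref{size-k-common-independent-set-reduces-to-perfect}) and the routine bookkeeping of the logspace-then-$\CL$ composition together with the dimension padding.
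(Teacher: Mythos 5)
Your proposal is correct and matches the paper's proof: both simply compose the reduction of \Cref{size-k-common-independent-set-reduces-to-perfect} with the algorithm of \Cref{computing-perfect-common-independent-set-is-in-cl} and observe that the isolation property carries through. The extra remarks on on-the-fly recomputation and dimension padding are routine details the paper leaves implicit.
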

\begin{proof}
We can simply apply the reduction in \Cref{size-k-common-independent-set-reduces-to-perfect}, and then the algorithm in \Cref{computing-perfect-common-independent-set-is-in-cl}, in order to obtain $I$.
\end{proof}

\noindent
Finally let us present a slight modification of \Cref{computing-size-k-isolated-common-independent-set-is-in-cl}. Our compression and decompression scheme will need to be able to compute an isolated size $k$ common independent set $I_k$ without having access to the weight of a `compressed' element $s$. The following lemma shows that this is possible in $\CL$.

\begin{lemma}
\label{size-k-included-unique-common-independent-set}
Let $M_1 = (S, \mathcal{I}_1),\ M_2 = (S, \mathcal{I}_2),\ W\colon S \rightarrow \mathbb{Z}^{\leq \poly(n)}$ be weighted linear matroids, and $k \in [|S|]$ be such that $w$ isolates a size $k$ common independent set $I_k$ of $M_1$ and $M_2$.\\

\noindent
Let $s\in S$ be a `compressed' element and let $b=\Ind{s \in I_k}$.\\

\noindent
There exists a $\CL$ algorithm which, given as input the matroids $M_1$ and $M_2$, $k$, the element $s$, the compressed weight assignment $W\restriction_{S \setminus s}$, and the boolean $b$, computes $I_k$.
\end{lemma}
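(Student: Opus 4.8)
The plan is a case analysis on the bit $b = \Ind{s \in I_k}$, reducing each case to a single invocation of \Cref{computing-size-k-isolated-common-independent-set-is-in-cl} on either the exclusion matroids $M_1 - s, M_2 - s$ or the inclusion matroids $M_1\restriction_s, M_2\restriction_s$. Both of these are again linear matroids whose linear representations can be computed in logspace without ever consulting $W(s)$ — for the exclusion matroid one deletes the column of $s$, and for the inclusion matroid one performs Gaussian elimination to bring the column of $s$ into the form $e_1$ and then deletes that row and column — so \Cref{computing-size-k-isolated-common-independent-set-is-in-cl} (together with the membership tests of \Cref{exclusion-inclusion-membership-testing-is-in-cl}) genuinely applies to them. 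Note that the restricted weight assignment $W\restriction_{S\setminus s}$ is still polynomially bounded since $W$ was, and $b=1$ guarantees $\{s\}\in\mathcal{I}_1\cap\mathcal{I}_2$, so $s$ is not a loop and the inclusion matroids are well defined.

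If $b = 0$, then $s \notin I_k$, so $I_k$ is a size-$k$ common independent set of $M_1 - s$ and $M_2 - s$. I would argue that $W\restriction_{S\setminus s}$ still isolates $I_k$ in this pair: any other size-$k$ common independent set $J$ of $M_1 - s, M_2 - s$ is also a size-$k$ common independent set of $M_1, M_2$ avoiding $s$, so $W\restriction_{S\setminus s}(J) = W(J) > W(I_k)$ by the isolation hypothesis on $W$. Hence feeding $M_1 - s$, $M_2 - s$, the weights $W\restriction_{S\setminus s}$, and $k$ into \Cref{computing-size-k-isolated-common-independent-set-is-in-cl} returns exactly $I_k$ (and if $k = 0$ we simply output $\emptyset$).

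If $b = 1$, then $s \in I_k$ and $I_k \setminus \{s\}$ is a size-$(k-1)$ common independent set of $M_1\restriction_s$ and $M_2\restriction_s$. The key step is to verify that $W\restriction_{S\setminus s}$ isolates $I_k \setminus \{s\}$ among size-$(k-1)$ common independent sets of this pair: if $J$ is any such set, then $J \cup \{s\} \in \mathcal{I}_1 \cap \mathcal{I}_2$ has size $k$, and $W(J\cup\{s\}) = W\restriction_{S\setminus s}(J) + W(s)$, while $W(I_k) = W\restriction_{S\setminus s}(I_k\setminus\{s\}) + W(s)$; the unknown quantity $W(s)$ cancels, so $W\restriction_{S\setminus s}(J) \le W\restriction_{S\setminus s}(I_k\setminus\{s\})$ forces $W(J\cup\{s\}) \le W(I_k)$, and isolation of $I_k$ under $W$ then forces $J\cup\{s\} = I_k$, i.e. $J = I_k\setminus\{s\}$. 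So running \Cref{computing-size-k-isolated-common-independent-set-is-in-cl} on $M_1\restriction_s, M_2\restriction_s, W\restriction_{S\setminus s}$ with parameter $k-1$ recovers $I_k \setminus \{s\}$, and we output $(I_k\setminus\{s\}) \cup \{s\}$.

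The only delicate point — and the closest thing to an obstacle here — is the cancellation argument in the $b=1$ case, namely that passing to the inclusion matroid preserves the isolation property even though we have thrown away $W(s)$; once one observes that $W(s)$ appears as a common additive constant across all competitors containing $s$, this is immediate. Everything else (linearity of exclusion/inclusion matroids, computability of their representations in $\CL$, polynomial boundedness of the restricted weights, and the shift-to-positive step internal to \Cref{computing-perfect-common-independent-set-is-in-cl}) is routine and already available from the earlier sections.
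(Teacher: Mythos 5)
Your proposal is correct, but it takes a genuinely different route from the paper. The paper keeps the matroids $M_1,M_2$ intact and instead fabricates a surrogate weight for the compressed element: it sets $W'(s)=\bigl(\sum_{x\neq s}|W(x)|\bigr)\cdot(-1)^b$ and $W'(x)=W(x)$ otherwise, argues that $I_k$ is still the unique minimum-weight size-$k$ common independent set under $W'$, and then makes a single call to \Cref{computing-size-k-isolated-common-independent-set-is-in-cl}. You instead case-split on $b$: for $b=0$ you pass to the exclusion matroids $M_1-s,M_2-s$ with the restricted weights, and for $b=1$ to the inclusion matroids $M_1\restriction_s,M_2\restriction_s$ with parameter $k-1$, proving in each case that isolation transfers (the cancellation of the unknown $W(s)$ across all competitors containing $s$ is exactly the right observation). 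Both arguments are sound. What each buys: the paper's version is a one-line reduction that never needs a representation of a contracted matroid; yours avoids the surrogate-weight bookkeeping entirely --- and in fact is slightly more robust, since the paper's surrogate omits the ``$+1$'' that it does use in \Cref{size-k-restriction-compression}, which leaves a borderline tie possible in degenerate instances, whereas your cancellation argument has no such edge case. The price you pay is that \Cref{computing-size-k-isolated-common-independent-set-is-in-cl} runs the algebraic algorithm of Narayanan--Saran--Vazirani and therefore needs explicit linear representations, not merely the membership tests of \Cref{exclusion-inclusion-membership-testing-is-in-cl}; you do address this, and contracting by a single element indeed requires only one pivot step (deleting a column suffices for the exclusion matroid), which is computable within $\CL$, so the application is legitimate. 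Only cosmetic gaps remain: handle $k-1=0$ in the $b=1$ case by outputting $\{s\}$ directly, just as you output $\emptyset$ when $k=0$.
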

\begin{proof}
Let $|W| = \sum_{x \in S \setminus s} |W(x)|$. Consider the following weight function $W'$:
\[ W'(x) = \begin{cases} 
      W(x) & x \neq s \\
      |W| \cdot (-1)^b & x = s \\
   \end{cases}
\]

\noindent
Clearly, $W'(x) \leq \poly(n)$, and $I_k$ is the unique minimum weight size $k$ common independent set of $M_1$ and $M_2$ under weights $W'$.\\

\noindent
We can now simply apply the algorithm in \Cref{computing-size-k-isolated-common-independent-set-is-in-cl} on $M_1,\ M_2$ with weights $W'$ in order to obtain $I_k$ in $\CL$. 
\end{proof}
\section{Maximum Common Independent Set, and the Exchange Graph}
\label{maximal_independent_set}
In the previous section, we showed that if a weight assignment $W$ isolates a size $k$ common independent set $I_k$, we can compute $I_k$ in $\CL$.\\

\noindent
Our goal is now to decide whether this set $I_k$ is a maximum sized common independent set. If it is, we can simply output it as our final solution.\\

\noindent
In order to check if $I_k$ is maximum, we will use a standard graph-theoretic tool from matroid theory known as the \textit{exchange graph}. This graph generalizes the use of residual graphs and augmenting paths for maximum bipartite matching. For maximum bipartite matching, maximality testing can be reduced to a $s$-$t$ reachability problem in the residual graph~\cite{Berge57}. A similar statement is true for matroid intersection in the exchange graph.

\begin{definition}[Exchange Graph]
\label{def_exchange_graph}
     Let $M_1= (S,\mathcal{I}_1),\ M_2= (S,\mathcal{I}_2),\ W\colon S \rightarrow\mathbb{Z}$ be weighted matroids, and let $I\in \mathcal{I}_1\cap \mathcal{I}_2$ be a common independent set. \\

    \noindent
     The exchange graph $\mathcal{E}_{M_1,M_2,I} = (S, E)$ is a vertex weighted directed graph. Let $x \notin I$ and $y \in I$. The edge set $E$ is defined as follows:
     \begin{enumerate}
         \item 
        $(y,x)\in E \iff I-\{y\}\cup \{x\}\in \mathcal{I}_1$
         \item 
         $(x,y)\in E \iff I-\{y\}\cup \{x\}\in \mathcal{I}_2$.
     \end{enumerate}

    \noindent
     The vertex weights $l:S \rightarrow \mathbb{Z}$ are defined by:
     \[ l (s) =  \begin{cases} 
       W(s), & \text{if } s \notin I \\
     -W(s), & \text{if }s \in I \\
        \end{cases}
    \]

    \noindent
     Furthermore, $\mathcal{E}_{M_1,M_2,I}$ has two special vertex sets $X_1$ and $X_2$ defined as follows:
     \[X_1=\{x\in S \setminus I\mid I \cup \{x\} \in \mathcal{I}_1\}\]
     \[X_2=\{x\in S \setminus I\mid I \cup \{x\} \in \mathcal{I}_2\}\]
\end{definition}

\noindent
We observe that we can compute the Exchange Graph in $\CL$.

\begin{lemma}
\label{exchange-graph-construction-is-in-cl}
There exists a $\CL$ algorithm which:
\begin{enumerate}
    \item Takes as input weighted matroids $M_1 = (S, \mathcal{I}_1), \ M_2 = (S, \mathcal{I}_2), \ W\colon S \rightarrow \mathbb{Z}^{\leq\poly(n)}$ such that membership testing for $M_1$ and $M_2$ can be done in $\CL$, along with $I \in \mathcal{I}_1 \cap \mathcal{I}_2$, 
    \item Computes the exchange graph $\mathcal{E}_{M_1, M_2, I}$.
\end{enumerate}
\end{lemma}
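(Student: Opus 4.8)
The plan is to observe that each ingredient of $\mathcal{E}_{M_1,M_2,I}$ as given in \Cref{def_exchange_graph} — the vertex set $S$, the vertex weights $l$, the edge set $E$, and the distinguished sets $X_1, X_2$ — is specified by a predicate that is decided by a constant number of membership queries to $M_1$ or $M_2$, each of which is in $\CL$ by hypothesis. Since the output tape is write-only, it suffices to \emph{stream} a description of the graph while maintaining only $O(\log n)$ bits of state (essentially a pair of indices into $S$) on the work tape, invoking the $\CL$ membership subroutines as needed.

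Concretely, I would first output the vertex set $S$ together with the weights: for each $s \in S$ we scan the input encoding of $I$ in $O(\log n)$ space to decide whether $s \in I$, and write $l(s) = -W(s)$ if so and $l(s) = W(s)$ otherwise; since $W(s) \le \poly(n)$ this is $O(\log n)$ bits per vertex. Next I would enumerate edges by looping over all ordered pairs $(x,y) \in (S \setminus I) \times I$: for each pair, the set $I - \{y\} \cup \{x\}$ need never be stored, since it is described succinctly as ``$I$ with $y$ deleted and $x$ inserted'', and this description can be fed to the membership oracle; we call the $\CL$ membership subroutine for $M_1$ on it and, on acceptance, output the edge $(y,x)$, and likewise call the subroutine for $M_2$ and output $(x,y)$ on acceptance. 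This reproduces the edge set of \Cref{def_exchange_graph} verbatim. Finally, for $X_1$ and $X_2$ we loop over $x \in S \setminus I$ and test $I \cup \{x\} \in \mathcal{I}_1$ (respectively $\in \mathcal{I}_2$) with the $\CL$ membership subroutine, emitting $x$ into the appropriate list exactly when the test accepts. The hypothesis that membership testing for $M_1, M_2$ is in $\CL$ is exactly what makes all these calls legal; it holds in particular for linear matroids and for inclusion/exclusion matroids built from them by \Cref{linear-matroid-membership-testing-is-in-cl} and \Cref{exclusion-inclusion-membership-testing-is-in-cl}.

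The one point that deserves care is that nesting these $\CL$ subroutine calls inside a $\CL$ ``driver'' loop still yields a $\CL$ algorithm: the driver adds only $O(\log n)$ work space and $\poly(n)$ catalytic space, and catalytic logspace is closed under this form of composition — the callee's catalytic tape is carved out of a reserved block of the caller's catalytic tape and restored before returning — which is precisely the generic catalytic-subroutine machinery established in \Cref{prelims}. Polynomial running time is immediate, as there are $O(|S|^2)$ pairs to process and each membership call runs in polynomial time. I do not anticipate a genuine obstacle here; the lemma is bookkeeping layered on top of the assumed $\CL$ membership oracle and the composition property of $\CL$.
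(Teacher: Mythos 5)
Your proposal is correct and follows essentially the same argument as the paper's proof: enumerate the vertices, weights, edges, and the sets $X_1, X_2$ directly from \Cref{def_exchange_graph}, deciding each edge and each membership in $X_i$ by a single call to the assumed $\CL$ membership subroutine for $M_1$ or $M_2$. The paper's proof is just a terser version of this bookkeeping, leaving the composition of $\CL$ subroutines implicit as you note.
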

\begin{proof}
    There are four parts of $\mathcal{E}$ that the machine needs to compute: the vertices, the edges, the vertex weights, and the special sets $X_1, X_2$:
    \begin{enumerate}
        \item The vertices of $\mathcal{E}_{M_1, M_2, I}$ are simply $S$.
        \item The vertex weight of any vertex $x \in S$ is simply $-W(x)$ if $x \in I$, and $W(x)$ otherwise.
        \item The edges of $G$ can be computed as follows: for $x \in I$ and $y \notin I$, the edge $(x, y)$ is in the graph if and only if $I - x + y \in \mathcal{I}_1$. Since membership testing for $M_1$ is assumed to be in $\CL$, this can be decided in $\CL$. The inclusion of edge $(y, x)$ can be determined in the same way for $M_2$.
        \item $X_1$ and $X_2$ can be computed as follows: $x \in X_i \iff I \cup \{x\} \in \mathcal{I}_i$. Again, this is simply membership testing for $M_1$ and $M_2$.
    \end{enumerate}

\noindent
This completes the proof. 
\end{proof}

\noindent
The exchange graph is a well-known and extensively studied object. Notably, the problem of determining whether a common independent set $I$ is of maximum size can be solved by deciding $s$-$t$ reachability in the exchange graph.
\begin{lemma}[\cite{Edmonds2003,Edmonds2009}]
\label{no_s_t_path_equals_maximal}
    A common independent set $I \in \mathcal{I}_1 \cap \mathcal{I}_2$ is of maximum size if and only if there is no path from $X_1$ to $X_2$ in the exchange graph $\mathcal{E}_{M_1,M_2,I}$.
\end{lemma}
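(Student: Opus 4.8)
The plan is to prove both directions of the equivalence using the standard exchange-graph argument from matroid intersection theory, adapted to the weighted vertex-labelled setup of \Cref{def_exchange_graph}. Note first that the vertex weights $l$ play no role in this particular statement — only the edge structure and the special sets $X_1, X_2$ matter — so I would first observe that we may ignore the weights entirely and work with the unweighted exchange graph.

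For the forward direction, I would prove the contrapositive: if there is a path from $X_1$ to $X_2$ in $\mathcal{E}_{M_1,M_2,I}$, then $I$ is not of maximum size. Take a shortest such path $x_0, y_1, x_1, y_2, \dots, y_t, x_t$ (so $x_0 \in X_1$, $x_t \in X_2$, the $y_i \in I$ and the $x_i \notin I$, and the path alternates using $M_2$-edges from $x$'s to $y$'s and $M_1$-edges from $y$'s to $x$'s — I would double-check orientation conventions against the definition). The key claim is that $I' := I \setminus \{y_1, \dots, y_t\} \cup \{x_0, x_1, \dots, x_t\}$, which has size $|I|+1$, is independent in both $M_1$ and $M_2$. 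This is the classical lemma (see Schrijver, or Edmonds): one uses the fact that the path is \emph{shortest} to rule out ``chords,'' and then applies an exchange-property / unique-perfect-matching argument on the bipartite exchange structure restricted to $\{y_1,\dots,y_t\}$ and $\{x_1,\dots,x_t\}$ for $M_1$, and on $\{y_1,\dots,y_t\}$ and $\{x_0,\dots,x_{t-1}\}$ for $M_2$. Concretely, the bipartite ``exchange graph'' between $I$ and the new elements has a unique perfect matching (forced by shortestness), which by a standard determinant/permanent parity argument implies the symmetric difference is independent; combined with $x_0 \in X_1$ and $x_t \in X_2$ giving the ``endpoints'' room, one gets $I' \in \mathcal{I}_1$ and $I' \in \mathcal{I}_2$.

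For the reverse direction, suppose $I$ is not of maximum size, so there exists a common independent set $J$ with $|J| = |I|+1$. I would use the matroid augmentation/exchange axioms to extract an $X_1$–$X_2$ path from the symmetric difference $I \triangle J$. Considering $I$ and $J$ in $M_1$: since $|J| > |I|$, there is an element we can add, and more usefully one builds an alternating structure in $I \triangle J$; a clean way is to invoke the symmetric exchange property of matroids to show that the bipartite exchange graph between $I\setminus J$ and $J \setminus I$ for $M_1$ (resp.\ $M_2$) contains a perfect matching, and then a parity/counting argument on $|I \triangle J|$ being odd forces the existence of a vertex in $X_1$, a vertex in $X_2$, and a connecting alternating path. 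I would cite the cited references \cite{Edmonds2003, Edmonds2009} for the precise combinatorial lemma and present the argument at the level of ``this is the standard proof; we include it for completeness,'' spelling out only the translation to our notation.

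\textbf{Main obstacle.} The delicate part is the forward direction: showing that contracting a \emph{shortest} $X_1$–$X_2$ path yields a genuinely independent set in \emph{both} matroids simultaneously. The subtlety is that a shortest path guarantees no shortcut edges, but one still has to convert ``no shortcuts'' into ``the relevant bipartite exchange graph has a unique perfect matching,'' and then into linear independence — this is where the unique-perfect-matching-implies-nonsingular trick enters, and it must be applied carefully to the two matroids with their slightly different vertex subsets (the $M_1$ side sees $\{x_1,\dots,x_t\}$, the $M_2$ side sees $\{x_0,\dots,x_{t-1}\}$). I expect to spend most of the proof effort making that step airtight, and I would lean on the structure already set up (membership is decided via the linear representations) only insofar as it is needed; the rest is purely combinatorial and can be quoted from standard matroid-intersection theory.
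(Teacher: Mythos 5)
The paper does not prove this lemma at all: it is quoted directly from the cited literature (Edmonds; see also Theorem 41.2 in Schrijver's textbook, which the paper leans on elsewhere), so there is no in-paper argument to compare against. Your sketch is essentially the standard proof of that classical theorem, and it is sound: the forward direction (shortest $X_1$--$X_2$ path, no shortcuts, hence a triangular and therefore unique perfect matching in each of the two bipartite exchange structures, then the unique-perfect-matching lemma plus the observation that internal path vertices lie outside $X_1$ and $X_2$ so that $x_0$ and $x_t$ can be appended) is exactly how the augmentation step is proved. For the reverse direction you depart slightly from Schrijver, who proves ``no $X_1$--$X_2$ path $\Rightarrow$ $I$ maximum'' via the rank cover $|I| = r_1(U) + r_2(S \setminus U)$ with $U$ the set of vertices reaching $X_2$; your symmetric-difference route also works, but the step you gloss over is the crucial one: after extending $I$ inside $I \cup J$ to an independent set $I'$ of size $|J|$ and applying the equicardinal exchange-matching lemma, one must argue that the single element of $J \setminus I$ left uncovered by the $M_1$-matching lies in $X_1$ (because it sits in $I' \setminus I$, so it is addable to $I$), and symmetrically for $M_2$; only then does following the union of the two matchings from the $X_1$-vertex yield a directed path ending in $X_2$. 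Spelling that out (or simply citing the references, as the paper does) closes the only real gap in your plan.
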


\noindent
An immediate corollary of is the following:
\begin{lemma}
\label{test-if-k+1-exists}
There exists a $\CL$ algorithm which:
\begin{enumerate}
    \item Takes as input weighted matroids $M_1 = (S, \mathcal{I}_1), \ M_2 =  (S, \mathcal{I}_2), \ W\colon S \rightarrow \mathbb{Z}^{\leq \poly(n)}$ such that membership testing for $M_1$ and $M_2$ can be done in $\CL$, along with a size $k$ common independent set $I_k$ of $M_1$ and $M_2$.
    \item Decides whether a size $k+1$ common independent set of $M_1$ and $M_2$ exists.
\end{enumerate}
\end{lemma}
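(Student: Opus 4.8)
The plan is to combine \Cref{no_s_t_path_equals_maximal} with the reachability machinery already established. First I would observe that a size $k+1$ common independent set exists if and only if $I_k$ is \emph{not} of maximum size, since matroid intersection admits augmenting sequences that grow the common independent set one element at a time (this is the standard consequence of the exchange-graph characterization: if $I_k$ is not maximum, there is an $X_1$-$X_2$ path, and flipping membership along a shortest such path yields a common independent set of size exactly $k+1$). Hence the decision problem reduces to deciding whether $I_k$ is maximum.

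Next I would invoke \Cref{exchange-graph-construction-is-in-cl} to construct the exchange graph $\mathcal{E}_{M_1,M_2,I_k}$ in $\CL$; this is legitimate because the hypothesis grants us $\CL$ membership testing for $M_1$ and $M_2$, which is exactly what that lemma needs, and the weights are $\poly(n)$-bounded. With $\mathcal{E}_{M_1,M_2,I_k} = (S,E)$ and its special sets $X_1, X_2$ in hand, I would then decide whether there is a path from $X_1$ to $X_2$. Plain reachability is in $\NL \subseteq \CL$ (by \cite{BuhrmanCleveKouckyLoffSpeelman14}), so this step is immediate; alternatively one can note the vertex weights are irrelevant here and just run a nondeterministic walk of hop-length at most $|V|$ exactly as in the proof of \Cref{shortest-walk-in-nl}. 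By \Cref{no_s_t_path_equals_maximal}, the absence of such a path certifies that $I_k$ is maximum, so no size $k+1$ common independent set exists; the presence of such a path certifies the opposite. The algorithm outputs accordingly.

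Since the exchange graph is computed on a $\CL$ machine and the reachability query is answered by a $\NL \subseteq \CL$ subroutine, the composition runs in $\CL$ (catalytic composition is unproblematic because each subroutine restores its catalytic tape). The only subtlety worth spelling out is the equivalence between ``$I_k$ not maximum'' and ``a size $k+1$ set exists'': I would give the short argument that shortest $X_1$-$X_2$ paths in the exchange graph, when used as augmenting paths, increase $|I_k|$ by exactly one, which is the classical matroid-intersection augmentation fact underlying \Cref{no_s_t_path_equals_maximal} itself. I do not expect a real obstacle here — this lemma is essentially a packaging of \Cref{exchange-graph-construction-is-in-cl}, \Cref{no_s_t_path_equals_maximal}, and $\NL \subseteq \CL$ — so the ``hard part'' is merely stating the augmentation equivalence cleanly enough that the reduction is rigorous.

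\begin{proof}
We claim a size $k+1$ common independent set exists if and only if $I_k$ is not of maximum size. The forward direction is trivial. For the reverse, suppose $I_k$ is not maximum; by \Cref{no_s_t_path_equals_maximal} there is a path from $X_1$ to $X_2$ in $\mathcal{E}_{M_1,M_2,I_k}$. Taking a shortest such path $P$ and forming the symmetric difference of $I_k$ with the vertex set of $P$ yields, by the standard matroid intersection augmentation argument \cite{Edmonds2003, Edmonds2009}, a common independent set of size exactly $|I_k| + 1 = k+1$.

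Thus it suffices to decide whether $I_k$ is of maximum size. By \Cref{exchange-graph-construction-is-in-cl}, using the assumed $\CL$ membership oracles for $M_1$ and $M_2$, we construct the exchange graph $\mathcal{E}_{M_1,M_2,I_k} = (S,E)$ together with the special sets $X_1, X_2$ in $\CL$. By \Cref{no_s_t_path_equals_maximal}, $I_k$ is maximum if and only if there is no $X_1$-$X_2$ path in this graph. Deciding $X_1$-$X_2$ reachability in a directed graph is in $\NL \subseteq \CL$~\cite{BuhrmanCleveKouckyLoffSpeelman14}. Hence the machine accepts (a size $k+1$ common independent set exists) if and only if such a path is found, and this entire procedure runs in $\CL$.
\end{proof}
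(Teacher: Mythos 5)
Your proof is correct and matches the paper's argument: construct the exchange graph via \Cref{exchange-graph-construction-is-in-cl} and test $X_1$--$X_2$ reachability (the paper uses \Cref{shortest-path-in-cl}, you use plain $\NL$ reachability, which is equivalent here), with correctness from \Cref{no_s_t_path_equals_maximal}. Your extra care in spelling out that ``$I_k$ not maximum'' is equivalent to ``a size $k+1$ set exists'' is fine, though downward closure of common independent sets gives it even more directly than the augmentation argument.
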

\begin{proof}
The following $\CL$ algorithm works:
\begin{enumerate}
    \item Compute the exchange graph $\mathcal{E}_{M_1, M_2, I_k}$ (\Cref{exchange-graph-construction-is-in-cl}). 
    \item Test whether an $X_1-X_2$ path in $\mathcal{E}_{M_1, M_2, I_k}$ exists. (\Cref{shortest-path-in-cl}). 
\end{enumerate}
\noindent
Correctness follows directly from \Cref{no_s_t_path_equals_maximal}.
\end{proof}
\section{Checking if \texorpdfstring{$W$}{w} Isolates a Size \texorpdfstring{$k+1$}{k + 1} Common Independent Set}
\label{section_intersecting_set_from_k_to_k_+_1}

\noindent
In the last two sections, we provided catalytic algorithms for computing an isolated size $k$ common independent set $I_k$, and then testing whether $I_k$ is maximum. If not, we know that a size $k+1$ common independent set exists. Our goal in this section is to either certify that $W$ also isolates a size $k+1$ common independent set, or to show how to compress (and later decompress) the catalytic tape if it doesn't.

\subsection{More Facts About the Exchange Graph}
\noindent
We presented in \Cref{no_s_t_path_equals_maximal} a connection between the paths in the exchange graph and the existence of common independent sets of size $k+1$. We now present slightly deeper properties about this connection:

\begin{theorem}[\cite{Brualdi1969,Krogdahl1974,Krogdahl1976,Krogdahl1977, Fujushige1977,Edmonds2003,Edmonds2009}]
\label{main-ext-theorem}

Let $M_1 = (S, \mathcal{I}_1),\ M_2 = (S, \mathcal{I}_2),\ W:S \rightarrow \mathbb{Z}$ be weighted matroids. Let $I$ be a minimum weight, not necessarily unique, size $k$ common independent set of $M_1$ and $M_2$.\\

\noindent
Let $P$ be a minimum weight minimum hop-length path from $X_1$ to $X_2$ in $\mathcal{E}_{M_1, M_2, I_k}$.\\

\noindent
The following is known:

\begin{enumerate}
    \item $\mathcal{E}_{M_1, M_2, I_k}$ does not contain any negative weight cycles.
    \item  $I_{k+1} = I_k \Delta P$ is a minimum weight size $k+1$ common independent set of $M_1$ and $M_2$.
\end{enumerate}
\end{theorem}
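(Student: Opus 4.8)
The plan is to reduce the theorem to the classical single-matroid exchange lemma of Krogdahl, in two forms. The \emph{cycle form}: if $N=(S,\mathcal{J})$ is a matroid, $I\in\mathcal{J}$, and $x_1,\dots,x_r\in S\setminus I$, $y_1,\dots,y_r\in I$ are distinct with $I-y_i+x_i\in\mathcal{J}$ for all $i$, then $(I\setminus\{y_1,\dots,y_r\})\cup\{x_1,\dots,x_r\}\in\mathcal{J}$ \emph{provided} $\{x_1y_1,\dots,x_ry_r\}$ is the unique perfect matching of the bipartite graph with edges $\{x_iy_j : I-y_j+x_i\in\mathcal{J}\}$. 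The \emph{path form} (the statement that justifies augmenting along shortest paths in the exchange graph): additionally given $x_0\in S\setminus I$ with $I+x_0\in\mathcal{J}$, one gets $(I\setminus\{y_1,\dots,y_r\})\cup\{x_0,x_1,\dots,x_r\}\in\mathcal{J}$ under the analogous uniqueness/``no shortcut'' hypothesis. Before invoking these, I would record two elementary facts about $\mathcal{E}:=\mathcal{E}_{M_1,M_2,I}$. First, $\mathcal{E}$ is bipartite between $I$ and $S\setminus I$ and every arc crosses the bipartition, so every simple directed cycle $C$ alternates sides, giving $|I\Delta V(C)|=|I|$, while every simple directed $X_1$-$X_2$ path $P$ has exactly one more vertex outside $I$ than inside, so $|I\Delta V(P)|=|I|+1$. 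Second, the vertex weighting $l$ satisfies $W(I\Delta V(Q))=W(I)+l(Q)$ for every such $Q$, directly from the case split in the definition of $l$.

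For part (1), suppose $\mathcal{E}$ contained a negative-weight cycle, and choose one, $C$, with the fewest arcs. I claim $I\Delta V(C)\in\mathcal{I}_1\cap\mathcal{I}_2$; granting this, $I\Delta V(C)$ is a size-$k$ common independent set with $W(I\Delta V(C))=W(I)+l(C)<W(I)$, contradicting the minimality of $W(I)$. To prove the claim, suppose $I\Delta V(C)\notin\mathcal{I}_1$ (the case $\notin\mathcal{I}_2$ is symmetric). The $M_1$-arcs of $C$ form a perfect matching of the $M_1$-exchange bipartite graph on $V(C)$, so by the cycle form of Krogdahl's lemma this matching is not unique; keeping the $M_2$-arcs of $C$ and replacing the $M_1$-arcs by a distinct perfect matching obtained by swapping along a \emph{minimal} alternating cycle of that bipartite graph produces a genuine shortcut of $C$, i.e.\ a disjoint union of directed cycles on $V(C)$ whose total $l$-weight is $l(C)<0$ and which has at least two components; one component is then a negative cycle with strictly fewer arcs than $C$, a contradiction. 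Verifying that this replacement really decreases the arc count, rather than merely producing another single cycle on $V(C)$, is the delicate point, and is exactly what choosing the alternating cycle to be minimal buys.

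For part (2), we may assume an $X_1$-$X_2$ path exists (otherwise $I$ is already maximum by \Cref{no_s_t_path_equals_maximal} and there is nothing to prove); by part (1) $\mathcal{E}$ has no negative cycle, so a minimum-weight $X_1$-$X_2$ path exists, and we fix one, $P=x_0\to y_1\to x_1\to\cdots\to y_t\to x_t$, of minimum weight and, among those, of minimum hop-length (allowing $P=(x_0)$ with $x_0\in X_1\cap X_2$ when $t=0$). By the bookkeeping facts, $I':=I\Delta V(P)=(I\setminus\{y_1,\dots,y_t\})\cup\{x_0,\dots,x_t\}$ has size $k+1$ and weight $W(I)+l(P)$. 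To see $I'\in\mathcal{I}_1$, apply the path form of Krogdahl's lemma in $M_1$ to the exchanges $I+x_0\in\mathcal{I}_1$ (from $x_0\in X_1$) and $I-y_i+x_i\in\mathcal{I}_1$ (the $M_1$-arcs of $P$); its uniqueness/no-shortcut hypothesis is precisely what the minimality of $P$ in weight-then-hop-length guarantees, because any violation would yield a strictly shorter $X_1$-$X_2$ path of weight at most $l(P)$ or, via part (1), an outright contradiction. Symmetrically $I'\in\mathcal{I}_2$, using $I+x_t\in\mathcal{I}_2$ and the $M_2$-arcs of $P$. Hence $I'\in\mathcal{I}_1\cap\mathcal{I}_2$ and $\min_{k+1}(M_1,M_2)\le W(I')=W(I)+l(P)$.

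For the matching lower bound, let $J$ be any size-$(k+1)$ common independent set. Applying the standard exchange-matching lemma to $M_1$ --- use $|J|>|I|$ to pick $b_1\in J\setminus I$ with $I+b_1\in\mathcal{I}_1$, then match $I\setminus J$ bijectively into $(J\setminus I)\setminus\{b_1\}$ through $M_1$-exchange edges (which are among the $M_1$-arcs of $\mathcal{E}$) --- and symmetrically to $M_2$, and superimposing the two matchings as arcs of $\mathcal{E}$, one obtains a subgraph on vertex set $I\Delta J$ that decomposes into vertex-disjoint directed cycles together with a single directed $X_1$-$X_2$ path (from $b_1\in X_1$ to the corresponding $b_2\in X_2$, possibly a single vertex when $b_1=b_2$). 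Every cycle has $l$-weight $\ge 0$ by part (1), and the path has $l$-weight $\ge l(P)$ by the minimality of $P$, so $l(I\Delta J)\ge l(P)$ and thus $W(J)=W(I)+l(I\Delta J)\ge W(I)+l(P)=W(I')$. Combining with the previous bound, $W(I')=\min_{k+1}(M_1,M_2)$, so $I_{k+1}=I\Delta V(P)$ is a minimum-weight size-$(k+1)$ common independent set. The main obstacle throughout is the ``no shortcut'' analysis: verifying that a minimum-arc negative cycle and a minimum-weight-then-minimum-hop $X_1$-$X_2$ path induce exchange bipartite graphs to which Krogdahl's lemma applies; everything else is the identity $W(I\Delta V(Q))=W(I)+l(Q)$ together with matroid exchange-matching decompositions.
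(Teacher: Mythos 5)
First, note that the paper itself does not prove this theorem: it is imported from the literature and the paper explicitly defers all details to Section 41.3 of Schrijver. So you are reconstructing the classical argument, and your overall route is indeed the standard one: Krogdahl-type unique-matching exchange lemmas for the upper bound, and the decomposition of $I\Delta J$ (via two exchange matchings, one per matroid) into vertex-disjoint circuits plus one $X_1$-$X_2$ path for the lower bound. The bookkeeping identities $|I\Delta V(C)|=|I|$, $|I\Delta V(P)|=|I|+1$, $W(I\Delta V(Q))=W(I)+l(Q)$, and the entire lower-bound paragraph (choosing $b_1\in X_1$, $b_2\in X_2$ by augmentation, matching $I\setminus J$ into the rest, and using part (1) on the circuits) are correct and match the classical proof.

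The genuine gap is exactly at the point you flag and then wave away. In part (1), after Krogdahl produces a second perfect matching $N'$ of the $M_1$-exchange bipartite graph on $V(C)$, the union of $N'$ with the $M_2$-arcs of $C$ can be a \emph{single} directed cycle on all of $V(C)$, and swapping along a minimal alternating cycle does not prevent this. Model the $M_2$-arcs as the cyclic map $x_i\mapsto y_{i+1}$ and a matching as a permutation $\pi$ (meaning $y_i$ is matched to $x_{\pi(i)}$); the union is one cycle iff $i\mapsto \pi(i)+1$ is an $r$-cycle. With $r=4$ and chords exactly $(y_1,x_2),(y_2,x_3),(y_3,x_1)$, the only alternative matching is $\pi=(1\,2\,3)$, its alternating cycle is already the minimal one, and $i\mapsto\pi(i)+1$ is the $4$-cycle $(1\,3\,2\,4)$: one component, same weight, same arc count, so your minimal-arc-count choice of $C$ yields no contradiction. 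Ruling such configurations out (or arguing around them) is precisely the content of Schrijver's Lemma 41.5$\alpha$ — the statement the paper itself quotes in \Cref{no_negative_cycles} and \Cref{cycle-shifting-lemma} — and your sketch supplies no proof of it. The same hole reappears in part (2): the ``no shortcut'' hypothesis does \emph{not} follow ``precisely'' from weight-then-hop minimality of $P$, since a chord $y_i\to x_j$ with $j>i$ whose skipped segment has negative $l$-weight makes the shortcut path heavier, not lighter, so minimality of $P$ is silent about it (one would need an extra argument, e.g.\ via the no-negative-circuit property together with the same union-decomposition analysis, which again hits the single-component subtlety); likewise the step from $I\setminus\{y_1,\dots,y_t\}\cup\{x_1,\dots,x_t\}\in\mathcal{I}_1$ to adding $x_0$ needs justification (e.g.\ that no $x_i$ with $i\ge 1$ lies in $X_1$, which is again not immediate in the weighted setting). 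In short, the skeleton is the classical one, but the two uniqueness/no-shortcut verifications you treat as consequences of minimality are the actual technical core of the cited theorem and remain unproved.
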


\noindent
\Cref{main-ext-theorem} is described in detail in Section 41.3 of the textbook 'Combinatorial Optimization' by Alexander Schrijver \cite{Schrijver03Textbook}. \footnote{Schrijver's description is in terms of maximum weight common independent sets and uses the weight function $-l$. However, it is simple to see that these descriptions are equivalent, simply by substituting the matroid weights $-w$ into Schrijver's theorems.} \\

\noindent
Additionally, in the case where $W$ isolates a size $k$ common independent set, we observe that all cycles must have strictly positive weight.
\begin{lemma}
\label{no_negative_cycles}
Let $M_1 = (S, \mathcal{I}_1),\ M_2 = (S, \mathcal{I}_2),\ W:S \rightarrow \mathbb{Z}$ be weighted matroids such that $W$ isolates a size $k$ common independent set $I_k$ of $M_1$ and $M_2$. $\mathcal{E}_{M_1, M_2, I_k}$ does not contain any weight $0$ cycles.
\end{lemma}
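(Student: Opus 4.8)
The plan is to argue by contradiction: assume $\mathcal{E}_{M_1,M_2,I_k}$ contains a cycle of weight at most $0$. Since the isolated set $I_k$ is in particular a minimum weight size $k$ common independent set, \Cref{main-ext-theorem}(1) applies and rules out negative weight cycles, so every such cycle has weight exactly $0$. Among all weight $0$ cycles fix one, $C$, with the fewest arcs. Recall $C$ alternates between $\mathcal{I}_1$-arcs (directed from $I_k$ to $S\setminus I_k$) and $\mathcal{I}_2$-arcs (directed from $S\setminus I_k$ to $I_k$), so $|V(C)\cap I_k| = |V(C)\setminus I_k|$ and $V(C)\neq\emptyset$. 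Set $I' := I_k \Delta V(C)$; the goal is to show $I'$ is a size $k$ common independent set with $W(I') = W(I_k)$ and $I'\neq I_k$, which contradicts the fact that $W$ isolates $I_k$.

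Two of the three things to check are routine. First, $|I'| = |I_k| = k$, since $I'$ drops from $I_k$ the $|V(C)\cap I_k|$ vertices of $V(C)$ lying in $I_k$ and adds the equally many vertices of $V(C)$ outside $I_k$. Second, by the definition of the vertex weights $l$ in \Cref{def_exchange_graph},
\[
W(I') - W(I_k) \;=\; \sum_{x\in V(C)\setminus I_k} W(x) \;-\; \sum_{y\in V(C)\cap I_k} W(y) \;=\; \sum_{v\in V(C)} l(v) \;=\; 0,
\]
so $W(I') = W(I_k)$, and $I'\neq I_k$ because $V(C)\neq\emptyset$. Hence everything reduces to the claim $I'\in\mathcal{I}_1\cap\mathcal{I}_2$.

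The remaining (and main) step is to prove $I'\in\mathcal{I}_1\cap\mathcal{I}_2$. The $\mathcal{I}_1$-arcs of $C$ form a perfect matching $N_1$ between $A := V(C)\cap I_k$ and $B := V(C)\setminus I_k$ in the bipartite exchange graph $H_1 := \bigl(A\cup B,\ \{\,yx : y\in A,\ x\in B,\ I_k - y + x\in\mathcal{I}_1\,\}\bigr)$; and if $N_1$ is the \emph{unique} perfect matching of $H_1$, then the matroid exchange lemma (\cite{Schrijver03Textbook}, Corollary 39.12a) gives $(I_k\setminus A)\cup B = I'\in\mathcal{I}_1$, and symmetrically $I'\in\mathcal{I}_2$. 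So it suffices to show $H_1$ (and $H_2$) has a unique perfect matching, and this is where the minimality of $C$ enters: a second perfect matching of $H_1$ would yield an alternating cycle in $H_1$ with respect to $N_1$ whose non-matching edges are $\mathcal{I}_1$-arcs of $\mathcal{E}_{M_1,M_2,I_k}$ joining vertices of $C$ (``chords''); rerouting $C$ along these chords while keeping all $\mathcal{I}_2$-arcs of $C$ produces a closed walk on the vertex set $V(C)$, hence of weight $0$, which by \Cref{main-ext-theorem}(1) decomposes into weight $0$ directed cycles — from which one extracts a weight $0$ cycle shorter than $C$, contradicting the choice of $C$. Carrying out this ``no shortcut'' argument is the technical heart; it is the cycle analogue of the standard fact, implicit in \Cref{main-ext-theorem}(2), that a shortest augmenting path in the exchange graph has no chords, and the bookkeeping (choosing the right quantity to minimize so that the rerouting strictly decreases it) is the part to handle carefully. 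Alternatively, one may simply quote from the weighted matroid intersection theory in \cite{Schrijver03Textbook}, §41.3, the characterisation of when a minimum weight common independent set of a given size is the unique one, namely the absence of nonpositive weight cycles in its exchange graph.
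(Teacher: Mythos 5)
Your overall strategy --- a weight-$0$ cycle $C$ would make $I_k \Delta V(C)$ a second minimum weight size-$k$ common independent set, contradicting isolation --- is the same idea that underlies the paper's proof, and your size and weight computations are fine. But the step you yourself flag as ``the part to handle carefully'' is a genuine gap, not mere bookkeeping. Taking $C$ to be a weight-$0$ cycle with the fewest arcs does not make the rerouting argument go through: if $N_1'$ is a second perfect matching of $H_1$, then the arc set $N_1' \cup \{\mathcal{I}_2\text{-arcs of } C\}$ has in-degree and out-degree $1$ at every vertex of $V(C)$, so it decomposes into directed cycles covering $V(C)$; when this decomposition is a \emph{single} cycle on all of $V(C)$, that cycle has exactly $|V(C)|$ arcs and weight $0$ --- the same arc count and the same weight as $C$ --- so minimality of $C$ is not violated. (Concretely, for $C = y_1 x_1 y_2 x_2 y_3 x_3$ the second matching $N_1' = \{y_1x_2,\ y_2x_3,\ y_3x_1\}$ reroutes $C$ into another Hamiltonian cycle on the same six vertices.) Hence uniqueness of the perfect matching in $H_1$ (and $H_2$) is not established, the appeal to Corollary 39.12a of \cite{Schrijver03Textbook} is unjustified as written, and a finer choice of what to minimize (or a different exchange argument) is needed; that is precisely the nontrivial content of the circuit lemma in Schrijver, which your sketch does not supply.

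The paper avoids all of this by quoting that lemma directly: Lemma 41.5$\alpha$ of \cite{Schrijver03Textbook} states that the mere existence of a directed cycle $C$ in $\mathcal{E}_{M_1,M_2,I_k}$ yields a size-$k$ common independent set $I_k' \neq I_k$ with $W(I_k') < W(I_k)$ or $W(I_k') \leq W(I_k) + l(C)$, and with $l(C) = 0$ this contradicts isolation immediately. Your fallback (``simply quote the characterisation of uniqueness via absence of nonpositive cycles from \S 41.3'') is the right instinct and is essentially the paper's proof, but the precise statement to invoke is Lemma 41.5$\alpha$ rather than a uniqueness characterisation stated in that form; as it stands, your primary argument is incomplete and your alternative is an imprecise citation.
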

\begin{proof}
Lemma 41.5$\alpha$ from \cite{Schrijver03Textbook} states that if $\mathcal{E}_{M_1, M_2, I_k}$ contains a cycle $C$, then there exists a size $k$ common independent set $I'_k \neq I_k$ such that either $W(I'_k) < W(I_k)$ or $W(I'_k) \leq W(I_k) + l(C)$. Thus, if there were a cycle $C$ such that $l(C) = 0$, it would imply that $I_k$ is not the unique minimum weight size $k$ common independent set. 
\end{proof}

\noindent
Now, we showed in \Cref{isolation_lemma} that, when $W$ isolates a size $k$ common independent set $I_k$, we can construct $I_k$. We need to additionally compute, for any $s \in S$, the (not necessarily unique) minimum weight size $k$ common independent $I^{+s}$ such that $s \in I^{+s}$, and similarly the minimum weight size $k$ common independent $I^{-s}$ such that $s \notin I^{-s}$. Note that either $I^{+s}$ or $I^{-s}$ is $I_k$. In order to compute the other, we need the following fact: 

\begin{lemma}
\label{cycle-shifting-lemma}
Let $M_1 = (S, \mathcal{I}_1)$, $M_2 = (S, \mathcal{I}_2)$, $W:S \rightarrow \mathbb{Z}^{\leq \poly(n)}$ be weighted matroids which admit a unique minimum weight size $k$ common independent set $I_k$. Let $s \in S$.\\

\noindent
Let $C^*$ be the minimum weight minimum hop-length cycle $C$ in $\mathcal{E}_{M_1, M_2, I_k}$ such that $s \in C$. Then $I = I_k \Delta C^*$ is a minimum weight size $k$ common independent set of $M_1$ and $M_2$ such that $s \in I \iff s \notin I_k$. That is, \[I_k \Delta C^* \in \argmin_{I \in \{I \in \mathcal{I}_1 \cap \mathcal{I_2} \ | \ |I| = k, \ s \in I \Delta I_k\}} W(I)\]
\end{lemma}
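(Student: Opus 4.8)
The plan is to exploit the structural correspondence (Lemma 41.5$\alpha$ of Schrijver, as invoked in \Cref{no_negative_cycles}) between cycles in the exchange graph $\mathcal{E}_{M_1,M_2,I_k}$ and size $k$ common independent sets. Concretely, I would first establish that $I := I_k \Delta C^*$ is itself a size $k$ common independent set, and that its weight is $W(I_k) + l(C^*)$; then argue that $l(C^*)$ is exactly the minimum over all size $k$ common independent sets $I'$ with $s \in I' \Delta I_k$ of the quantity $W(I') - W(I_k)$. Combining these two facts gives the claim.

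The first step uses the well-known fact (again, the cited $\Delta$-operation lemmas of Brualdi/Krogdahl/Fujishige/Edmonds collected in Section 41.3 of \cite{Schrijver03Textbook}) that for a \emph{minimal-length} cycle $C$ in the exchange graph, $I_k \Delta C \in \mathcal{I}_1 \cap \mathcal{I}_2$ and $|I_k \Delta C| = |I_k| = k$ — the minimum hop-length condition is precisely what rules out the pathological "shortcutting" cases and guarantees the symmetric difference stays independent in both matroids. Since $C^*$ is chosen to be a minimum weight \emph{minimum hop-length} cycle through $s$, it qualifies. Moreover, because $s \in C^*$ exactly one of $s \in I_k$, $s \in I$ holds, since $C^*$ alternates between $I_k$ and $S \setminus I_k$ and toggles membership of every vertex it passes through; hence $s \in I \iff s \notin I_k$, and in particular $s \in I \Delta I_k$. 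The weight identity $W(I) = W(I_k) + l(C^*)$ is immediate from the definition of the vertex weights $l$ in \Cref{def_exchange_graph}: vertices entering $I$ (those in $S \setminus I_k$, weight $+W$) are added and vertices leaving $I_k$ (those in $I_k$, weight $-W$) are removed, so the net change in $W$ is exactly $\sum_{v \in C^*} l(v) = l(C^*)$.

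For the optimality direction, let $I'$ be any size $k$ common independent set with $s \in I' \Delta I_k$. The standard exchange-graph analysis shows that $I_k \Delta I'$ decomposes, in the exchange graph, into a disjoint union of cycles (and possibly paths, but since $|I'| = |I_k|$ there are no leftover paths from $X_1$ to $X_2$ — any path component would change the size), and $W(I') - W(I_k) = l(I_k \Delta I') = \sum_i l(C_i)$ over these cycle components $C_i$. Since $s \in I' \Delta I_k$, the element $s$ lies in one of these components, say $C_{i_0}$, which is then a cycle through $s$; and by \Cref{no_negative_cycles} (combined with part (1) of \Cref{main-ext-theorem}) every $C_i$ has \emph{strictly positive} weight, so $W(I') - W(I_k) = \sum_i l(C_i) \ge l(C_{i_0}) \ge l(C^*)$, the last inequality because $C^*$ is a minimum weight cycle through $s$. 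This shows $W(I) = W(I_k) + l(C^*) \le W(I')$ for every competitor $I'$, and since $I$ is itself such a competitor, $I \in \argmin$, which is the statement.

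The main obstacle I anticipate is the decomposition claim in the optimality direction: showing that $I_k \Delta I'$ corresponds to a vertex-disjoint family of cycles in $\mathcal{E}_{M_1,M_2,I_k}$ with the weight being additive over components, and that no path component survives when $|I'| = |I_k|$. This is classical matroid-intersection theory (it is the engine behind Lemma 41.5$\alpha$), but care is needed because the decomposition into cycles need not be unique and the relevant lemma in Schrijver is stated as a one-sided inequality ("there exists $I'$ with $W(I') \le W(I_k) + l(C)$") rather than an exact correspondence. I would either cite the exact exchange lemma that gives the decomposition with additivity, or — more robustly — invoke \Cref{no_negative_cycles} to note that no cancellation can help, so the one-sided bounds suffice: the "$\le$" from the construction and the "$\ge$" from decomposing $I'$ both go through with only the positive-cycle property in hand. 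The hop-length minimality of $C^*$ is used only in the construction step (step one), not in the optimality step.
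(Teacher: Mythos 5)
Your proposal follows essentially the same route as the paper's proof: independence of $I_k \Delta C^*$ is deferred to Schrijver's exchange lemmas (the paper invokes Lemma 41.5$\alpha$ together with the absence of negative-weight cycles and the hop-length tie-break among \emph{minimum-weight} cycles through $s$ --- note that hop-length minimality alone is not the hypothesis, so you should cite the weighted version rather than attribute independence purely to shortest hop-length), and your optimality argument is exactly the paper's: decompose $I' \Delta I_k$ into vertex-disjoint cycles via Theorem 41.5, use additivity and non-negativity of cycle weights, and bound below by $l(C^*)$. There is no substantive difference in approach.
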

\begin{proof}
First, we will show that $I_k \Delta C^*$ is a common independent set of $M_1$ and $M_2$. Lemma 41.5$\alpha$ from \cite{Schrijver03Textbook} states that if $I_k \Delta C^* \notin \mathcal{I}_1 \cap \mathcal{I}_2$, then there exists either a negative weight cycle in $\mathcal{E}_{M_1, M_2, I_k}$ (which is impossible by \Cref{main-ext-theorem}), or there exists another cycle $C \subsetneq C^*$ with $l(C) \leq l(C^*)$. The latter would contradict the fact that $C^*$ has minimum hop-length amongst the minimum weight cycles. Therefore, neither of these cases are possible, and hence $I_k \Delta C^*$ is necessarily a common independent set of $M_1$ and $M_2$. Moreover, $(I_k \Delta C^*) \Delta I_k = C^*$. Thus, $s \in (I_k \Delta C^*) \Delta I_k$.\\

\noindent
Now, we will prove the minimality of $I_k \Delta C^*$. Let $I' \in \mathcal{I}_1 \cap \mathcal{I}_2$ be a size $k$ common independent set such that $s \in I' \Delta I_k$. Theorem 41.5 of \cite{Schrijver03Textbook} shows that $I' \Delta I_k$ is the union of disjoint cycles in $\mathcal{E}_{M_1, M_2, I_k}$. Let these cycles be $C_1, \dots, C_m$. Since $s \in I' \Delta I_k$, there must exist a cycle containing $s$, assume that it is $C_1$. $W(I') = W(I_k) + \sum_{i=1}^m l(C_i) \geq W(I_k) + l(C_1) \geq W(I_k) + l(C^*)$. The second to last step was due to the fact that all cycles have non-negative weight in $\mathcal{E}_{M_1, M_2, I_k}$. The last step was due to the fact that $l(C^*)$ is the minimum weight amongst all cycles in $\mathcal{E}_{M_1, M_2, I_k}$ which contain $s$.\\

\noindent
This completes the proof.
\end{proof}

\subsection{Threshold Elements}
\label{threshold_element}
In order to handle the case where $W$ does not isolate a size $k+1$ common independent set, we need to introduce the concept of \textit{threshold elements}.

\begin{definition}[Threshold Elements]
Let $M_1 = (S, \mathcal{I}_1)$, $M_2 = (S, \mathcal{I}_2)$, $W:S \rightarrow \mathbb{Z}^{\leq \poly(n)}$ be weighted matroids which admit a unique minimum weight size $k$ common independent set $I_k$.\\

\noindent
An element $s \in S$ is a $k+1$-threshold element if there exist two size $k+1$ minimum weight common independent sets, $I$ and $I'$, such that $s \in I$ and $s \notin I'$.\\

\noindent
The set of $k+1$-threshold elements of $M_1$ and $M_2$ is denoted by $T^{k+1}$.
\end{definition}

\begin{lemma}[Threshold elements exist $\iff$ the minimum weight size $k+1$ common independent set is not unique]\ \\
\label{threshold-existence}
\noindent
Let $M_1 = (S, \mathcal{I}_1),\ M_2 = (S, \mathcal{I}_2),\ W:S \rightarrow \mathbb{Z}$ be weighted matroids, and let $k \in [|S|]$ such that $W$ isolates a size $k$ common independent set $I_k$ of $M_1$ and $M_2$, and a size $k+1$ common independent set of $M_1$ and $M_2$ exists. The following statements are equivalent:
\begin{enumerate}
    \item The minimum weight size $k+1$ common independent set of $M_1$ and $M_2$ is not unique. 
    \item $|T^{k+1}| \geq 1$.
\end{enumerate}
\end{lemma}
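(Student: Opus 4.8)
The plan is to establish the equivalence by proving each implication directly from the definitions; I do not expect a genuine obstacle here, since the statement is essentially a reformulation of what it means for a threshold element to exist. I would structure the proof as two short paragraphs, one per direction.

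For the direction $(2) \Rightarrow (1)$: assume $|T^{k+1}| \geq 1$ and fix a threshold element $s \in T^{k+1}$. By definition of threshold element there are two minimum weight size $k+1$ common independent sets $I$ and $I'$ with $s \in I$ and $s \notin I'$. In particular $I \neq I'$, so there are at least two distinct minimum weight size $k+1$ common independent sets, i.e. the minimum weight size $k+1$ common independent set is not unique.

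For the direction $(1) \Rightarrow (2)$: assume the minimum weight size $k+1$ common independent set is not unique, so there exist $I \neq I'$, both of size $k+1$ and both attaining the minimum weight $\min_{k+1}(M_1,M_2)$. Since $|I| = |I'| = k+1$ and $I \neq I'$, two equal-size sets coincide iff one contains the other, so the difference $I \setminus I'$ is nonempty. Pick any $s \in I \setminus I'$. Then $s \in I$, $s \notin I'$, and $I, I'$ are both minimum weight size $k+1$ common independent sets, so $I$ and $I'$ witness that $s \in T^{k+1}$; hence $|T^{k+1}| \geq 1$.

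The hypotheses that $W$ isolates a size $k$ common independent set and that a size $k+1$ common independent set exists are not really used in the argument itself — the former only ensures the notion of threshold element is well-defined (it requires a unique minimum weight size $k$ set), and the latter just guarantees the family over which the minimum in $(1)$ is taken is nonempty. So there is no substantive obstacle; the only point requiring a bit of care is to keep track that the two witnessing sets in the definition of a threshold element must both be \emph{minimum weight} size $k+1$ common independent sets, which is precisely the condition that matches up with non-uniqueness of the minimum weight size $k+1$ common independent set.
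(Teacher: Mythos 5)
Your proof is correct and follows essentially the same trivial argument as the paper: both directions come straight from the definitions, with a threshold element witnessed by any $s \in I \setminus I'$ for two distinct minimum weight size $k+1$ sets, and conversely. Your extra remark that $I \setminus I' \neq \emptyset$ because the sets are distinct and of equal size is a fine (if unnecessary in the paper's eyes) bit of care.
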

\begin{proof}
This  trivially holds:
\begin{enumerate}
    \item $(1) \implies (2)$: Let $I_1$ and $I_2$ be minimum weight size $k+1$ common independent sets of $M_1$ and $M_2$. Any element $s \in I_1 \setminus I_2$ is a threshold element. Thus, $|\mathcal{I}^{k+1}_{\min}| > 1 \implies |T^{k+1}_{M_1, M_2}| \geq 1$.

    \item $(2) \implies (1)$: By definition of $T^{k+1}_{M_1, M_2}$, for any $s \in T^{k+1}_{M_1, M_2}$, there exist $X, X' \in \mathcal{I}^{k+1}_{\min}$ such that $X \neq X'$. Thus, $|T^{k+1}_{M_1, M_2}| \geq 1 \implies |\mathcal{I}^{k+1}_{\min}| > 1$ 
\end{enumerate}

\noindent
This completes the proof.
\end{proof}

\noindent
 The goal is therefore simply to decide whether a threshold element exists, and if it does then to find one. In order to do this, first observe that we can define a threshold element in terms of inclusion and exclusion matroids.
\begin{observation}
\label{observation_threshold_element}
 $s \in T^{k+1} \iff \min_{k}(M_1 \restriction_s, M_2 \restriction_s) + w(s) = \min_{k+1}(M_1 - \{s\}, M_2 - \{s\})$.
\end{observation}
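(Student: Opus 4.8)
The plan is to rewrite each side of the claimed identity as the minimum weight of a size-$(k+1)$ common independent set of $M_1,M_2$ that is constrained to contain (resp. to avoid) $s$, and then to check that $s$ is a threshold element exactly when the two constrained minima agree.

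First I would translate the inclusion and exclusion matroids back into statements about $M_1,M_2$. By definition of the inclusion matroid, a size-$k$ set $J\subseteq S\setminus\{s\}$ is a common independent set of $M_1\restriction_s$ and $M_2\restriction_s$ if and only if $J\cup\{s\}$ is a size-$(k+1)$ common independent set of $M_1$ and $M_2$ containing $s$; since $M_i\restriction_s$ carries the restriction of $w$, we have $w(J\cup\{s\})=w(J)+w(s)$. Hence, writing $a$ for the minimum weight of a size-$(k+1)$ common independent set of $M_1,M_2$ that contains $s$ (with $a=\infty$ if no such set exists), I get $a=\min_k(M_1\restriction_s,M_2\restriction_s)+w(s)$, the left-hand side. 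Symmetrically, a size-$(k+1)$ set $J\subseteq S\setminus\{s\}$ is a common independent set of $M_1-\{s\}$ and $M_2-\{s\}$ if and only if it is a size-$(k+1)$ common independent set of $M_1,M_2$ avoiding $s$, with the same weight, so $b:=\min_{k+1}(M_1-\{s\},M_2-\{s\})$ equals the minimum weight of a size-$(k+1)$ common independent set of $M_1,M_2$ not containing $s$.

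Next I would finish with a short case analysis. Every size-$(k+1)$ common independent set of $M_1,M_2$ either contains $s$ or does not, so $\mu:=\min_{k+1}(M_1,M_2)=\min(a,b)$, and $\mu$ is finite since a size-$(k+1)$ common independent set is assumed to exist. If $s\in T^{k+1}$, the defining witnesses $I\ni s$ and $I'\not\ni s$ both have weight $\mu$, which gives $a\le\mu$ and $b\le\mu$; combined with the trivial bounds $a,b\ge\mu$ this yields $a=b$. Conversely, if $a=b$ then $\mu=\min(a,b)=a=b$, so an optimal set for $a$ is a minimum-weight size-$(k+1)$ common independent set containing $s$ and an optimal set for $b$ is a minimum-weight one avoiding $s$; these two sets witness $s\in T^{k+1}$. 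Thus $s\in T^{k+1}$ if and only if $a=b$, which is the stated equation.

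I do not expect a genuine obstacle here; the argument is essentially bookkeeping. The only points requiring a little care are stating the inclusion/exclusion bijections cleanly (using the standing convention that $s$ is not a loop, so that $M_i\restriction_s$ is genuinely a matroid) and handling the degenerate case where no size-$(k+1)$ common independent set contains $s$ (so $a=\infty$): there the identity fails because $b\le\mu<\infty$, and correspondingly $s\notin T^{k+1}$, so both sides are consistently false.
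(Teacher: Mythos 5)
Your proof is correct and is exactly the intended argument: the paper states this as an observation without proof, and your translation of both sides into the constrained minima over size-$(k+1)$ common independent sets containing/avoiding $s$, followed by comparison with the global minimum $\min_{k+1}(M_1,M_2)$, is the natural bookkeeping it relies on. The only (harmless) omission is that the degenerate case where no size-$(k+1)$ common independent set \emph{avoids} $s$ should be noted symmetrically to the case you treat, which the paper's algorithm handles by skipping such $s$.
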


\noindent
We will show that for any $s \in S$, both sides of this equation can be computed in $\CL$. Thus, we can simply iterate over all $s \in S$, compute both sides of this equation, and test if equality holds - this suffices to both decide whether a threshold element exists, and to find one if it does.

\subsection{The Hunt for Threshold Elements}
\label{finding_threshold_element}
In this subsection, we will describe how to decide in $\CL$ if an element $s \in S$ is a threshold element, and thus how to find a threshold element if it exists. By the equation in \Cref{observation_threshold_element}, this reduces to computing for a fixed $s$ the weights $a = \min_{k}(M_1 \restriction_s, M_2 \restriction_s)$, $b = \min_{k+1}(M_1 - \{s\}, M_2 - \{s\})$, and then simply checking if $b=a+w(s)$.\\

\noindent
For this, let us first make the following observation:

\begin{lemma}
\label{compute-min-k+1}
There exists a $\CL$ algorithm which, given as input weighted matroids $M_1 = (S, \mathcal{I}_1), \ M_2 =  (S, \mathcal{I}_2), \ W\colon S \rightarrow \mathbb{Z}^{\leq \poly(n)}$ such that membership testing for $M_1$ and $M_2$ can be done in $\CL$, along with a minimum weight size $k$ common independent set $I_k$, computes $\min_{k+1}(M_1, M_2)$.
\end{lemma}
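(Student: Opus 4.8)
The plan is to reduce $\min_{k+1}(M_1,M_2)$ to a single shortest path computation in the exchange graph, using \Cref{main-ext-theorem}. First I would compute $W(I_k)=\sum_{s\in I_k}W(s)$; since the weights are $\poly(n)$-bounded, so is this sum, and it can be maintained by a running counter, so this step is in $\Logspace\subseteq\CL$. Then I would invoke \Cref{exchange-graph-construction-is-in-cl} to build the exchange graph $\mathcal{E}_{M_1,M_2,I_k}$ together with its vertex weights $l$ and special sets $X_1,X_2$; this is where the hypothesis that membership testing for $M_1,M_2$ is in $\CL$ gets used.

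Since $I_k$ is a minimum weight size $k$ common independent set, \Cref{main-ext-theorem}(1) guarantees that $\mathcal{E}_{M_1,M_2,I_k}$ has no negative weight cycle, so \Cref{shortest-path-in-cl} applies: it computes $d$, the minimum weight of a simple $X_1$-$X_2$ path, or reports that no such path exists. In the latter case \Cref{no_s_t_path_equals_maximal} tells us that $I_k$ is already of maximum size, hence no size $k+1$ common independent set exists and we report this accordingly; otherwise the algorithm outputs $W(I_k)+d$. The whole procedure is a constant-length composition of $\CL$ subroutines, and $\CL$ is closed under composition, so it runs in $\CL$.

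For correctness, let $P$ be a minimum weight, minimum hop-length $X_1$-$X_2$ path as in \Cref{main-ext-theorem}. Because $\mathcal{E}_{M_1,M_2,I_k}$ has no negative weight cycle, $P$ is simple and $l(P)=d$ (the minimum weight over walks, paths, and simple paths all coincide, and the minimum hop-length condition is only a tiebreak). By \Cref{main-ext-theorem}(2), $I_{k+1}:=I_k\Delta P$ is a minimum weight size $k+1$ common independent set, so $\min_{k+1}(M_1,M_2)=W(I_{k+1})$. It remains to check $W(I_{k+1})=W(I_k)+d$. By the definition of the edge set in \Cref{def_exchange_graph} every edge joins a vertex of $I_k$ with one of $S\setminus I_k$, and $X_1,X_2\subseteq S\setminus I_k$, so $P$ has the form $x_0,y_1,x_1,\dots,y_t,x_t$ with each $x_j\notin I_k$ and each $y_i\in I_k$; hence $I_k\Delta P=(I_k\setminus\{y_1,\dots,y_t\})\cup\{x_0,\dots,x_t\}$ and, using $l(s)=W(s)$ for $s\notin I_k$ and $l(s)=-W(s)$ for $s\in I_k$,
\[
W(I_{k+1}) = W(I_k) - \sum_{i=1}^{t} W(y_i) + \sum_{j=0}^{t} W(x_j) = W(I_k) + \sum_{v\in P} l(v) = W(I_k)+d.
\]

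I do not expect a real obstacle here: the only points needing mild care are verifying that the extremal value returned by \Cref{shortest-path-in-cl} genuinely equals the weight $l(P)$ of the path furnished by \Cref{main-ext-theorem} (which follows from the absence of negative weight cycles), and handling the degenerate case where $k+1$ exceeds the size of a maximum common independent set.
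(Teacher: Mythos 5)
Your proposal is correct and follows essentially the same route as the paper: build the exchange graph via \Cref{exchange-graph-construction-is-in-cl}, compute the minimum weight $X_1$-$X_2$ path via \Cref{shortest-path-in-cl} (justified by the no-negative-cycle guarantee of \Cref{main-ext-theorem}), and output $W(I_k)$ plus that path weight. Your additional verification that $W(I_k\Delta P)=W(I_k)+l(P)$ and your handling of the no-path case are just more explicit versions of what the paper leaves to \Cref{main-ext-theorem}.
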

\begin{proof} The following algorithm works:
\begin{enumerate}
    \item Compute $\mathcal{E}_{M_1, M_2, I_k}$ (\Cref{exchange-graph-construction-is-in-cl}).
    \item Compute the minimum weight of an $X_1$-$X_2$ path in $\mathcal{E}_{M_1, M_2, I_k}$ (\Cref{shortest-path-in-cl}). Let it be $L$.
    \item Output $w(I_k) + L$.
\end{enumerate}

\noindent
Correctness follows from \Cref{main-ext-theorem}.
\end{proof}

\noindent
Now, note that $\min_{k+1}(M_1 - \{s\}, M_2 - \{s\})$ is simply the minimum weight of a size $k+1$ common independent set of $M_1$ and $M_2$ \textit{not containing} $s$. \Cref{compute-min-k+1} tells us that in order to compute this, it suffices to compute the minimum weight size $k$ common independent set not containing $s$ - let such a set be $I^{-s}$. Similarly, $\min_{k}(M_1 \restriction_s, M_2 \restriction_s)$ is simply the minimum weight of a size $k+1$ common independent set \textit{containing} $s$, but excluding the weight of $s$ itself. \Cref{compute-min-k+1} tells us that in order to compute this, it suffices to compute the minimum weight size $k$ common independent set containing $s$ - let such a set be $I^{+s}$. Simply note that either $I^{+s}$ or $I^{-s}$ is $I_k$. \Cref{cycle-shifting-lemma} implies that the other can be computed by finding a minimum weight cycle $C$ in $\mathcal{E}_{M_1, M_2, I_k}$ using \Cref{computing-min-weight-cycle-in-cl}, and then taking the symmetric difference of $I_k$ and $C$.

\begin{lemma}
\label{size-k-restriction-compression}
Let $M_1 = (S, \mathcal{I}_1)$, $M_2 = (S, \mathcal{I}_2)$, $W:S \rightarrow \mathbb{Z}^{\leq \poly(n)}$ be weighted linear matroids which admit a unique minimum weight size $k$ common independent set $I_k$. Let $s \in S$ and let $b = \Ind{s \in I_k}$ indicate whether $s \in I_k$.\\

\noindent
Given $M_1, \ M_2, \ W\restriction_{S \setminus s},\ k, \ s, \ b $, there exists a $\CL$ algorithm which either:
\begin{enumerate}
    \item Outputs a minimum weight size $k$ common independent set $I'_k$ of $M_1$ and $M_2$ such that $s \in I'_k \iff s \notin I_k$, or
    \item Certifies that such a common independent set does not exist.
\end{enumerate}
\end{lemma}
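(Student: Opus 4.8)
The plan is to reconstruct $I_k$, build a copy of its exchange graph in which the unknown weight $W(s)$ has been replaced by a harmless surrogate value, and then read the desired set off the cheapest cycle through $s$, using \Cref{cycle-shifting-lemma} for correctness.

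First I would recover $I_k$: since $M_1,M_2$ are linear and we are given $k$, $s$, $W\restriction_{S\setminus\{s\}}$ and $b=\Ind{s\in I_k}$, we apply \Cref{size-k-included-unique-common-independent-set} directly to obtain $I_k$ in $\CL$, after which we know exactly which elements lie in $I_k$. By \Cref{cycle-shifting-lemma}, if $\mathcal{E}_{M_1,M_2,I_k}$ has a cycle through $s$ and $C^*$ is the minimum weight, minimum hop-length such cycle, then $I'_k:=I_k\Delta C^*$ is a valid answer: it is a size $k$ common independent set, it is of minimum weight among size $k$ common independent sets $I$ with $s\in I\Delta I_k$, and it satisfies $s\in I'_k\iff s\notin I_k$. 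So it suffices to locate $C^*$ in $\CL$ when $s$ lies on a cycle, and to recognize the absence of such a cycle otherwise.

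The obstacle is that the exchange graph assigns $s$ the vertex weight $\pm W(s)$, and $W(s)$ is precisely what we are not given. The key observation is that every simple cycle through $s$ passes through $s$ exactly once, so changing the weight of the vertex $s$ shifts the weight of each cycle through $s$ by one common constant and leaves the weights of cycles avoiding $s$ unchanged; it therefore alters neither which cycle through $s$ has minimum weight nor, among those, which has minimum hop-length. I would thus build, via \Cref{exchange-graph-construction-is-in-cl}, the exchange graph for the weight function that agrees with $W$ on $S\setminus\{s\}$ and assigns $s$ the weight $0$ when $b=1$ and the weight $\beta$ when $b=0$, where $\beta=\poly(n)$ is an upper bound on all the weights (known, since by hypothesis $W$ is $\poly(n)$-bounded); call the resulting vertex-weighted digraph $G$. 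Both choices are non-negative and $\poly(n)$-bounded, so \Cref{exchange-graph-construction-is-in-cl} applies, and the construction uses only $W\restriction_{S\setminus\{s\}}$, the bit $b$, and $\CL$ membership queries for linear matroids. Relative to the true exchange graph $\mathcal{E}_{M_1,M_2,I_k}$, the weight of the vertex $s$ has changed by $+W(s)\geq 0$ (when $b=1$) or $+(\beta-W(s))\geq 0$ (when $b=0$); since by \Cref{main-ext-theorem} and \Cref{no_negative_cycles} every cycle of $\mathcal{E}_{M_1,M_2,I_k}$ has weight at least $1$, the same holds for $G$, so $G$ has no negative-weight cycle, and a cycle through $s$ has minimum weight then minimum hop-length in $G$ if and only if it does so in $\mathcal{E}_{M_1,M_2,I_k}$.

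Finally I would test, by $\NL\subseteq\CL$ reachability, whether $s$ lies on any cycle of $G$. If it does not, then there is no size $k$ common independent set $I$ with $s\in I\Delta I_k$ --- otherwise Theorem~41.5 of \cite{Schrijver03Textbook} would express $I\Delta I_k$ as a disjoint union of cycles of $\mathcal{E}_{M_1,M_2,I_k}$, one of which would contain $s$ --- so we report that no valid set exists. Otherwise we run \Cref{computing-min-weight-cycle-in-cl} on $G$ with special vertex $s$ to obtain $C^*$; by the previous paragraph $C^*$ is also a minimum weight, minimum hop-length cycle through $s$ in $\mathcal{E}_{M_1,M_2,I_k}$, so we output $I'_k:=I_k\Delta C^*$, whose correctness is exactly \Cref{cycle-shifting-lemma}. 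Every step runs in $\CL$, hence so does the whole algorithm. The delicate point is the one flagged above: one must see that the missing value $W(s)$ is irrelevant to the choice of $C^*$, yet still choose a surrogate for it that keeps $G$ free of negative cycles, which is why the sign in $\pm W(s)$ --- equivalently, the bit $b$ --- forces the small case split in the surrogate.
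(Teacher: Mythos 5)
Your proposal is correct and follows essentially the same route as the paper: recover $I_k$ via \Cref{size-k-included-unique-common-independent-set}, build the exchange graph with a surrogate value in place of the unknown $W(s)$, compute a minimum weight minimum hop-length cycle through $s$ with \Cref{computing-min-weight-cycle-in-cl}, and output $I_k \Delta C^*$ or certify non-existence, with correctness from \Cref{cycle-shifting-lemma}. The only difference is cosmetic: the paper picks a surrogate of large magnitude and sign $(-1)^b$ so that \Cref{cycle-shifting-lemma} applies directly under the modified weights, whereas you pick $0$ or an upper bound $\beta$ and argue via the constant shift of all cycles through $s$ that the computed cycle is also optimal under the true weights --- both justifications are sound.
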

\begin{proof}

\noindent
Consider $W':S \rightarrow \mathbb{Z}^{\leq \poly(n)}$ such that $W'(x) = W(x)$, for $x \neq s$, and $W'(s) = (1 + \sum_{x \in S, x \neq s} |W(s)|) \cdot (-1)^{b}$. \\

\noindent
The minimum weight size $k$ common independent set of $M_1$ and $M_2$ with respect to $W'$ is unique and must be $I_k$. Thus, it can be computed using \Cref{computing-size-k-isolated-common-independent-set-is-in-cl}.\\

\noindent
For every $S_1, S_2 \subseteq S$ such that $s \in S_1 \iff s \in S_2$, we have $W'(S_1) \leq W'(S_2) \iff W(S_1) \leq W(S_2) $. Thus, the weight size $k$ common independent set $I$ of $M_1$ and $M_2$ under $W'$ such that $s \in I \iff s \notin I_k$ is the same as that under $W$, and must be $I'_k$ (though $I'_k$ need not be unique). The algorithm is fairly straightforward:

\begin{enumerate}
    \item Compute $I_k$ (\Cref{computing-size-k-isolated-common-independent-set-is-in-cl}).
    \item Compute $\mathcal{E}_{M_1, M_2, I_k}$ under weights $W'$ (using \Cref{exchange-graph-construction-is-in-cl}). Let the lengths be $l'$.
    \item Compute a minimum weight minimum hop-length cycle $C^*$ of $\mathcal{E}_{M_1, M_2, I_k}$ containing $s$ under weights $l'$ (using \Cref{computing-min-weight-cycle-in-cl}). If no such cycle exists, certify the second case of the statement.
    \item Else, output $C^* \Delta I_k$. 
\end{enumerate}

\noindent
The correctness of the algorithm follows from \Cref{cycle-shifting-lemma}.
\end{proof}

\begin{lemma}
\label{size-k-restriction-and-inclusion-compression-1}
Let $M_1 = (S, \mathcal{I}_1)$, $M_2 = (S, \mathcal{I}_2)$, $W:S \rightarrow \mathbb{Z}^{\leq \poly(n)}$ be weighted linear matroids which admit a unique minimum weight size $k$ common independent set $I_k$. Let $s \in S$ and $b = \Ind{s \in I_k}$.\\
\noindent
Given $M_1, \ M_2, \ W\restriction_{S \setminus s}, \ s, \ b = \Ind{s \in I_k}$, there exists a $\CL$ algorithm which either:
\begin{enumerate}
    \item Outputs $\min_{k}(M_1\restriction_s, M_2 \restriction_s)$, or
    \item Certifies that a size $k$ common independent set of $M_1\restriction_s$ and $M_2 \restriction_s$ does not exist.
\end{enumerate}

\noindent
Similarly, there exists another $\CL$ algorithm which either:
\begin{enumerate}
    \item Outputs $\min_{k+1}(M_1 - \{s\}, M_2 - \{s\})$, or
    \item Certifies that a size $k+1$ common independent set of $M_1 - \{s\}$ and $M_2 - \{s\}$ does not exist.
\end{enumerate}
\end{lemma}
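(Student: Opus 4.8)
The plan is to reduce both parts to \Cref{compute-min-k+1} run on the inclusion matroids $M_1\restriction_s, M_2\restriction_s$ (for the first part) and on the exclusion matroids $M_1-\{s\}, M_2-\{s\}$ (for the second). Both families have ground set $S\setminus\{s\}$ and weight function $W\restriction_{S\setminus s}$, which is exactly the weight data we are handed, and their membership oracles are in $\CL$ by \Cref{exclusion-inclusion-membership-testing-is-in-cl}; so \Cref{compute-min-k+1} applies provided we can feed it a minimum weight common independent set of size one below the target. The bridge is the elementary weight-preserving correspondence: a size-$j$ common independent set of $M_1\restriction_s, M_2\restriction_s$ is a set $J$ with $J\cup\{s\}\in\mathcal{I}_1\cap\mathcal{I}_2$ of weight $W(J\cup\{s\})-W(s)$, and a size-$j$ common independent set of $M_1-\{s\}, M_2-\{s\}$ is a size-$j$ common independent set of $M_1,M_2$ avoiding $s$ of the same weight. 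Consequently, a minimum weight size-$(k-1)$ common independent set of the inclusion matroids is $I^{+s}\setminus\{s\}$ for any minimum weight size-$k$ common independent set $I^{+s}$ of $M_1,M_2$ containing $s$, and a minimum weight size-$k$ common independent set of the exclusion matroids is any minimum weight size-$k$ common independent set $I^{-s}$ of $M_1,M_2$ avoiding $s$.

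So the first step is to compute such an $I^{+s}$ (resp.\ $I^{-s}$). Here the bit $b=\Ind{s\in I_k}$ does the work. If $b$ already places $s$ on the right side --- $b=1$ for $I^{+s}$, $b=0$ for $I^{-s}$ --- then $I_k$ itself serves, because the globally unique minimum weight size-$k$ common independent set is a fortiori minimum among those containing (resp.\ avoiding) $s$; we recover $I_k$ in $\CL$ from $W\restriction_{S\setminus s}, s, b$ via \Cref{size-k-included-unique-common-independent-set}. Otherwise I call the algorithm of \Cref{size-k-restriction-compression} on $M_1,M_2,W\restriction_{S\setminus s},k,s,b$: by \Cref{cycle-shifting-lemma} it either returns a minimum weight size-$k$ common independent set $I'_k$ with $s$ on the opposite side from $I_k$ --- which is exactly the side we want --- or certifies that no size-$k$ common independent set of $M_1,M_2$ lies on that side. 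In the certifying case downward closure finishes the argument: if no size-$k$ common independent set of $M_1,M_2$ contains $s$ then none of size $k+1$ does either, so $M_1\restriction_s, M_2\restriction_s$ has no size-$k$ common independent set; and if no size-$k$ common independent set avoids $s$ then none of size $k+1$ does, so $M_1-\{s\}, M_2-\{s\}$ has no size-$(k+1)$ common independent set --- in both cases we certify the corresponding alternative.

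The second step just invokes \Cref{compute-min-k+1}: feeding $I^{+s}\setminus\{s\}$ to it on $(M_1\restriction_s, M_2\restriction_s, W\restriction_{S\setminus s})$ yields $\min_k(M_1\restriction_s, M_2\restriction_s)$ (instantiating that lemma's ``$k$'' as our $k-1$), and feeding $I^{-s}$ to it on $(M_1-\{s\}, M_2-\{s\}, W\restriction_{S\setminus s})$ yields $\min_{k+1}(M_1-\{s\}, M_2-\{s\})$; in either case it instead reports non-existence of a common independent set of the target size, which we pass along. A constant number of $\CL$ subroutines are called, so each algorithm is in $\CL$.

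I do not expect a genuine obstacle: the work is bookkeeping --- matching up the ``$+s$''/``$-s$'' sides against $b$, the size shift when passing to the inclusion matroids, and the downward-closure step that makes the non-existence certifications sound. The one corner case is that the inclusion-matroid branch refers to a size-$(k-1)$ set, which is meaningful only for $k\ge 1$ (and $b=1$ already forces $k\ge 1$); the case $k=0$ is handled directly, since $\min_0(M_1\restriction_s, M_2\restriction_s)=0$ when $\{s\}\in\mathcal{I}_1\cap\mathcal{I}_2$ --- testable in $\CL$ by \Cref{linear-matroid-membership-testing-is-in-cl} --- and otherwise that matroid pair has no size-$0$ common independent set.
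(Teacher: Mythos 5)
Your proposal is correct and follows essentially the same route as the paper's proof: use $b$ to decide whether $I_k$ is already on the required side of $s$, otherwise flip via \Cref{size-k-restriction-compression} (certifying non-existence if it fails), and then run \Cref{compute-min-k+1} on the inclusion/exclusion matroids with the size-one-smaller set. The only differences are presentational --- you make explicit the downward-closure justification for the certifications and the $k=0$ corner case, and you fold the existence test into \Cref{compute-min-k+1} where the paper calls \Cref{test-if-k+1-exists} separately.
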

\begin{proof}
\noindent
First for the inclusion matroid. Note that any $J \subseteq S \setminus \{s\}$ is a minimum weight size $k-1$ common independent set of $M_1 \restriction_s$ and $M_2 \restriction_s$ if and only if $I = J \cup \{s\}$ is a minimum weight size $k$ common independent set of $M_1$ and $M_2$ including $s$.\\

\noindent
Thus, the problem of computing a minimum weight size $k-1$ common independent set of $M_1 \restriction_s$ and $M_2 \restriction_s$ reduces simply to computing $I$ (or certifying that it does not exist). This can be done as follows:
\begin{enumerate}
    \item Compute $I_k$ (\Cref{computing-size-k-isolated-common-independent-set-is-in-cl}).
    \item If $b = 1$, we have $I = I_k$.
    \item If $b = 0$, we can compute $I$ using \Cref{size-k-restriction-compression}. If  no such set exists, we can certify the second case of the statement.
\end{enumerate}

\noindent
Now, we have $J = I \setminus \{s\}$, a minimum weight size $k-1$ common independent set of $M_1 \restriction_s$ and $M_2 \restriction_s$. We can now:
\begin{enumerate}
    \item Test if a size $k$ common independent set of $M_1 \restriction_s$ and $M_2 \restriction_s$ exists (using \Cref{test-if-k+1-exists}). If not, we can certify the second case of the statement.
    \item Else, we can compute and output $\min_{k}(M_1\restriction_s, M_2 \restriction_s)$ (using \Cref{compute-min-k+1}).
\end{enumerate}

\noindent
For the exclusion matroid case, the algorithm is similar:
\begin{enumerate}
    \item Compute $I_k$ (using \Cref{size-k-included-unique-common-independent-set}). 
    \item Compute a minimum weight size $k$ common independent set $I'_k$ of $M_1 - \{s\}$ and $M_2 - \{s\}$:
    \begin{enumerate}
        \item If $b = 0$, this is exactly $I_k$.
        \item If $b = 1$, this is a minimum weight size $k$ common independent set of $M_1$ and $M_2$ which does not include $s$. This can be computed using \Cref{size-k-restriction-compression}. If such a set does not exist, we can certify the second case of the statement.
    \end{enumerate}
    \item Test if a size $k+1$ common independent set of $M_1 - \{s\}$ and $M_2 - \{s\}$ exists (using \Cref{test-if-k+1-exists}). If such a set does not exist, we can certify the second case of the statement.
    \item Compute $\min_{k+1}(M_1 - \{s\}, M_2 - \{s\})$ (using \Cref{compute-min-k+1}).
\end{enumerate}

\noindent
This completes the proof.
\end{proof}

\noindent
\Cref{size-k-restriction-and-inclusion-compression-1} enables us to compute both side of the equation in \Cref{observation_threshold_element}. We now need to check, for all $s\in S$, if $s$ satisfies this equation. If it does, $s$ will be our threshold element.

\begin{lemma}
\label{compress-or-certify}
There exists a $\CL$ algorithm which, given weighted linear matroids $M_1 = (S, \mathcal{I}_1), \ M_2 =  (S, \mathcal{I}_2), \ W\colon S \rightarrow \mathbb{Z}^{\leq \poly(n)}$ such that the minimum weight size $k$ common independent set of $M_1$ and $M_2$, $I_k$, is unique, and a size $k+1$ common independent set of $M_1$ and $M_2$ exists:
\begin{enumerate}
    \item Computes $s \in T^{k+1}$. Or
    \item Certifies that the minimum weight size $k+1$ common independent set of $M_1$ and $M_2$ is unique.
    
\end{enumerate}
\end{lemma}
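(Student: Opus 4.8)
The plan is to scan over all elements $s \in S$ and test membership in $T^{k+1}$ using the characterization from \Cref{observation_threshold_element}, evaluating both sides of that equation with the $\CL$ subroutines of \Cref{size-k-restriction-and-inclusion-compression-1}, and falling back on \Cref{threshold-existence} when the scan turns up nothing.

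Concretely, I would first compute the isolated size $k$ common independent set $I_k$ once using \Cref{computing-size-k-isolated-common-independent-set-is-in-cl}; this is legitimate since $W$ isolates $I_k$ by hypothesis. Then I would loop over all $s \in S$: set $b = \Ind{s \in I_k}$, and run the two $\CL$ procedures of \Cref{size-k-restriction-and-inclusion-compression-1} on $(M_1, M_2, W\restriction_{S \setminus s}, s, b)$ to try to obtain $a := \min_k(M_1\restriction_s, M_2\restriction_s)$ and $b' := \min_{k+1}(M_1 - \{s\}, M_2 - \{s\})$. If either procedure reports non-existence, then $s \notin T^{k+1}$: non-existence on the inclusion side means no size $k+1$ common independent set contains $s$, and non-existence on the exclusion side means every size $k+1$ common independent set contains $s$, so in neither case can there be two minimum-weight size $k+1$ common independent sets disagreeing on $s$. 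Otherwise, by \Cref{observation_threshold_element} we have $s \in T^{k+1}$ precisely when $a + W(s) = b'$, so I would test this equality and, if it holds, halt and output $s$.

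If the scan completes without finding such an $s$, then $T^{k+1} = \emptyset$; since a size $k+1$ common independent set exists and $W$ isolates $I_k$, \Cref{threshold-existence} lets us certify that the minimum weight size $k+1$ common independent set is unique, which is what we output. For the complexity bound, the loop counter costs $O(\log n)$ bits of work space, each iteration makes a bounded number of calls to $\CL$ subroutines, and composing catalytic subroutines — reusing the catalytic tape across the successive calls — stays within $\CL$. I do not expect a genuine obstacle here: the lemma is essentially the bookkeeping composition just described, and the only delicate point is verifying that a non-existence report from \Cref{size-k-restriction-and-inclusion-compression-1} really does force $s \notin T^{k+1}$ rather than requiring separate handling, which the case analysis above settles.
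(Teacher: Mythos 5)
Your proposal matches the paper's proof essentially verbatim: both iterate over all $s \in S$, evaluate the two quantities in \Cref{observation_threshold_element} via the $\CL$ procedures of \Cref{size-k-restriction-and-inclusion-compression-1}, skip $s$ when either value fails to exist, test the equality $b' = a + W(s)$, and invoke \Cref{threshold-existence} to certify uniqueness when the scan finds no threshold element. The only cosmetic difference is that you compute $I_k$ ``once'' up front, whereas on an $O(\log n)$ work tape one simply recomputes the bit $\Ind{s \in I_k}$ inside the loop (as the paper effectively does); this does not affect correctness.
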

\begin{proof}

For all $s\in S$, we will do the following test:
\begin{enumerate}
    \item Compute $I_k$ (Using \Cref{computing-size-k-isolated-common-independent-set-is-in-cl}).
    \item Compute $\mathcal{E}_{M_1, M_2, I_k}$ (Using \Cref{exchange-graph-construction-is-in-cl}).
    \item Compute $a = \min_{k}(M_1 \restriction_s, M_2 \restriction_s)$ and $b = \min_{k+1}(M_1 - \{s\}, M_2 - \{s\})$ (Using \Cref{size-k-restriction-and-inclusion-compression-1}). 
    \item If either $a$ or $b$ doesn't exist, continue to the next $s$.
    \item Else, check if $b = a + W(s)$. If yes, then $s \in T^{k+1}$, so we can output $s$ and terminate. Else, $s \notin T^{k+1}$, so we can continue to the next $s$.
\end{enumerate}

\noindent 
The existence of a threshold element $s$ is guaranteed by \Cref{threshold-existence} if and only if the minimum weight size $k+1$ common independent set of $M_1$ and $M_2$ is not unique. Thus, if such an $s$ is found, we can simply output it. If no such $s$ is found, we can certify that the minimum weight size $k+1$ common independent set of $M_1$ and $M_2$ is unique. 
\end{proof}
\section{Final Algorithm}
\noindent
We are now ready to describe our compression and decompression algorithms. 
\label{final_algorithm}
\begin{lemma}
\label{small-compression}
Let $M_1 = (S, \mathcal{I}_1)$ and $M_2 = (S, \mathcal{I}_2)$ be linear matroids given as input.\\
\noindent 
Let $(w, r, \tau)$ be a catalytic tape where:
\begin{enumerate}
    \item 
    $\lvert w\rvert= \lvert S\rvert \cdot 10 \log \lvert S\rvert $. This section of the catalytic tape is interpreted as a weight assignment $w:S \rightarrow \mathbb{Z}^{\leq |S|^{10}}$.
    \item $\lvert r\rvert = 10 \log \lvert S\rvert$. This section of the catalytic tape is interpreted as a `reserve' weight in $\mathbb{Z}^{\leq |S|^{10}}$. 
    \item $|\tau| = \poly(n)$. This section of the catalytic tape has no special interpretation, it simply catalytic space to be used by our catalytic subroutines.
\end{enumerate}  

\noindent 
There exist a pair of $\CL$ algorithms $\mathcal{C}omp$ and $\mathcal{D}ecomp$ with the following behaviour:
\begin{enumerate}
    \item
    $\mathcal{C}omp$, when run on inputs $M_1$, $M_2$, $k \in [|S|]$, $s \in S$ with the catalytic tape containing $(w, r, \tau)$, such that $w$ does not isolate a size $k+1$ common independent set of $M_1$ and $M_2$ ($k$ is the minimum such size), and $s \in T^{k+1}$, outputs nothing but changes the catalytic tape to a string $(w', r', \tau)$ such that
    \begin{enumerate}
        \item 
        $w'(e)=w(e)$ for all $e\neq s$, and $w'(s)=r$.
        \item 
        $r'=(0^{8\log|S| - 1}, s, k, b = \Ind{s \in I_k})$
    \end{enumerate}

    \item $\mathcal{D}ecomp$, when run on inputs $M_1$ and $M_2$ with the catalytic tape $(w', r', \tau)$ outputs nothing, but returns the catalytic tape to its original state $\tau$. 
\end{enumerate}
\end{lemma}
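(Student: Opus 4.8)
The plan is to build both algorithms around the single fact that, for a threshold element $s\in T^{k+1}$, the weight $w(s)$ carries no information beyond what $w\restriction_{S\setminus\{s\}}$, $k$ and the bit $b=\Ind{s\in I_k}$ already determine. Indeed, by \Cref{observation_threshold_element},
\[ w(s)=\min_{k+1}(M_1-\{s\},M_2-\{s\})-\min_{k}(M_1\restriction_s,M_2\restriction_s), \]
and the inclusion and exclusion matroids on the right-hand side depend only on $w\restriction_{S\setminus\{s\}}$. So $\mathcal{C}omp$ can overwrite the $s$-slot of the weight block with the reserve weight $r$, provided it records the short string $(s,k,b)$, and $\mathcal{D}ecomp$ can recompute $w(s)$ from the surviving coordinates of $w$ alone.

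Concretely, I would implement $\mathcal{C}omp$ on inputs $M_1,M_2,k,s$ with tape $(w,r,\tau)$ in three moves. First, since $k$ is the least size at which $w$ fails to isolate a common independent set, $w$ isolates a size-$k$ common independent set $I_k$, so run \Cref{computing-size-k-isolated-common-independent-set-is-in-cl} on $M_1,M_2$, the weight block $w$, and $k$, and read off just the bit $b=\Ind{s\in I_k}$. Second, copy the $10\log|S|$-bit reserve weight $r$ onto the work tape and write it into the $s$-slot of the weight block, which now holds the assignment $w'$ that agrees with $w$ off $s$ and has $w'(s)=r$. Third, overwrite the reserve block with $(0^{8\log|S|-1},s,k,b)$, which fits exactly since $s$ and $k$ take $\log|S|$ bits each and $b$ takes one. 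The $\tau$ block is touched only by the $\CL$ subroutine in the first move, which restores it, so the tape ends in the claimed state $(w',r',\tau)$.

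For $\mathcal{D}ecomp$ on inputs $M_1,M_2$ with tape $(w',r',\tau)$, first read $s,k,b$ from the reserve block. The original assignment $w$ witnesses that some assignment extending $w'\restriction_{S\setminus\{s\}}=w\restriction_{S\setminus\{s\}}$ isolates a size-$k$ common independent set whose indicator at $s$ equals $b$, so the hypotheses of \Cref{size-k-restriction-and-inclusion-compression-1} hold; and because $s\in T^{k+1}$, a size-$k$ common independent set of $M_1\restriction_s,M_2\restriction_s$ and a size-$(k+1)$ common independent set of $M_1-\{s\},M_2-\{s\}$ both exist. Hence the two algorithms of \Cref{size-k-restriction-and-inclusion-compression-1}, run on $M_1,M_2,w'\restriction_{S\setminus\{s\}},s,b$, return numbers $a=\min_k(M_1\restriction_s,M_2\restriction_s)$ and $b'=\min_{k+1}(M_1-\{s\},M_2-\{s\})$ rather than their ``non-existence'' verdicts, and by the displayed identity $w(s)=b'-a$. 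Now read the $s$-slot of the weight block (which still contains $r$), write $b'-a$ into it to restore $w$, and write $r$ into the reserve block; since the $\CL$ subroutines preserve $\tau$, the tape is returned to $(w,r,\tau)$.

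Finally one checks the routine points: every quantity touched — $s,k,b,r,a,b'$ and $b'-a=w(s)$ — is $\poly(n)$-bounded, so both procedures run in $O(\log n)$ work space, and each is a bounded composition of $\CL$ subroutines with $O(\log n)$-space edits of the weight and reserve blocks, hence lies in $\CL$. The only step that is not pure bookkeeping — and the one I expect to require care — is the reconstruction $w(s)=b'-a$ in $\mathcal{D}ecomp$: this is exactly where the threshold property enters (via \Cref{observation_threshold_element}) and where one needs the two inclusion/exclusion minimum-weight values of \Cref{size-k-restriction-and-inclusion-compression-1} to be computable in $\CL$ from the weight assignment with coordinate $s$ erased.
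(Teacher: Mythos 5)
Your proposal is correct and follows essentially the same route as the paper: overwrite $w(s)$ with the reserve weight, store $(s,k,b)$ in the freed slot, and in decompression recover $w(s)$ as $\min_{k+1}(M_1-\{s\},M_2-\{s\})-\min_k(M_1\restriction_s,M_2\restriction_s)$ via \Cref{size-k-restriction-and-inclusion-compression-1} and the threshold identity of \Cref{observation_threshold_element} (which the paper packages as an intermediate lemma). Your only deviations are bookkeeping-level, e.g.\ computing $b$ explicitly before the swap, which the paper leaves implicit.
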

\begin{proof}
    $\mathcal{C}omp$ simply swaps $w(s)$ with $r$, and then writes $(0^{8\log|S| - 1}, s, k, b = \Ind{s \in I_k})$ in the place of $r$.\\
    
\noindent
We now focus on the $\mathcal{D}ecomp$ procedure. Let us first make the following observation. 
    \begin{lemma}
\label{decompression-function}
Let $M_1 = (S, \mathcal{I}_1),\ M_2 = (S, \mathcal{I}_2),\ W:S \rightarrow \mathbb{Z}^{\leq \poly(n)}$ be weighted linear matroids such that the minimum weight size $k$ common independent set of $M_1$ and $M_2$, $I_k$, is unique, and a size $k+1$ common independent set of $M_1$ and $M_2$ exists.\\

\noindent
Given $M_1, \ M_2,\ s \in T^{k+1}, \ b = \Ind{s \in I_k}$, and $W \restriction_{S \setminus \{s\}}$: $\min_{k+1}(M_1, M_2) - w(s)$ and $\min_{k+1}(M_1, M_2)$ can be computed in $\CL$.
\end{lemma}
\begin{proof}
The following works:
\begin{enumerate}
    \item For $\min_{k+1}(M_1, M_2) - w(s)$, simply note that this value is exactly $\min_{k}(M_1 \restriction_s, M_2 \restriction_s)$, which can be computed using in \Cref{size-k-restriction-and-inclusion-compression-1}.
    \item On the other hand, $\min_{k+1}(M_1, M_2)$ is exactly $\min_{k+1}(M_1 - \{s\}, M_2 - \{s\})$, which can be computed again using \Cref{size-k-restriction-and-inclusion-compression-1}.
\end{enumerate}
\end{proof}

\noindent
Recall that $r'=(0^{8\log|S| - 1}, s, k, b = \Ind{s \in I_k})$. Therefore, using \Cref{decompression-function}, we can compute $a=\min_{k+1}(M_1,M_2)-w(s)$ and $b=\min_{k+1}(M_1,M_2)$ from the information stored in $r'$, and thus compute $w(s)=b-a$. Once we have $w(s)$ on the work tape, we can swap $r'$ and $w'(s)$, and then swap $w'(s)$ and $w(s)$. This is guaranteed to return the catalytic tape to its original state.\\

\noindent
This completes the proof.
\end{proof}

\noindent
We are now ready to present our final algorithm. The idea is to proceed in stages: in each stage, if the weight assignment $w$ on the catalytic tape isolates common independent sets of all sizes,  we can compute and output a maximum sized common independent set. Else there exists $k$ such that a size $k+1$ common independent set exists, but is not isolated by $w$. In this case we can use our compression procedure on the catalytic tape. After $\poly(n)$ stages, we will have either solved the problem, or freed up polynomial space on the catalytic tape, at which point we can run Edmond's polynomial time algorithm~\cite{Edmonds2003,Edmonds2009, Edmonds1979} directly on the catalytic tape.\\

\noindent
Let us present this $\CL$ algorithm in detail.

\begin{theorem}[Restatement of \Cref{thm:lmi-in-cl}]
\label{final-algorithm}
$\lmi$ is in $\CL$.
\end{theorem}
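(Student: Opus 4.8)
The plan is to run the compress-or-random framework of Cook et al.~\cite{CookLiMertzPyne25} on top of the $\CL$ subroutines built in the previous sections. First I would fix the catalytic tape layout as $(w, R, \tau)$, where $w$ holds a weight assignment $w\colon S \to \mathbb{Z}^{\leq |S|^{10}}$ in $|S|\cdot 10\log|S|$ bits, $R$ is a reserve region split into $N_0$ blocks of $10\log|S|$ bits each, and $\tau$ is $\poly(n)$ bits of scratch to be handed to the $\CL$ subroutines (each of which restores it on return). The parameter $N_0 = \poly(n)$ is chosen so that the cells zeroed out by $N_0$ compressions comfortably exceed the work space used by the classical deterministic algorithm for $\lmi$~\cite{Edmonds2003,Edmonds2009,Edmonds1979}; with this choice the total catalytic space is still $\poly(n)$, within the $\CL$ budget.

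The main loop works with the current contents of $w$ and a counter $k$, initialised to $0$ (the size-$0$ common independent set $\emptyset$ is always isolated). At a stage with value $k$ I would: compute $I_k$ with \Cref{computing-size-k-isolated-common-independent-set-is-in-cl}; use \Cref{test-if-k+1-exists} to decide whether a size $k+1$ common independent set exists — if not, $I_k$ is maximum, so output it and go to restoration. Otherwise run the compress-or-certify routine of \Cref{compress-or-certify}: if it certifies that $w$ isolates a size $k+1$ common independent set, increment $k$ (which is bounded by $|S|$) and continue; if instead it returns a threshold element $s \in T^{k+1}$, invoke $\mathcal{C}omp$ of \Cref{small-compression} on $M_1, M_2, k, s$ with the next unused reserve block $R_j$, which moves the former content of $R_j$ into $w(s)$ and writes $(0^{8\log|S|-1}, s, k, \Ind{s\in I_k})$ into $R_j$; then advance $j$ and restart the loop at $k=0$ with the modified $w$. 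When $N_0$ compressions have occurred, instead regard the $\poly(n)$-many now-zeroed cells scattered across $R_1,\dots,R_{N_0}$ as one work tape (addressed through an $O(\log n)$ index, since the free cells sit in a fixed pattern), run the deterministic $\lmi$ algorithm on $M_1, M_2$, output its answer, and go to restoration; this fallback touches only the zeroed cells, so $w$, the $(s,k,b)$ suffixes of the reserve blocks, and $\tau$ stay intact.

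Each stage calls finitely many $\CL$ subroutines; between consecutive compressions $k$ ascends through at most $|S|+1$ values and there are at most $N_0$ compressions, so the loop runs for $\poly(n)$ stages, and the work tape only ever holds $k$, the counter $j$, and subroutine-local data — all $O(\log n)$ bits — while the catalytic tape stays $\poly(n)$. It remains to describe how the catalytic tape is returned to its initial content.

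To restore the tape I would undo the compressions in last-in-first-out order $i = j-1, j-2, \dots, 1$: read $(s_i, k_i, b_i)$ from $R_i$; apply $\mathcal{D}ecomp$ of \Cref{small-compression}, which via \Cref{decompression-function} rebuilds the original $w(s_i)$ as $\min_{k_i+1}(M_1,M_2) - \big(\min_{k_i+1}(M_1,M_2) - w(s_i)\big)$, both terms being $\CL$-computable from $M_1, M_2, s_i, b_i$ and the remaining entries of $w$; then copy the current value of $w(s_i)$ — still equal to the old content of $R_i$ — back into $R_i$ and install the rebuilt weight at position $s_i$. Processing in reverse order guarantees that when block $i$ is reached the $w$-section has already been returned to its state $w^{(i)}$ right after the $i$-th compression, so $w^{(i)}\!\restriction_{S\setminus s_i} = w^{(i-1)}\!\restriction_{S\setminus s_i}$ is exactly the input $\mathcal{D}ecomp$ needs and the reconstruction is valid; after all $j-1$ steps the catalytic tape equals its initial content, proving $\lmi \in \CL$. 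I expect the delicate part to be exactly this bookkeeping — ensuring the deterministic fallback never clobbers a cell needed to rebuild a compressed weight, and that the $w$-section is in precisely the right intermediate state at each $\mathcal{D}ecomp$ call so that the identity of \Cref{decompression-function} applies — since all the genuinely algorithmic work (isolated-set computation, maximality testing, threshold-element search, and the inclusion/exclusion-matroid routines behind $\mathcal{C}omp$ and $\mathcal{D}ecomp$) is already supplied by the earlier lemmas.
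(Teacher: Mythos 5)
Your proposal is correct and follows essentially the same route as the paper's proof: the same tape layout $(w, R, \tau)$, the same main loop (compute $I_k$, test maximality, certify isolation at size $k+1$ or compress a threshold element and reset $k$), the same fallback to a deterministic $\lmi$ algorithm on the freed cells after polynomially many compressions, and the same last-in-first-out decompression via \Cref{decompression-function}. The only differences are presentational (e.g.\ you parametrize the number of reserve blocks by required work space rather than by the running time $T$ of the fallback algorithm, and you spell out the LIFO restoration invariant more explicitly), which do not change the argument.
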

\begin{proof}
Let $M_1 = (S, \mathcal{I}_1)$ and $M_2 = (S, \mathcal{I}_2)$ be linear matroids given as input, and let us define $n=\lvert S\rvert $.\\

\noindent
Let $T$ be the time taken by a strongly polynomial time algorithm $\mathcal{A}$ to solve linear matroid intersection on $M_1$ and $M_2$ (we only need $T$ to be a $\poly(n)$ upper bound, so it is easy to compute in $\CL$).\\

\noindent
The catalytic tape is partitioned as follows:
    \begin{itemize}
        \item 
        The first section comprises of $10 \cdot n\log{n}$ bits. It is interpreted as $n$ blocks of $10\log{n}$ bits each, which correspond to a weight assignment $W:S \rightarrow \mathbb{Z}^{\leq n^{10}}$.
        \item 
        The second section will hold $T \cdot 10 \log n$ bits. It is interpreted as $T$ blocks of $10 \log n$ bits each, each block corresponding to an element of $\mathbb{Z}^{\leq n^{10}}$. Let these be $r_1, \dots, r_T$
        \item 
        The final section $\tau$ has no special interpretation. It is simply catalytic space to be used by our catalytic subroutines.
    \end{itemize}

\noindent
On the work tape we maintain two counters $k\in [n]$ and $j\in [T]$. $k$ is initialised to $0$ and $j$ is initialised to $1$.\\

\noindent
The algorithm iterates through $k$ while maintaining the invariant that the weights $W$ isolate a size $k$ common independent $I_k$ of $M_1$ and $M_2$. At each step of the iteration, we do the following:
\begin{enumerate}
    \item 
    Check maximality.  
    
    \noindent
    The algorithm first checks if $I_k$ is a maximum sized common independent set of $M_1$ and $M_2$ by computing it using \Cref{computing-size-k-isolated-common-independent-set-is-in-cl} and then using the algorithm in \Cref{test-if-k+1-exists}. If it is of maximum size, the algorithm simply outputs $I_k$.
    
\item 
Check if $W$ isolates a $k+1$-size common independent set. 

\noindent
If $I_k$ is not a maximum sized common independent set of $M_1$ and $M_2$, the algorithm checks if $W$ isolates a size $k+1$ common independent set, or else finds a threshold element $s \in T^{k+1}$ using \Cref{compress-or-certify}. If the algorithm does not find any threshold element, it can simply increment $k$ and return to step 1. Else, it proceeds to the next step.
\item 
Compression.

\noindent
We have a threshold element $s \in T^{k+1}$. We can now use the $\mathcal{C}omp$ procedure defined in \Cref{small-compression}, with the catalytic tape $(W, r_j, \tau)$. The weight of $s$ on the catalytic tape is replaced with $r_j$, and more importantly the place of $r_j$ now has the first $> 7\log n$ bits equal to $0$. We increment the counter $j$ by 1, set $k$ to $0$, and return to step $1$.
\end{enumerate}

\noindent
Each iteration either increments $k$ or increments $j$, the latter of which increases monotonically. If $k$ reaches $|S|$ at any point, we have a maximum sized common independent set of $M_1$ and $M_2$ by definition, and step 1 will output this. Else, if $j$ exceeds $T$, we have compressed each $r_i$ for $i \in [T]$. This means that the first $7\log n$ bits of each block $r_i$ are all $0$s. This acts as free space, which we can use to run the algorithm $\mathcal{A}$ and solve linear matroid intersection, and output the result. Once this is done, we can proceed to the next part.\\

\noindent
Decompression.

\noindent 
The algorithm simply needs to revert the changes it made to the catalytic tape via the $\mathcal{C}omp$ procedure. It does this by simply iteratively decrementing $j$, and then running the $\mathcal{D}ecomp$ procedure on $M_1$ and $M_2$ using the catalytic tape $(w, r_j, \tau)$, until $j$ reaches 0, at which point the catalytic tape is fully restored.
\end{proof}
\section{Conclusion and Open Problems}

In this paper, we solve $\lmi$ in catalytic logspace by derandomizing the isolation lemma, extending the bipartite matching result of~\cite{AgarwalaMertz25}. We thus present the hardest problem yet known to be in $\CL$. A natural question is if we can solve harder problems in $\CL$. We briefly discuss two candidates:

\begin{enumerate}
    \item Linear matroid parity. This problem admits a deterministic polynomial time algorithm~\cite{Lovasz1978,Lovasz1979,Lovasz1980}, thus a $\CL$ algorithm may be the next logical step. Just as bipartite matching is a special case of linear matroid intersection, non-bipartite matching is a special case of the linear matroid parity problem. 
    \item Exact linear matroid intersection. This problem admits a randomized polynomial time algorithm~\cite{MulmuleyVaziraniVazirani87}, but is not known to be in $\P$. Thus, a $\CL$ algorithm for this would show a strong barrier towards proving $\CL \subseteq \P$. Exact matching is a special case of this problem. 
\end{enumerate}

\noindent
Additionally, it would be interesting if resources other than catalytic space, such as nondeterminism, could lead to similar sublinear free space and polynomial time algorithms for linear matroid intersection, or even bipartite matching. 
% \vspace*{\fill}

\ifblind
\section*{Acknowledgements}
All three authors thank Ian Mertz and Yuval Filmus for extensive discussions.
%%% PUT ACKNOWLEDGEMENTS HERE
\fi

% \pagebreak

\DeclareUrlCommand{\Doi}{\urlstyle{sf}}
\renewcommand{\path}[1]{\small\Doi{#1}}
\renewcommand{\url}[1]{\href{#1}{\small\Doi{#1}}}
\bibliographystyle{alphaurl}
\newpage

\bibliography{bibliography}
\newpage 
\appendix
\section{Appendix}
\begin{lemma*}[\Cref{computing-min-weight-cycle-in-cl}]

There exists a $\CL$ algorithm which, given a directed graph $G = (V, E)$ with vertex weights $l:V \rightarrow \mathbb{Z}^{\leq \poly(n)}$, such that $G$ does not have any negative weight cycles with respect to $l$, along with a special vertex $c \in V$, computes a minimum weight minimum hop-length simple cycle of $G$ containing $c$.
\end{lemma*}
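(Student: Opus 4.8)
I would reduce, by a logspace many–one reduction, the task of finding a minimum weight, minimum hop-length simple cycle through a prescribed vertex $c$ to the task of computing a minimum weight perfect matching in a bipartite graph with $\poly(n)$-bounded non-negative edge weights, and then invoke the $\CL$ algorithm of~\cite{AgarwalaMertz25} for the latter as a black box. Since a logspace-computable input transformation composed with a $\CL$ algorithm is again a $\CL$ algorithm, this suffices.

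\textbf{Construction.} Let $G=(V,E)$ with $n=|V|$, weights $l$, and special vertex $c$ be given. If $(c,c)\in E$, delete this self-loop, run the procedure on the resulting graph, and at the end compare its output with the one-edge cycle $\{(c,c)\}$ of weight $l(c)$ and hop-length $1$, keeping whichever is better (first by weight, then by hop-length); so assume $c$ has no self-loop. Form the bipartite graph $H$ on vertex classes $\{v^{+}:v\in V\}$ and $\{v^{-}:v\in V\}$ with the following edges: for every $(u,v)\in E$ with $u\neq v$, an edge $\{u^{+},v^{-}\}$ of weight $(n+1)\,l(v)+1$; and for every $v\in V$ with $v\neq c$, a diagonal edge $\{v^{+},v^{-}\}$ of weight $0$ (there is deliberately no diagonal edge at $c$). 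If $l$ takes negative values, shift every edge weight of $H$ by a common additive constant of magnitude $\poly(n)$ so that all weights become non-negative; since every perfect matching of $H$ has exactly $n$ edges, this does not change the set of minimum weight perfect matchings. All of this is computable in logspace.

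\textbf{Correctness.} Perfect matchings of $H$ correspond bijectively to permutations $\pi$ of $V$ with $\pi(c)\neq c$, a matched non-diagonal edge $\{u^{+},v^{-}\}$ encoding $\pi(u)=v$; thus $\pi$ decomposes $V$ into vertex-disjoint simple cycles with $c$ on a cycle of hop-length at least $2$, and $H$ has a perfect matching if and only if $c$ lies on some simple cycle of $G$ (vertices off $c$'s cycle are matched diagonally). If $H$ has no perfect matching, we report that no cycle through $c$ exists. Otherwise, for a perfect matching $M$ encoding $\pi$, its weight equals $(n+1)\,W+E'$, where $W$ is the total $l$-weight and $E'$ the total hop-length of the non-trivial cycles of $\pi$; since $G$ has no negative weight cycle we have $W\ge 0$, and since the non-trivial cycles are vertex disjoint we have $E'\le n$. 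Hence minimizing the weight of $M$ amounts to minimizing $W$ lexicographically first and $E'$ second, and any extra non-trivial cycle beyond $c$'s contributes a non-negative amount to $W$ and at least $2$ to $E'$; so a minimum weight perfect matching of $H$ consists of a single non-trivial cycle, necessarily through $c$, and that cycle minimizes $(n+1)\,l(C)+|C|$ over all simple cycles $C$ through $c$, i.e.\ it is of minimum weight and, among those, of minimum hop-length. Thus from any minimum weight perfect matching $M^{\star}$ returned by~\cite{AgarwalaMertz25} we obtain a valid answer, which we extract in logspace by following $\pi$ starting from $c$.

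\textbf{Main obstacle.} The reduction, the extraction of the cycle from $M^{\star}$, and the self-loop/non-existence bookkeeping are all logspace operations, and composing them with the $\CL$ black box keeps the procedure in $\CL$. The step requiring the most care is the argument that a minimum weight perfect matching of $H$ is forced to be a single cycle through $c$: this is exactly where the no-negative-cycle hypothesis is used, and where the $(n+1)$-scaling together with the unit offset on non-diagonal edges is needed so that hop-length is optimized only as a secondary objective to weight.
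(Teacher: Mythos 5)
Your proposal is correct and matches the paper's own proof essentially step for step: the same reduction to minimum weight bipartite perfect matching via vertex doubling, arc edges of weight $(n+1)l(v)+1$, zero-weight diagonal edges at every vertex except $c$, and the same single-cycle/lexicographic argument, with the Agarwala--Mertz matching algorithm used as a $\CL$ black box. Your explicit handling of self-loops and the additive shift to make weights non-negative are harmless refinements; the only item the paper adds that you omit is the remark about reserving a fixed portion of the catalytic tape so that repeated calls to the matching subroutine return a consistent answer.
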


\begin{proof}

\noindent
First, we transform the vertex weights $l$ to edge weights $W:E \rightarrow \mathbb{Z}^{\leq \poly(n)}$. Any edge $e$ entering vertex $v$ gets weight $W(e) = l(v)$. Now note that for any cycle $C$, we have $W(C) = l(C)$, because for every vertex of the cycle, there is exactly one edge in the cycle entering it. Thus the problem is equivalent with edge weights $W$ instead of vertex weights $l$. \\

\noindent
We will construct the graph $G' = (V', E')$ with edge weights $w'$. First we define \[V' = \{(v, i) \ | \ v \in V, \ i \in \{0, 1\}\}\] 

\noindent
Now, we define 
\[E_1 = \{((v_1, 0), (v_2, 1)) \ | \ (v_1, v_2) \in E\}\]
\[E_2 = \{((v, 0), (v, 1)) \ | \ v \in V \setminus \{c\}\}\]
\[E' = E_1 \cup E_2\]

\noindent
Finally, for $e = ((v_1, 0), (v_2, 1)) \in E_1$, we define
\[w'(e) = W((v_1, v_2)) \cdot (n+1) + 1\]

\noindent
For $e = ((v, 0), (v, 1)) \in E_2$, we define
\[w'(e) = 0\]

\noindent
Let $M$ be the minimum weight perfect matching on the bipartite graph $G' = (V', E')$ with weights $w'$. For an edge in $e \in E_1$, let $o(e)$ refer to the corresponding edge in $E$. We claim that $o(M \cap E_1) = \bigcup_{e \in M \cap E_1} o(e)$ is a minimum weight minimum hop-length cycle of $G$ containing $c$.\\

\noindent
First, note that for any cycle $C$ containing $s$, $\bigcup_{e \in C}o^{-1}(e) \  \cup \ \{((v, 0), (v, 1)) \ | \ v \notin C\}$ is a perfect matching of $G'$ of weight $W(C) \cdot (n+1) + |C|$. Let $C^*$ be a minimum weight minimum hop-length cycle of $G$. This implies that the minimum weight perfect matching of $G$ has weight at most $W(C^*) \cdot (n+1) + |C^*|$. \\

\noindent
Furthermore, for any perfect matching $M$ of $G$, consider the set $o(M \cap E_1)$. For any vertex $v \in V$ such that $((v, 0), (v, 1))$ is not in $M$, both $(v, 0)$ and $(v, 1)$ have distinct incident edges in $M \cap E_1$, say $e_0$ and $e_1$. $o(e_0)$ is an outgoing edge from $v$, and $o(e_1)$ is an incoming edge. No other edges in $o(M \cap E_1)$ are incident upon $v$. Thus, all vertices of $E$ that have incident edges in $o(M \cap E_1)$ have both in-degree and out-degree exactly $1$. Thus, $o(M \cap E_1)$ is the disjoint union of cycles in $G$. Let these cycles be $C_1, \dots, C_m$. Since $c \in V$ does not have the $((c, 0), (c, 1))$ edges in $G'$, $c$ is necessarily a part of one of these cycles, say it's $C_1$. Then $w'(M) = \sum_{i=1}^m |C_i| + \sum_{i=1}^m W(C_i) \cdot (n+1) \geq \sum_{i=1}^m |C_i| + W(C_1) \cdot (n+1) \geq \sum_{i=2}^m |C_i| + W(C^*) \cdot (n+1) + |C^*|$. If $m > 1$, then this is $> W(C^*) \cdot (n+1) + |C^*|$, implying that $M$ is not a minimum weight perfect matching of $G$. Thus, $m = 1$. This implies that $o(M \cap E_1)$ is necessarily just a simple cycle containing $s$, say $C$, and $M$ has weight $W(C) \cdot (n+1) + |C|$. Note that for any cycle $C$ such that $W(C) \cdot (n+1) + |C| \leq W(C^*) \cdot (n+1) + |C^*|$, since $|C^*| \leq n$  and $W(C^*) \leq W(C)$, we have $W(C^*) = W(C)$ and $|C| = |C^*|$. Thus, $o(M \cap E_1)$ is necessarily the minimum weight minimum hop-length cycle of $G$ containing $c$.\\

\noindent
The algorithm is now straightforward: We first compute $G'$ - this is easy to do in $\CL$. We then compute the minimum weight perfect matching $M$ of $G'$ using \cite{AgarwalaMertz25}. Finally, we compute $o(M \cap E_1)$ and verify that this is a simple cycle of $G$. If not, $G$ does not contain a simple cycle containing $c$. Else, this is the minimum weight minimum hop-length simple cycle of $G$ containing $s$, and we can output it. \\

\noindent
We note a small technicality: the output of the matching algorithm of \cite{AgarwalaMertz25} may differ depending on the initial state of the catalytic tape. In order to ensure stability, we can simply partition the catalytic tape before hand into a separate section used only for calls to the matching algorithm. Since the matching algorithm is never used recursively, this ensures that each call to the matching algorithm with the same graph indeed results in the same matching. 
\end{proof}

\begin{lemma*}[\Cref{size-k-common-independent-set-reduces-to-perfect}]
There exists a $\Logspace$ machine which:
\begin{enumerate}
    \item Takes as input weighted linear matroids $M_1 = (S, \mathcal{I}_1),\ M_2 = (S, \mathcal{I}_2),\ W\colon S \rightarrow Z^{\leq \poly(n)}$ with linear representations $L_1$ and $L_2$.
    \item Computes weighted linear matroids $M_1' = (S \cup S', \mathcal{I}'_1),\ M_2' = (S \cup S', \mathcal{I}'_2),\ W':S \cup S' \rightarrow Z^{\leq \poly(n)}$ such that if $W$ isolates a size $k$ common independent set $I_k$ of $M_1$ and $M_2$, then $W'$ isolates a perfect common independent set $I$ of $M_1'$ and $M_2'$ which satisfies $I \cap S = I_k$.
\end{enumerate}
\end{lemma*}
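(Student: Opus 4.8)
The plan is to reduce to the ``perfect'' case by \emph{truncating} both input matroids to rank $k$, realized by the standard device of appending generic dummy columns and then pinning them into the solution via the weights. Throughout, $k$ denotes the relevant size; although it is not listed among the formal inputs of the lemma, it is always available in our applications (cf.\ \Cref{computing-size-k-isolated-common-independent-set-is-in-cl}), so I use it freely. Let $m$ be the common number of rows of $L_1$ and $L_2$. Introduce a fresh set $S' = D = \{d_1,\dots,d_{m-k}\}$ of $m-k$ new ground-set elements, and let $M_i' := M_i^{+}$ be the linear matroid on $S\cup S'$ represented by the $m\times(n+m-k)$ matrix $[\,L_i \mid D\,]$, where the columns of $D$ are chosen ``generically'': over $\mathbb{F}$ if it is large enough, and otherwise over the rational function field $\mathbb{F}(t)$ with the $a$-th entry of $d_j$ equal to $t^{c_j(a-1)}$ for a suitable polynomially bounded increasing sequence $c_1<\dots<c_{m-k}$. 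Finally set $W'(s) = W(s) + N$ for $s\in S$ and $W'(d) = 0$ for $d\in D$, where $N := 1 + \sum_{s\in S}|W(s)| = \poly(n)$; note $W'$ is non-negative and polynomially bounded. The algorithm does not need to inspect the ranks of $M_1,M_2$ anywhere.

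For correctness I would first record the structural fact where the genericity of $D$ is used: for each $i$, contracting the columns of $D$ in $M_i^{+}$ yields exactly the $k$-truncation of $M_i$, i.e.\ the independent sets of $M_i^{+}/D$ are precisely the independent sets of $M_i$ of size at most $k$; in particular, when $k\le\rank M_i$ we get $\rank M_i^{+}=m$, so ``perfect'' common independent sets of $(M_1^{+},M_2^{+})$ are exactly common bases. From this, $J$ is independent of size $k$ in $M_i$ iff $J\cup D$ is independent of size $m$ in $M_i^{+}$, hence a basis; so $J\mapsto J\cup D$ is a bijection between size-$k$ common independent sets of $(M_1,M_2)$ and common bases of $(M_1^{+},M_2^{+})$ containing $D$, under which $W'(J\cup D)=W(J)+kN$. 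Next I would check that the weights force $D$ into every optimal solution: a common basis $B$ with $D\not\subseteq B$ has $|B\cap S|\ge k+1$, hence $W'(B)\ge(k+1)N$, whereas a common basis containing $D$ has $W'$-weight at most $(N-1)+kN<(k+1)N$; thus every minimum-$W'$ common basis contains $D$. Combining: if $W$ isolates the size-$k$ common independent set $I_k$, then $I_k\cup D$ is the unique minimum-$W'$ common basis, i.e.\ the unique minimum-weight perfect common independent set of $(M_1^{+},M_2^{+})$, and $(I_k\cup D)\cap S=I_k$, as required. (If $(M_1,M_2)$ has no size-$k$ common independent set the hypothesis is vacuous; the algorithm still outputs a well-formed instance.)

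The resource bound is easy: appending the explicitly described columns of $D$, computing $N$ by iterated addition of the polynomially many, polynomially bounded values $|W(s)|$, and writing out $W'$ are all in $\Logspace$, and the construction never needs a rank or any other superlogarithmic quantity. If the extension $\mathbb{F}(t)$ is used, the degrees in $t$ stay $\poly(n)$, so the downstream algorithm of \Cref{computing-perfect-common-independent-set-is-in-cl} still applies over $\mathbb{F}(t)$ (its subroutines are field-agnostic and already manipulate polynomial entries).

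The main obstacle is precisely the structural fact above: exhibiting a \emph{single}, explicit, logspace-computable, polynomially bounded set $D$ that is generic for $M_1$ and $M_2$ \emph{simultaneously}, so that $M_1^{+}/D=M_1^{(k)}$ and $M_2^{+}/D=M_2^{(k)}$ both hold. This is standard but fiddly: one expands the $m\times m$ determinants $\det[\,L_i|_J\mid D\,]$ by the Cauchy--Binet formula and argues, via a lowest-$t$-degree (leading-term) computation that relies on the $c_j$ being sufficiently spread out --- a Sidon-type condition achievable with $c_j=\poly(m)$ --- that no cancellation occurs, so the determinant is a nonzero polynomial in $t$ whenever $L_i|_J$ has full column rank (and, dually, that no $(k{+}1)$-subset of $S$ survives the contraction, which is forced by a dimension count). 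Alternatively one may invoke the known fact that the $k$-truncation of a matrix-representable matroid is again matrix-representable over a small extension of the ground field, and verify that one common representation works for both inputs. I expect this determinant bookkeeping to be the only non-routine ingredient, which is what makes it natural to defer the full proof to the appendix.
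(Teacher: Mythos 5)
Your route (pad each $L_i$ with $m-k$ ``generic'' columns $D$ so that contraction of $D$ is the $k$-truncation, then force $D$ into every optimum by weights) is genuinely different from the paper's, but the ingredient you defer as ``standard but fiddly'' is precisely where it breaks. You need a \emph{deterministic, logspace-computable} set of columns, with all quantities polynomially bounded, that is simultaneously generic for the two adversarial matrices $L_1,L_2$. Your concrete suggestion --- entries $t^{c_j(a-1)}$ with a ``Sidon-type'' sequence $c_j=\poly(m)$ and a leading-term/no-cancellation argument --- cannot work: the argument needs the map sending an $(m-k)$-subset $\bar R$ of rows to its sorted weighted sum $\sum_j c_j(a_{(j)}-1)$ to be injective (otherwise two row sets with nonvanishing complementary minors of $L_i|_J$ can collide at top degree, and you have no handle on which minor patterns the adversarial $L_i$ realizes), and by pigeonhole injectivity over the $\binom{m}{m-k}$ subsets forces $\max_j c_j\gtrsim\binom{m}{m-k}/\bigl((m-1)(m-k)\bigr)$, i.e.\ exponentially large exponents when $m-k=\Theta(m)$. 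Exponential $t$-degrees are fatal downstream, since \Cref{computing-perfect-common-independent-set-is-in-cl} must evaluate determinants and interpolate polynomials of the resulting size inside $\CL$; and the known deterministic polynomial-degree truncation constructions (folded-Wronskian style) are substantially heavier machinery whose logspace computability and compatibility with the $\CL$ determinant subroutines over $\mathbb{F}(t)$ you would still have to verify, rather than dismiss as ``field-agnostic.'' The paper avoids genericity altogether: it pads with an explicit $0/1$ gadget of $m(m-k)$ columns per side (unit vectors $e^{(i)}$ stacked crosswise with unit vectors $f^{(j)}$ on $m-k$ extra rows, used in opposite blocks of $L_1'$ and $L_2'$), stays over the original field, and obtains uniqueness of the padding part of the optimum by a weight design ($10m^4\cdot W$ on $S$, weight $ij$ on the gadget element $s_{ij}$) together with a lexicographically-least-extension argument and a claim from \cite{AgarwalaMertz25} about unique minimum-weight permutations.

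Separately, your weight-separation inequality is stated incorrectly for the paper's setting, where $W$ may take negative values: with $W'(s)=W(s)+N$ you only have $W'(s)\ge 1$, not $W'(s)\ge N$, so a common basis $B$ with $|B\cap S|\ge k+1$ satisfies only $W'(B)\ge (k+1)N-(N-1)=kN+1$, which does not beat your upper bound $kN+(N-1)$ for $D$-containing bases. The conclusion can be rescued --- e.g.\ take $N$ larger, or use that any subset of a common independent set is common independent, so $\min_{k+j}(M_1,M_2)\ge \min_k(M_1,M_2)-j\max_s|W(s)|$, which restores strict separation --- but as written the step is wrong; this is a local fix, unlike the genericity issue above.
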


\begin{proof}
We will assume without loss of generality that $L_1$ and $L_2$ are both $m\times n$ matrices over $\mathbb{F}$. If $L_1$ and $L_2$ have a different number of rows, note that we can pad the shorter matrix with all $0$s rows in order to achieve equality.\\

\noindent
This proof is technical, but largely follows from very basic properties of matrices. Thus, we will opt for succinctness at various points of this proof.\\

\noindent
For $1\leq i\leq m$ and $1 \leq j \leq m-k$, let $e^{(i)} \in \mathbb{F}^{m}$ be the vector such that $e^{(i)}_\ell=1$, if $\ell=i$, and $0$ otherwise, and let $f^{(j)} \in \mathbb{F}^{m-k}$ be the vector such that $f^{(j)}_{\ell} = 1$, if $\ell = j$, and $0$ otherwise. We define $C$ and $D$ to be $m\times m(m-k)$ and $(m-k)\times m(m-k)$ dimension matrices respectively as follows: for $1\leq i\leq m$ and $1 \leq j\leq m-k$, let $s_{ij}=(i-1)(m-k)+j$. The $s_{ij}^{th}$ columns $C_{s_{ij}}$ and $D_{s_{ij}}$ are equal to $e^{(i)}$ and $f^{(j)}$ respectively.\\

\noindent
Now, consider the following linear representations $L_1'$ and $L_2'$ of matroids $M_1'$ and $M_2'$:\\

    \begin{align*}
        L'_1&=\begin{pmatrix}
        L_1 & C & 0_{m \times m(m-k)}\\
        0_{(m-k) \times n}
    & 0_{(m-k) \times  m(m-k)} & D
    \end{pmatrix}
    \\
    L'_2&=\begin{pmatrix}
        L_2 & 0_{m \times m(m-k)} & C\\
        0_{(m-k) \times n}
    & D & 0_{(m-k) \times  m(m-k)},
\end{pmatrix}
    \end{align*}

\ \\

\noindent
Let $M_1' = (S \cup A \cup B, \mathcal{I}_1')$ and $M_2' = (S \cup A \cup B, \mathcal{I}_2')$, where $A$ corresponds to the set of columns which are $C$ in $L'_1$, and $B$ to those which are $C$ in $L'_2$.\\

\noindent
The key insight is that any common independent set $I \in \mathcal{I}_1 \cap \mathcal{I}_2$ of size $\ell$, can be extended to a common independent set $I'$ of size $\min\{ 2m- \ell,  \ell +2m-2k\}$ such that $I' \in \mathcal{I}'_1 \cap \mathcal{I}'_2$. Indeed, observe that the unit vectors $\{e^{(i)} \ | \ i \in [m]\}$ in $C$ can extend any linearly independent set of columns $I \subseteq L_1$ to a size $m$ linearly independent set, and the unit vectors $\{f^{(i)} \ | \ i \in [m-k]\}$ in $D$ are linearly independent and supported on a different set of rows. Moreover, $C$ and $D$ consists of all the different combinations of unit vectors $(e^{(i)}, f^{(j)})$. Thus, we can augment the set $I$ with a set of at least $\min\{m-k, m-l\}$ elements from $A$, and the same from $B$.\\

\noindent
Also, note that since $C$ is supported on the same set of $m$ rows as $L_1$ and $L_2$, we can take at most $m-l$ elements each from $A$ and $B$, and since $D$ is supported on $m-k$ rows, we can take at most $m-k$ elements each from $A$ and $B$. This together shows that $\max_{I' \in \mathcal{I}'_1 \cap \mathcal{I}'_2 : I \subseteq I'} |I'| = \min\{2m - 2k + l, 2m - l\}$. This is maximised only when $|I| = k$, in which case the independent set $I'$ has size $2m - k$, and is thus perfect. \\

\noindent
This shows that any maximum sized common independent set $I$ of $M_1'$ and $M_2'$ is perfect in $L_1'$ and $L_2'$, and satisfies $|I \cap S| = k$. Moreover, for any size $k$ common independent set $I$ of $M_1$ and $M_2$, there exists a maximum sized common independent set of $M_1'$ and $M_2'$ containing $I$. The only remaining task is to assign weights to the elements of $S \cup A \cup B$ such that the uniqueness of a size $k$ common independent set of $M_1$ and $M_2$ implies the uniqueness of a minimum weight perfect independent set of $M_1'$ and $M_2'$. \\

\noindent
We define the weight $W' \colon S \cup A \cup B \rightarrow \mathbb{Z}^{\leq \poly(n)}$ as follows:
\begin{enumerate}
    \item 
    If $e\in S$, we have $W'(e)=10 m^4 \cdot W(e)$.
    \item 
    Else, if $e \in A \cup B$, let $s_{ij}$ be the column of $C$ associated to $e$. We define $W'(e)=ij$.
\end{enumerate}

\noindent
Our goal is to prove that $W'$ has the following two properties:
\begin{enumerate}
    \item 
    For any minimum weight perfect common independent set $I \in \mathcal{I}'_1 \cap \mathcal{I}'_2$, $I\cap S = I_k$, and
    \item 
    This minimum weight perfect common independent set $I$ is unique.
\end{enumerate}

\noindent
For the first property, let $T_1,T_2 \in \mathcal{I}'_1\cap \mathcal{I}'_2$ be common independent sets such that $W(T_1\cap S) < W(T_2\cap S)$. We have:
\begin{align*}
    W'(T_2) - W'(T_1)=& W'(T_2\cap S)-W'(T_1\cap S)\\
    &+W'(T_2 \setminus S)-W'(T_1 \setminus  S)
    \\
    \geq& 10 m^4\left(W(T_2\cap S)-W(T_1\cap S)\right) - W'(T_1 \setminus  S)
    \\
    \geq& 10 m^4 -2m^2(m-k)^2 > 0.
    \\
    \implies W'(T_2) > W'(T_1)
\end{align*}

\noindent
Thus, any minimum weight perfect common independent set $I \in \mathcal{I}'_1 \cap \mathcal{I}'_2$ necessarily satisfies $I \cap S = I_k$.\\

\noindent
Now, we will prove that the choice of $S_1^* \subseteq A$ and $S_2^* \subseteq B$ such that $I_k \cup S_1^* \cup S_2^*$ is a minimum weight maximum common independent set of $M_1^*$ and $M_2^*$ is unique. In particular, we will prove this for $S_1^* \subseteq A$, and the same proof will work for $S_2^* \subseteq B$. \\

\noindent
Let $L_1(I_k)$ refer to the set of columns corresponding to the elements of $I_k$ in the matrix $L_1$. Let any $E \subseteq \{e^{(\ell)} \ | \ \ell \in [m]\}$ such that $|E| = m-k$ and $L_1(I_k) \cup E$ is linearly independent over $\mathbb{F}^m$ be referred to as an "extension" of $I_k$. There exists a unique extension $E^* = \{e^{(i_1)}, \dots, e^{(i_{m-k})}\}$ where $i_1 < i_2 < \dots < i_{m-k}$ such that for any other extension $E' = \{e^{(j_1)}, \dots, e^{(j_{m-k})}\}$ we have $\forall l \in [m-k], \ i_l \leq j_l$ and $\exists l$ such that $i_l < j_l$. \footnote{This is a well known fact about matroids and is central to their study in greedy algorithms. More information about it can be found in chapter 40 of \cite{Schrijver03Textbook}.}.\\

\noindent
Now, consider any minimum weight maximum common independent set $I$ of $M_1^*$ and $M_2^*$. $I \cap A$ in the first $m$ rows of the matrix $L_1'$ consists of columns $\{s_{i_1j_1}, \dots, s_{i_{m-k} j_{m-k}}\}$ of $C$. Here, the set $E = \{e^{(i_1)}, \dots, e^{(i_{m-k})}\}$ is an extension of $I_k$, $J = \{j_1, \dots, j_{m-k}\}$ is just a permutation of elements of $[m-k]$, and $W'(I \cap A) = \sum_{l=1}^{m-k} i_l \cdot j_l$. Additionally, for any extension $E'$, and any permutation $J'$ of $[m-k]$, the set $S_1 \subseteq A$ corresponding to $E'$ and $J'$ satisfies $S_1 \cup I_k \in \mathcal{I}_1' \cap \mathcal{I}'_2$.  Now, note that for any set $S_1$ built from an extension $E$ and permutation $J$ of $[m-k]$, the set $S_1'$ built from the extension $E'$ and permutation $J$ has strictly smaller weight. This implies that $E$ is necessarily $E^*$ when $I$ is a minimum weight maximum common independent set. All we have to show now is that the permutation $J$ is also unique. This follows directly from Claim 3.4 in \cite{AgarwalaMertz25}. The claim in \cite{AgarwalaMertz25} is stated in terms of perfect matchings of a complete bipartite graph, but it is easy to observe that this is exactly the same as our permutations when the graph is built on $E^* \times [m-k]$ with the weight of edge $(i_k, j) = i_k \cdot j$.\\

\noindent
This completes the proof. We can simply output $L_1', L_2'$, and $W'$.
\end{proof}

% \begin{refcontext}[sorting=nyt]
% \printbibliography[heading=bibintoc]
% \end{refcontext}

% \pagebreak

% \appendix

% \input{appendix}

% \[\Logspace \ \ \NL \ \ \TCo \ \ \NCt \ \ \NC \ \ \P \ \ \textsf{LOSSY[P]} \ \ \ZPP \ \ \CLP \ \ \CL  \ \ \subseteq \ \ \textcolor{red}{\subseteq} \ \ \SC \ \ \LOSSYNC \ \ \ZPNC \ \ \QNC \ \ \RNC \]

\end{document}